  \providecommand\BibTeX{{%
    \normalfont B\kern-0.5em{\scshape i\kern-0.25em b}\kern-0.8em\TeX}}}
\begin{document}
\definecolor{olivegreen}{RGB}{0,153,0}
\definecolor{awesome}{rgb}{1.0, 0.13, 0.32}
\definecolor{light-gray}{gray}{0.80}
\definecolor{light-blue}{RGB}{207, 250, 241}
\definecolor{light-green}{RGB}{227, 247, 183}%
\definecolor{light-yellow}{RGB}{250, 247, 183}
\definecolor{light-orange}{RGB}{250, 235, 207}

\mdfdefinestyle{mystyle}{
    backgroundcolor=white!20
}

\newcommand{\urk}[1]{\textcolor{red}{\textbf{URK:$\bigstar$}#1}}
\newcommand{\urkfixme}[1]{\textcolor{red}{\textbf{URK FIXME: $\bigstar$}#1}}
\newcommand{\vk}[1]{\textcolor{blue}{\textbf{VK:$\bigstar$}#1}}
\newcommand{\vksuggest}[2]{%
    \textcolor{blue}{\sout{#1}}
    \textcolor{blue}{\ #2}
}

\newcommand{\vexirtovec}{{\sc VexIR2Vec}\xspace}

\definecolor{beaublue}{rgb}{0.74, 0.83, 0.9}
\definecolor{lightgray}{RGB}{245, 243, 242}
\definecolor{lightblue}{RGB}{237, 246, 250}
\definecolor{lightyellow}{RGB}{243, 247, 225}
\definecolor{lightred}{RGB}{250, 237, 237}
\definecolor{lightorange}{RGB}{252, 245, 235}
\definecolor{auburn}{rgb}{0.43, 0.21, 0.1}

\newtheorem{observation}[theorem]{Observation}

\newcommand{\binutils}{{\sc binutils}}

\newcommand{\embed}[1]{$\llbracket \mathbf{#1} \rrbracket$}

\NewDocumentCommand{\embedEqn}{m o o}{%
  \llbracket \mathbf{#1}%
  \IfValueTF{#2}{%
    \StrLeft{#2}{1}[\firstchar]%
    \IfStrEq{\firstchar}{^}{%
      {#2}%
    }{%
      \IfStrEq{\firstchar}{_}{%
        {#2}%
      }{%
        #2%
      }%
    }%
  }{}%
  \IfValueT{#3}{_{#3}}%
  \rrbracket
}

\newcommand{\embedFInit}{\embedEqn{F^{\scaleto{\mathcal{V}}{3.5pt}}}}
\newcommand{\embedFFinal}{\embedEqn{F}}

\newcommand{\llangle}{\left\langle}
\newcommand{\rrangle}{\right\rangle}

\newcommand{\triplet}[3]{\langle #1, #2, #3 \rangle}
\newcommand{\vocab}[1]{\mathcal{V}_{lookup}(#1)}
\newcommand{\angr}{\texttt{angr}\xspace}
\newcommand{\vexir}{VEX-IR\xspace}

\newcommand{\optengine}{{\sc VexINE}\xspace}
\newcommand{\optenginelong}{\vexir Normalization Engine\xspace}

\newcommand{\vexnet}{{\sc VexNet}\xspace}

\definecolor{ygreen}{HTML}{d6e39d}
\definecolor{sgreen}{HTML}{bee69c}

\newcommand{\first}[1][]{\cellcolor{sgreen}}
\newcommand{\second}[1][]{\cellcolor{ygreen}}

\lstdefinelanguage{VEX}
{
  morekeywords={
    opcd, mov, add, mul, sub, set, load, store, eq, neq, lt, gt, le, ge,
    and, or, xor, not, shl, shr, put, puti, get,  geti, ext, trunc, ldle, 32uto64, 64to32, add32, stle 
  },
  sensitive=false,
  morecomment=[l]{//},
}

\lstset{
  language=VEX,
  commentstyle=\itshape \bfseries \color{auburn},
  xleftmargin=11.5pt, 
  backgroundcolor=\color{white},   
  basicstyle=\ttfamily\footnotesize,        
  breakatwhitespace=false,         
  breaklines=true,                 
  captionpos=b,                    
  escapechar=|,
  escapeinside=||,
  frame=none,	                   
  keywordstyle=\color{blue},       
  numbers=left,                    
  numbersep=10pt,                   
  numberstyle=\tiny, 
  rulecolor=\color{black},         
  numberblanklines=false,
  mathescape,
}

\lstset{
  language=[x86masm]Assembler,
  commentstyle=\itshape \bfseries \color{auburn},
  xleftmargin=11.5pt, 
  backgroundcolor=\color{white},   
  basicstyle=\ttfamily\scriptsize,        
  breakatwhitespace=false,         
  captionpos=b,                    
  escapechar=|,
  escapeinside=||,
  frame=none,	                   
  keywordstyle=\color{blue},       
  numbers=left,                    
  numbersep=10pt,                   
  numberstyle=\tiny, 
  rulecolor=\color{black},         
  numberblanklines=false,
  mathescape,
}

\newcommand{\highlightLines}[2]{%
  \colorbox{#1}{%
    \begin{minipage}{\dimexpr\linewidth-2\fboxsep}%
      #2%
    \end{minipage}%
  }%
}

\let\origthelstnumber\thelstnumber
\makeatletter

\lst@Key{countblanklines}{true}[t]%
    {\lstKV@SetIf{#1}\lst@ifcountblanklines}

\lst@AddToHook{OnEmptyLine}{%
    \lst@ifnumberblanklines\else%
       \lst@ifcountblanklines\else%
         \advance\c@lstnumber-\@ne\relax%
       \fi%
    \fi}

\newcommand*\Suppressnumber{%
  \lst@AddToHook{OnNewLine}{%
    \let\thelstnumber\relax%
     \advance\c@lstnumber-\@ne\relax%
    }%
}

\newcommand*\Reactivatenumber[1]{%
  \setcounter{lstnumber}{\numexpr#1-1\relax}
  \lst@AddToHook{OnNewLine}{%
   \let\thelstnumber\origthelstnumber%
   \refstepcounter{lstnumber}%
  }%
}


\newcommand{\diagonalstripes}{%
  \fill[white, pattern=north east lines, pattern color=white] (0,0) rectangle (4,0.5);
}

\newenvironment{rationale}[1][Rationale]{\proof[\sc #1]}{\endproof}
\renewcommand{\qedsymbol}{$\blacksquare$}

\title{\vexirtovec{}: An Architecture-Neutral Embedding Framework for Binary Similarity}

\author{S. VenkataKeerthy}
\email{cs17m20p100001@iith.ac.in}
\orcid{0000-0003-1393-7321}
\author{Soumya Banerjee}
\email{cs22mtech12011@iith.ac.in}
\orcid{0009-0003-5772-2469}
\author{Sayan Dey}
\email{cs22mtech02005@iith.ac.in}
\orcid{0000-0002-4115-0588}
\author{Yashas Andaluri}
\email{cs17b21m000001@iith.ac.in}
\orcid{0000-0003-1180-4197}
\author{Raghul PS}
\email{compiler.intern.24003@cse.iith.ac.in}
\orcid{0009-0003-3059-5025}
\author{Subrahmanyam Kalyanasundaram}
\email{subruk@cse.iith.ac.in}
\orcid{0000-0001-9094-3368}
\affiliation{%
  \institution{IIT Hyderabad}
  \city{Hyderabad}
  \country{India}
}

\author{Fernando Magno Quint\~ao Pereira}
\email{fernando@dcc.ufmg.br}
\orcid{0000-0002-0375-1657}
\affiliation{%
  \institution{UFMG}
  \city{Belo Horizonte}
  \country{Brazil}
}
\author{Ramakrishna Upadrasta}
\email{ramakrishna@cse.iith.ac.in}
\orcid{0000-0002-5290-3266}
\affiliation{%
  \institution{IIT Hyderabad}
  \city{Hyderabad}
  \country{India}
}

\renewcommand{\shortauthors}{S. VenkataKeerthy, et al.}

\begin{abstract}
    Binary similarity involves determining whether two binary programs exhibit similar functionality, often originating from the same source code.
In this work, we propose \vexirtovec, an approach for binary similarity using \vexir, an architecture-neutral Intermediate Representation (IR). 
We extract the embeddings from sequences of basic blocks, termed \textit{peepholes}, derived by random walks on the control-flow graph. 
The peepholes are normalized using transformations inspired by compiler optimizations. 
The \optenginelong (\optengine) mitigates, with these transformations, the architectural and compiler-induced variations in binaries while exposing semantic similarities.
We then learn the vocabulary of representations at the entity level of the IR using the knowledge graph embedding techniques in an unsupervised manner. 
This vocabulary is used to derive function embeddings for similarity assessment using \vexnet, a feed-forward Siamese network designed to position similar functions closely and separate dissimilar ones in an $n$-dimensional space.
This approach is amenable for both diffing and searching tasks, ensuring robustness against Out-Of-Vocabulary (OOV) issues.

We evaluate \vexirtovec on a dataset comprising $2.7M$ functions and $15.5K$ binaries from $7$ projects compiled across $12$ compilers targeting x86 and ARM architectures.
In diffing experiments, \vexirtovec outperforms the nearest baselines by $40\%$, $18\%$, $21\%$, and $60\%$  in cross-optimization, cross-compilation, cross-architecture, and obfuscation settings, respectively. In the searching experiment, \vexirtovec achieves a mean average precision of $0.76$, outperforming the nearest baseline by $46\%$.
Our framework is highly scalable and is built as a lightweight, multi-threaded, parallel library using only open-source tools. \vexirtovec is $\approx 3.1$--$3.5 \times$ faster than the closest baselines and orders-of-magnitude faster than other tools.

\end{abstract}

\begin{CCSXML}
<ccs2012>
   <concept>
       <concept_id>10002978.10003022.10003023</concept_id>
       <concept_desc>Security and privacy~Software security engineering</concept_desc>
       <concept_significance>500</concept_significance>
       </concept>
   <concept>
       <concept_id>10010147.10010178.10010187</concept_id>
       <concept_desc>Computing methodologies~Knowledge representation and reasoning</concept_desc>
       <concept_significance>500</concept_significance>
       </concept>
   <concept>
       <concept_id>10003752.10010124.10010138.10010143</concept_id>
       <concept_desc>Theory of computation~Program analysis</concept_desc>
       <concept_significance>300</concept_significance>
       </concept>
   <concept>
       <concept_id>10010147.10010257</concept_id>
       <concept_desc>Computing methodologies~Machine learning</concept_desc>
       <concept_significance>500</concept_significance>
       </concept>
 </ccs2012>
\end{CCSXML}

\ccsdesc[500]{Security and privacy~Software security engineering}
\ccsdesc[500]{Computing methodologies~Knowledge representation and reasoning}
\ccsdesc[300]{Theory of computation~Program analysis}
\ccsdesc[500]{Computing methodologies~Machine learning}

\keywords{Binary Similarity, Program Embedding, Representation Learning}

\maketitle

\section{Introduction}
\label{sec:introduction}

Binary similarity is the task of determining whether two binary programs exhibit similar functionality,
 often originating from the same source code.
 Solutions to this problem enable applications in
 vulnerability analysis~\cite{gao2018vulseeker, shigang2020cyberVulnerability},
 malware detection~\cite{farhadi2014binclone},
 plagiarism identification~\cite{Ming2016PlagDetection},
 copyright authentication~\cite{wei2018bcfinder},
 profile matching~\cite{Panchenko19,Wang00},
 code lifting~\cite{Martinez23,VenkataKeerthy-2023-packet_processing},
 and redundancy elimination~\cite{Xue18}.

This paper focuses on two flavors of the binary similarity problem: diffing and searching.
Diffing aims to identify similar or dissimilar regions (e.g., basic blocks, functions) between two binaries. 
Searching involves retrieving a binary function from a large pool of binary functions that is similar to the query code. 
These tasks become challenging in an  \emph{adversarial setting}, where binaries compiled from the same source code vary due to several factors:
(i) Compiler choice (e.g., Clang, GCC, ICC);
(ii) Compiler version (e.g., Clang 6.0.0 vs. Clang 17.0.0);
(iii) Compiler optimizations (e.g., GCC \texttt{-O0} vs. GCC \texttt{-O3});
(iv) Target architecture (e.g., x86, ARM); and
(v) Use of obfuscation (e.g., control-flow flattening or dead control flow).
These variations in compilation configurations can significantly alter the binaries in terms of syntax, semantics, and structure, as we discuss in detail in Section~\ref{sub:motivation}.

\paragraph{State-of-the-art Solutions to Binary Similarity in Adversarial Settings}
Binary similarity tools typically employ various representations to bridge the gap between raw binary and a more analyzable format.
These representations include
assembly languages~\cite{ding2019asm2vec, massarelli2019safe, pei2020trex},
virtualized bytecodes~\cite{chandramohan2016bingo, yaniv2017pldi},
or abstract syntax trees~\cite{yang2021asteria}.
Once converted, several techniques are used to identify code similarities.
These approaches are inherently heuristic because the general problem of determining program equivalence is undecidable~\cite{Rice53}.
Common heuristics leverage
graph matching techniques~\cite{bindiff},
hashing of code sequences~\cite{pewny2015sp, yaniv2017pldi, Wang00, Ayupov24}, or Machine Learning (ML)---an approach that presently enjoy great popularity.
ML models have emerged as the dominant approach due to their ability to learn complex relationships within code~\cite{xu2017gemini, zuo2018innereye, ding2019asm2vec, duan2020deepbindiff, massarelli2019safe, pei2020trex, Yu2020OrderMatters}.

Machine learning approaches for binary similarity require converting code into a numerical vector suitable for use as the model's input.
These vectors can encode either handcrafted features or learned representations. Feature-based approaches~\cite{Feng2016Genius, eschweiler2016discovre, Qasem2023Binfinder-AsiaCCS, Damasio23} use manually defined program characteristics, such as the number of opcodes, loops, and function calls, to represent the binary. In contrast, distributed representations are learned through representation learning techniques~\cite{replearning-review}. This \textit{learned representation} is a real-valued vector of a chosen dimensionality, conventionally referred to as an \textit{embedding}~\cite{ding2019asm2vec, duan2020deepbindiff, massarelli2019safe, wang2023sem2vec}. Embeddings capture complex relationships within the code that may not be easily captured by hand-crafted features.

\paragraph{Limitations of previous work}

Although much progress has been achieved in the domain of binary similarity, we perceive a few limitations in the current state-of-the-art solutions:

\begin{enumerate}
\item 
The existing approaches that use assembly code for modeling binary similarity achieve good results~\cite{duan2020deepbindiff, wang2023sem2vec, li2021palmtree, ahn2022BinShot}, but are trained for specific architectures. Hence, they cannot be used to compare binaries targeting new architectures that were not included in the training.

\item 
Binary comparison tools that determine a similarity score between pairs of binaries~\cite{duan2020deepbindiff,
 zuo2018innereye, zeek2018plas} face challenges in scalability, particularly for searching tasks. While effective for diffing, their reliance on pairwise comparisons makes them impractical for searching large datasets due to the resulting quadratic worst-case time complexity.

\item 
Scalability is also an issue in binary similarity tools that are based on the modern language models~\cite{devlin2019bert, Liu2019Roberta, Vaswani2017attentionTransformers}.
Techniques such as Oscar~\cite{peng2021oscar}, PalmTree~\cite{li2021palmtree}, jTrans~\cite{Wang22jTrans}, Sem2Vec~\cite{wang2023sem2vec} and kTrans~\cite{zhu2023ktrans} require very high training time, even when using clusters of GPUs~\cite{marcelli2022usenix}.
For instance, Oscar and jTrans were trained using $8$ V100 GPUs each, while kTrans was trained using $4$ V100 GPUs.
Nevertheless, training times often span weeks.

\item 
Encoding program Control-Flow Graphs (CFGs) using Graph Neural Networks (GNNs) presents scalability challenges due to the high computational cost inherent to GNNs~\cite{GNNBook2022, marcelli2022usenix}. This limitation also affects tools like DeepBinDiff~\cite{duan2020deepbindiff}, which rely on DeepWalk and matrix factorization~\cite{yang2015DeepWalk} for CFG modeling. In our experiments with DeepBinDiff (Sec.~\ref{sub:scalability}), training times reached approximately 7.5 hours per epoch, hindering its application to binaries exceeding 300KB within a two-hour analysis (``timeout'') window.

\item 
A final shortcoming that we perceive in previous work regards availability. As shown by \citet{haq2021binsurvey}, several approaches do not release their software, limiting the reproducibility of scientific results~\cite{marcelli2022usenix}.
Moreover, some of these tools~\cite{yang2021asteria, ding2019asm2vec, xu2017gemini, Luo23}   are limited by the use of licensed/proprietary disassemblers such as IDA-Pro~\cite{IDAPro}.
\end{enumerate}

\paragraph{\vexirtovec{}: The Contribution of this Work}
To address these limitations, we propose \vexirtovec{}, an embedding approach that represents binary functions as continuous, $n$-dimensional distributed vectors.
The design of \vexirtovec{} embodies five characteristics, which we describe below.

\begin{description}
\item [Architecture-Neutral:] 
The \vexirtovec{} embedding of a binary is extracted from its \vexir{} intermediate representation, which is architecture neutral~\cite{wang2017angr}.
Thus, as Section~\ref{sub:vexir} explains, \vexirtovec{} can be used to compare binaries compiled to targets
such as x86 and ARM.

\item [Structural Encodings:] The \vexirtovec{} embedding is extracted from sequences of basic blocks taken from the function's control-flow graph.
These straight-line sequences---henceforth called \textit{peepholes}---are produced via random walks (Section~\ref{sub:generating-peepholes}).
Random walks reveal structural properties,
emphasizing
frequently connected blocks and blocks nested within loops.

\item [Normalizing Transformations:] 
The peepholes are amenable to {\it normalizing transformations} (Section~\ref{sub:normalization}).
Normalizations are rewriting rules inspired by compiler optimizations\footnote{
The normalizations described in Section~\ref{sub:normalization} include register optimizations, copy propagation, constant propagation, constant folding, redundancy elimination, and load-store elimination.
These rules are applied to local sequences of instructions (the peepholes) without global code knowledge; hence, they are unsound.
However, soundness is not important in this context: the normalizations are designed to reduce the differences in IR generated from different architectures and compilers.
}.
They remove \textit{uninteresting} syntactic details from the peepholes that do not contribute to revealing their essential semantics.
We implement these normalizations on \vexir as a library called \optengine.

\item [Learned Embeddings:] Vocabulary of \vexirtovec{} is learned from the IR entities---opcodes, types, arguments---using representation learning techniques with simple feed-forward networks~\cite{replearning-review,knowledge-graph-embedding-survey}.
Thus, unlike approaches that learn the representations of instructions~\cite{marcelli2022usenix,zuo2018innereye}, \vexirtovec avoids Out-Of-Vocabulary (OOV) issues.
Learning this vocabulary is a \textit{one-time pre-training} step; once learned, the vocabulary is independent of the binary similarity task (Section~\ref{sec:phaseII-embeddings}).

\item [Application-Independent:] \vexirtovec{} can be adapted to different tasks, such as diffing or searching.
To this end, we designed \vexnet, a Siamese network~\cite{koch2015siamese} which \textit{fine-tunes} the vocabulary to represent functions as points in an embedding space, where similar functions are closer to each other while dissimilar functions are far apart (Section~\ref{sec:phaseIII-model}).

\end{description}

The intuition behind the design of \vexirtovec{} is that decomposing two similar functions into a sufficiently large number of peepholes is likely to result in many of these peepholes having similar semantics.
This semantics can be inferred as the composition of the semantics of the individual entities that make up the peephole.
The implementation of this intuition into an actual tool is able to address the limitations of previous works in terms of scalability, precision, and availability.

\paragraph{Scalability}
The experiments in Section~\ref{sec:performance-evaluation} demonstrate the practicality of \vexirtovec.
The peephole extraction algorithm is linear on the number of basic blocks that constitute the CFG (Section~\ref{sub:generating-peepholes}) as opposed to the other approaches that use GNNs and matrix factorizations.
As the normalizations are applied to the straight-line peepholes, the time taken is linear in the length of the peephole.
As we use simple models, the training time of \vexnet{} is about $5$--$8$ seconds per epoch, resulting in an improvement of about $1080 \times$--$5000 \times$ in comparison to systems like SAFE~\cite{massarelli2019safe} and BinFinder~\cite{Qasem2023Binfinder-AsiaCCS}.
Inference, i.e., the usage of the trained system, is equally fast.
\vexirtovec is $\approx 3.1\times$ faster than SAFE~\cite{massarelli2019safe},
$\approx 3.5 \times$ faster than BinFinder~\cite{Qasem2023Binfinder-AsiaCCS}, and orders-of-magnitude faster than DeepBinDiff~\cite{duan2020deepbindiff}.

\paragraph{Precision}
The evaluation in Section~\ref{sec:performance-evaluation} shows that \vexirtovec{} is more precise than BinDiff~\cite{bindiff}, DeepBinDiff~\cite{duan2020deepbindiff}, SAFE~\cite{massarelli2019safe}, BinFinder~\cite{Qasem2023Binfinder-AsiaCCS}.
We also compare our approach with the representations created by using the histograms of opcodes, originally 
proposed by \citet{Damasio23} for source-codes. 
Our evaluation uses a dataset made of $2.7M$ functions and $15.5K$ binaries built from $7$ projects (Findutils~\cite{findutils}, Diffutils~\cite{diffutils}, Coreutils~\cite{coreutils}, cURL~\cite{curl}, Lua~\cite{lua}, PuTTY~\cite{putty}, and Gzip~\cite{gzip}) compiled with $12$ different compilers targeting x86 and ARM.
In the diffing scenario, \vexirtovec's F1 Score outperforms the nearest baselines in cross-optimization, cross-compiler, cross-architecture, and obfuscation settings by $40\%$, $18\%$, $21\%$, and $60\%$ respectively. 
In the searching scenario, \vexirtovec outperforms the nearest baseline by $46\%$, obtaining a mean average precision of $0.76$ across different configurations.


\paragraph{Summary of Contributions}
The high precision and scalability metrics reported in Sections~\ref{sec:performance-evaluation} and ~\ref{sec:ablation} result from a number of contributions, which we summarize as follows:
\begin{description}
\item[Insights:] We show that the combination of representation learning, extraction of program structure via random walks, and normalization of the code sequences is an effective way to solve binary similarity tasks.

\item[Representation:] We show that \vexir{} is a suitable intermediate representation to solve binary similarity problems in an adversarial setting.

\item[Methodology:] \vexirtovec{} introduces a decoupled approach to solve binary similarity, which separates the task of learning a vocabulary from the task of training a model to solve binary similarity problems such as diffing or searching.

\item[Implementation:] We build \vexirtovec, which consists of the \optenginelong (\optengine), an embedding extractor, and a tunable model (\vexnet). \vexirtovec is a parallel library written in Python, available as a command line tool and via a web interface.

\end{description}

These contributions are detailed throughout the paper.
Section~\ref{sec:background} provides background and motivation, highlighting the challenges of binary similarity.
Section~\ref{sec:overview} introduces our proposed approach, demonstrating how it addresses these challenges.
The next three sections delve into \vexirtovec's technical details.
Section~\ref{sec:phaseI-processing} discusses the first phase, where binaries are decomposed into smaller units.
Section~\ref{sec:phaseII-embeddings} explores the vocabulary learning phase, and Section~\ref{sec:phaseIII-model} details the fine-tuning for similarity.
Evaluation of \vexirtovec's effectiveness is presented in the subsequent sections.
 Section~\ref{sec:experimental-setup} describes the experimental setup, followed by results and ablation studies in Sections~\ref{sec:performance-evaluation} and~\ref{sec:ablation}, respectively.
Section~\ref{sec:relatedworks} reviews related work, highlighting the novelty of our approach.
Finally, Section~\ref{sec:conclusion} summarizes the paper's findings and potential future directions.
\section{Background and Motivation}
\label{sec:background}

In this section, we introduce the key ideas that we discuss in this paper.
We begin by describing the \vexir in Section~\ref{sub:vexir}, the intermediate format we rely on for analyzing binary files.
Then, in Section~\ref{sub:motivation}, we explain why binary similarity analysis is a difficult problem.
In Section~\ref{sub:solution}, we show how our work deals with the challenges mentioned in Section~\ref{sub:motivation}, introducing one of the central notions of this paper: {\it peepholes} of basic blocks.

\subsection{\vexir: The Intermediate Representation}
\label{sub:vexir}

The process of binary similarity analysis may optionally involve lifting the assembly code into some higher level program representation~\cite{ding2019asm2vec, massarelli2019safe, pei2020trex} such as Abstract Syntax Tree (AST)~\cite{yang2021asteria} or an Intermediate Representation (IR)~\cite{chandramohan2016bingo, yaniv2017pldi}.
Once the binary is converted to the desired representation, different graph/hash matching techniques or ML approaches can be used to solve the underlying binary similarity problem.

\begin{figure}[ht]
\centering
\includegraphics[width=\linewidth]{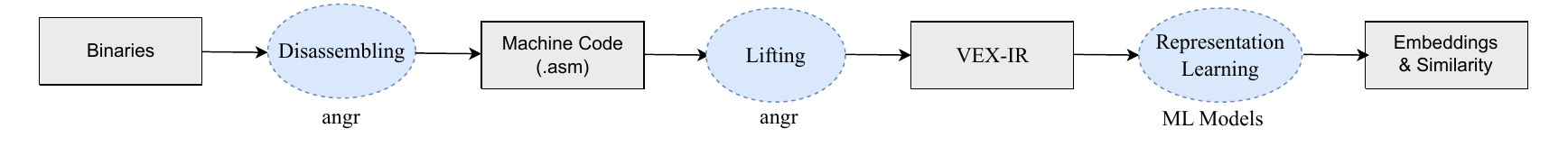}
\caption{\vexirtovec uses \angr for disassembling the binaries and obtaining the \vexir; Function Embeddings for binary similarity tasks are obtained by training simple Feed Forward Neural Networks.}
\Description{Overview}
\label{fig:Configurations}
\end{figure}

The program representation used in this paper to solve binary similarity is \vexir, the IR used by Valgrind~\cite{nicholas2007valgrind} and \angr~\cite{wang2017angr}. 
\vexir is derived from the assembly code; however, it abstracts out many architecture-specific details, such as register names.
Instead of using machine registers, instructions refer to variable names, which are in the static single assignment (SSA) form~\cite{ssa:cytron1991efficiently,rastello2022ssa}.
Thus, each variable name has only one definition site.
\vexir{} presents another high-level characteristic: it is typed.
Nevertheless, \vexir{} also preserves some machine-specific information, such as side effects, pointer sizes, instruction flags, and calling conventions, for instance.
As we explain in Section~\ref{sub:solution}, we opted to work with \vexir{} because this format is architecture-neutral and open source; hence, it is amenable to cross-architecture binary similarity analysis.
Figure~\ref{fig:Configurations} shows how \vexir{} related with the different phases of the binary diffing techniques that this paper introduces.

\begin{figure}[!ht]
\centering
\includegraphics[width=0.8\linewidth]{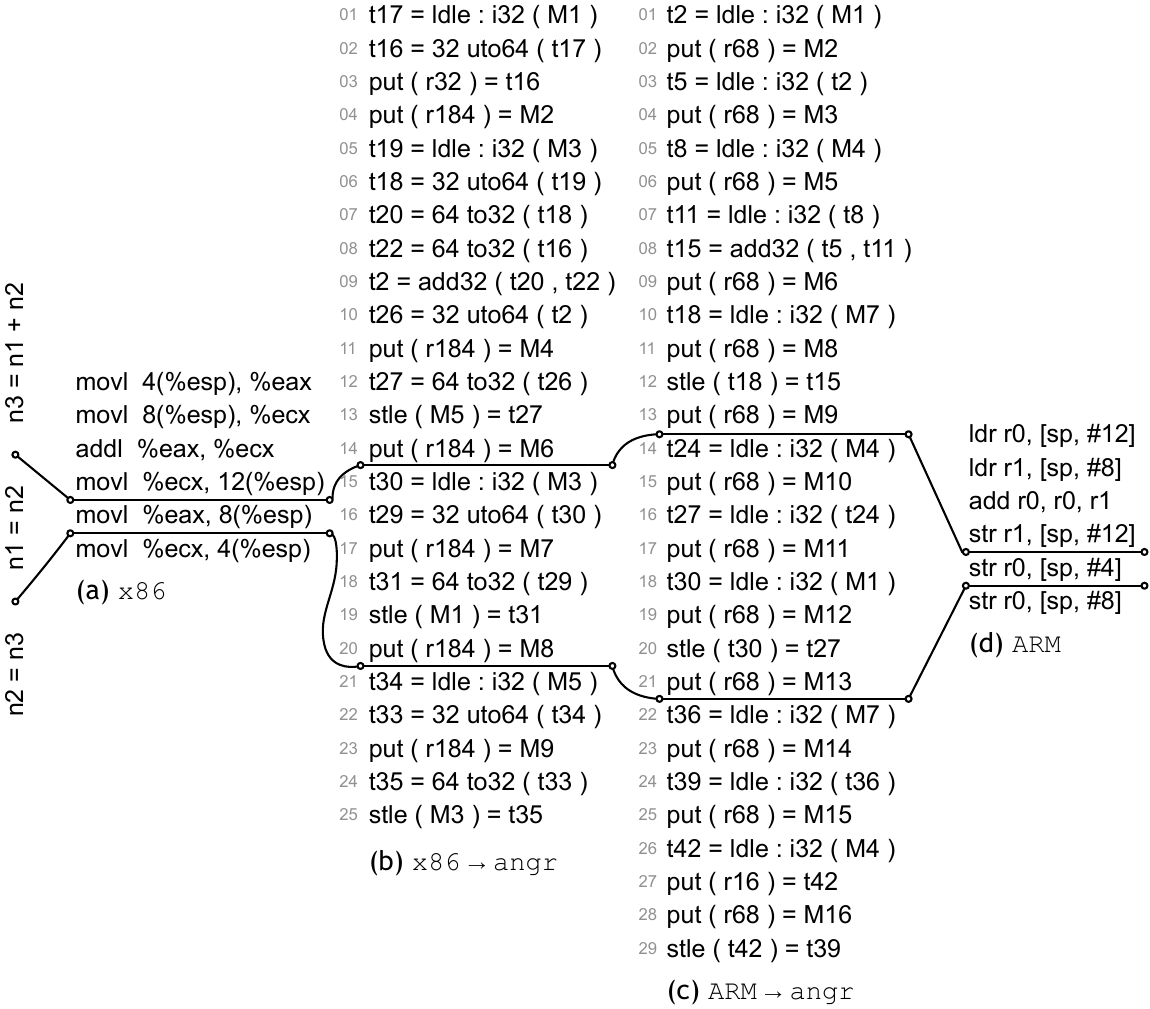}
\caption{\vexir corresponding to the section of the program \texttt{n3 = n1 + n2; n1 = n2; n2 = n3;} to compute the Fibonacci series generated by \angr from x86 and ARM binaries.}
\Description{\vexir example - x86 and ARM}
\label{fig:vexir-example}
\end{figure}

\begin{example}
\label{ex:vexir}
Figures~\ref{fig:vexir-example} (b) and (c) show two \vexir{} programs.
These two programs implement the same sequence of three operations.
They differ because they were produced out of different assembly codes.
Each line of code represents an instruction in Figures~\ref{fig:vexir-example} (a) and (d).
Names such as \texttt{t26} represent variables.
\vexir{} programs can use an unbounded surplus of these names.
Indeed, the static single-assignment property ensures that each variable is defined
with a new name.
Memory addresses and registers, in contrast, do not follow this property.
Thus, a register such as \texttt{r184} in Figure~\ref{fig:vexir-example} (a) can be affected multiple times.
The same is true for memory addresses such as \texttt{M1} or \texttt{M2}.
\end{example}

\subsection{Challenges in Binary Similarity}
\label{sub:motivation}

The binary similarity problem is challenging for several reasons.
The same source code might yield very different binaries due to the differences in compilers, optimization levels, target architectures, or obfuscations0.
Any one of these factors might change the instructions present in a program, as well as the control flow created by these instructions.
This section discusses some of these challenges.

\subsubsection{Compiler and Optimization Levels}
It is well understood that different compilers generate different assemblies and binaries for the same source code.
Such divergences are due to several factors, including the underlying optimizations and internal cost models used by these compilers at different stages of compilation.
Moreover, even binaries produced by the same compiler out of the same source code can present substantial differences, depending on the optimization level used during code generation.

\paragraph{Structural Differences}
The control-flow graph of a program determines the possible paths through which execution can flow along that program.
Many binary diffing tools rely on structural properties of control-flow graphs to compare programs~\cite{Karamitas18,Bourquin13,xu2017gemini,li2021palmtree,duan2020deepbindiff}.
These tools first construct a CFG for each function in the binary, then compare them using graph isomorphism algorithms~\cite{Ullmann76} or ML algorithms and models to identify similar or identical subgraphs.
However, such algorithms leverage structural properties that can be significantly affected by how the binary code is produced.
In particular, optimizations like loop unrolling, function inlining, and tail call elimination cause substantial changes to a program's CFG.
Example~\ref{ex:diffControlFlow} illustrates this issue by comparing control-flow graphs produced by GCC and Clang under different optimization levels.

\begin{figure}[ht]
    \centering
    \subfloat[\footnotesize{x86 - GCC10 \texttt{-O0}}]{{\includegraphics[width=0.22\linewidth, valign=b]{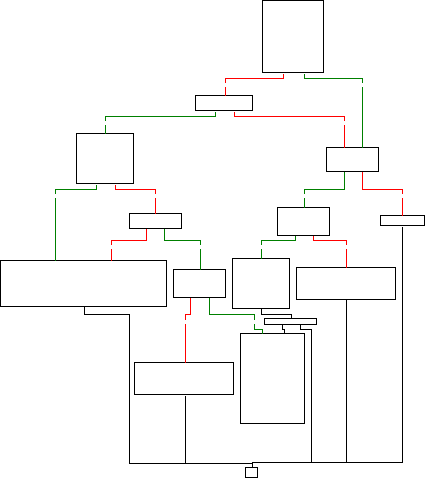} }}%
    \quad
    \subfloat[\footnotesize{x86 - GCC10 \texttt{-O3}}]{{\includegraphics[width=0.22\linewidth, valign=b]{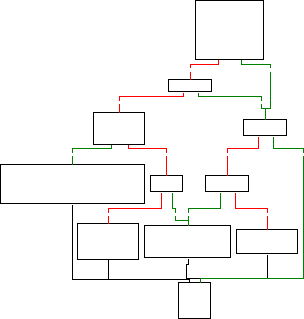} }}
    \quad
    \subfloat[\footnotesize{x86 - Clang10 \texttt{-O0}}]{{\includegraphics[width=0.22\linewidth, valign=b]{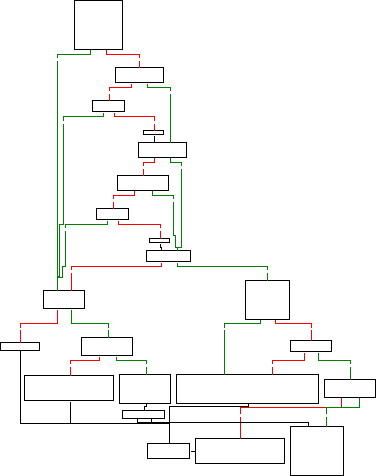} }}%
    \quad
    \subfloat[\footnotesize{x86 - Clang10 \texttt{-O3}}]{{\includegraphics[width=0.22\linewidth, valign=b]{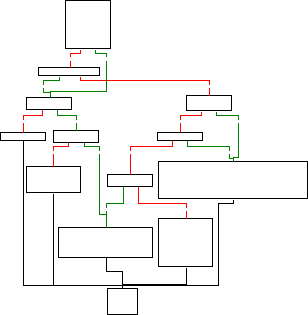} }}
    \Description{The CFG of \texttt{adjust\_relative\_path} method from \texttt{elfedit} of binutils project generated by GCC 10 and Clang 10 compilers under different optimizations}
    \caption{The CFG of \texttt{adjust\_relative\_path} method from \texttt{elfedit} of binutils project generated by GCC 10 and Clang 10 compilers under different optimizations}
    \label{fig:cfg-similarity}
\end{figure}

\begin{example}
\label{ex:diffControlFlow}
Figure~\ref{fig:cfg-similarity} shows the CFGs corresponding to \texttt{adjust\_relative\_path} method of Binutils generated by GCC and Clang compilers at \texttt{-O0} and \texttt{-O3} optimization levels. 
The size of the blocks is proportional to the number of instructions they contain. As it can be observed, the topology of the CFGs, the number of basic blocks, and their size vary significantly across different optimizations and compilers.
\end{example}

\paragraph{Instruction-Level Differences} 
Many binary diffing tools use the hashing of code sequences to compare snippets of binary code~\cite{Jin12,pewny2015sp,yaniv2017pldi,Wang00,Ayupov24}.
However, just like the structural properties previously mentioned, the instructions that make up the function also vary depending on how code is generated.
Differences arise due to the heuristics used by the compilers to perform instruction selection and scheduling.
In this regard, compilers reorder instructions differently across architectures to minimize pipeline stalls and improve execution efficiency.
Furthermore, in architectures such as x86, the same operation can often be expressed in many ways.
Such differences can be dramatic, as Example~\ref{ex:inst-differences} demonstrates.

\begin{example}
\label{ex:inst-differences}
The heatmap in Figure~\ref{fig:inst-freq-dist} shows the frequency distribution of the x86 instructions in the binaries of FindUtils generated by Clang and GCC.
The figure shows that these compilers use different instructions within and across optimization levels.
For instance, the \texttt{endbr64} instruction is emitted $4,682$ times by GCC, whereas Clang generated this instruction only $71$ times.
Similarly, instructions like \texttt{inc} and \texttt{dec} are mostly emitted by both compilers when they optimize for code size at the \texttt{-Os} level.
Even among the vector instructions, Clang and GCC use different instructions as Figure~\ref{fig:inst-freq-dist}
demonstrates.
    
\end{example}

\begin{figure}[ht]
    \centering
    \includegraphics[width=0.9\linewidth]{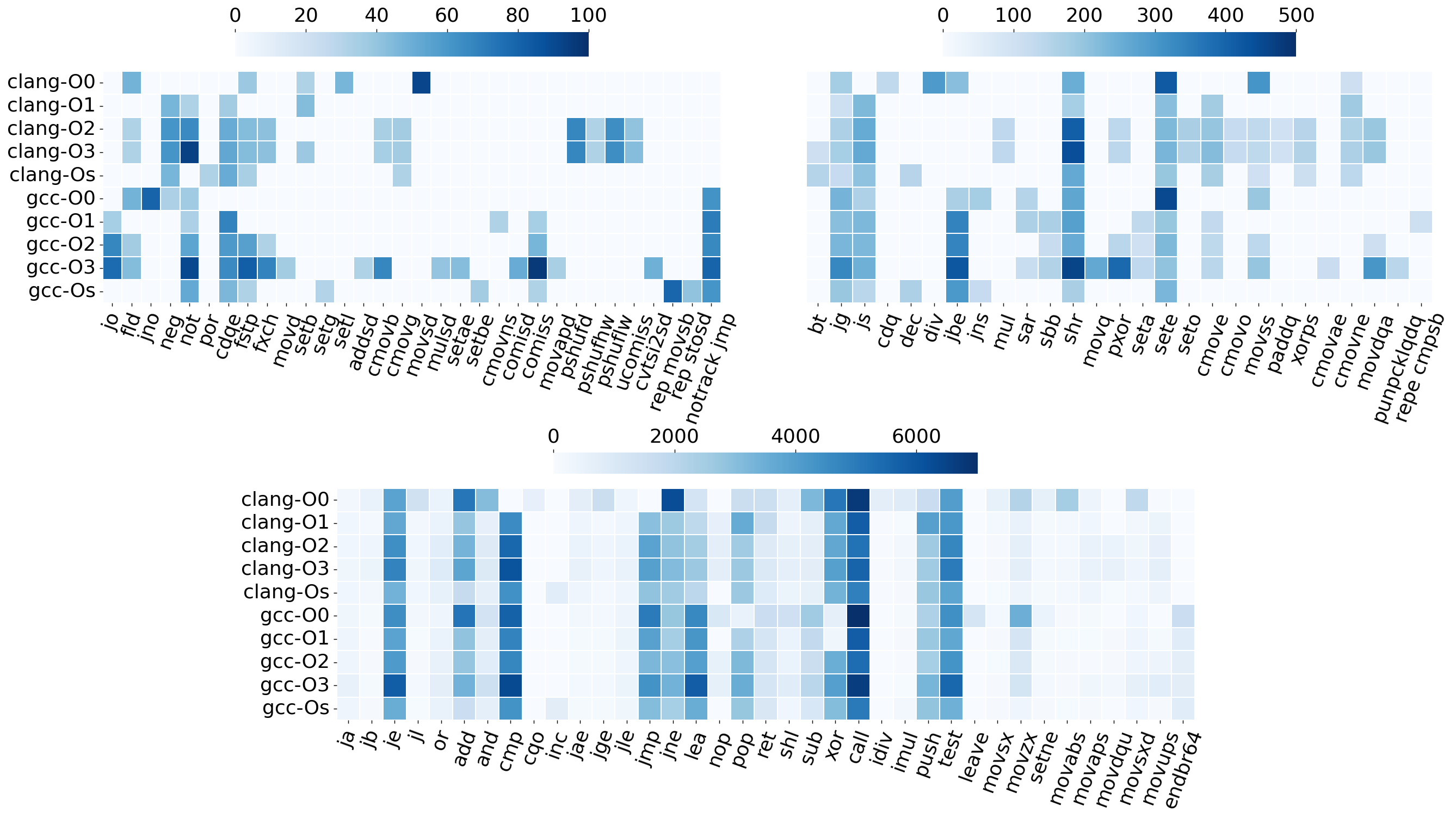}
    \caption{Heatmap showing the frequency distribution of the instructions across the binaries from FindUtils generated by Clang 10 and GCC 10 compilers with different optimization levels.}
    \Description{Heatmap showing the frequency distribution of the instructions across the binaries from FindUtils generated by Clang 10 and GCC 10 compilers with different optimization levels.}
    \label{fig:inst-freq-dist}
\end{figure}

\subsubsection{Target Architecture}
The challenges in cross-architecture binary similarity stem from the differences in the syntax of assembly code and the structure of the induced CFG.
Even when this target-specific assembly code is lifted to a common intermediate representation, the problem of matching them is still difficult because executables written in different assembly codes are unlikely to yield the same structure and instructions once converted to a common intermediate representation.
Many factors might explain such differences:

\begin{enumerate}
\item Binaries compiled for different architectures may exhibit variations in memory layout, including the organization of data structures and the allocation of functions in memory.

\item Different architectures may use different byte orders (endianess) to
store multi-byte data. 

\item System calls and API functions can vary between architectures.

\item Binaries may dynamically link to different versions of system libraries.

\item Compilers for different architectures may optimize code differently.
\end{enumerate}

\begin{example}
\label{ex:vexir-example}
Going back to Figure~\ref{fig:vexir-example}, on Page~\pageref{fig:vexir-example}, we see the intermediate representation that \angr generates for two binary programs that were compiled from the same source code: one of them runs in x86, the other in ARM.
Noticeable differences between these intermediate representations include:

\begin{itemize}
	\item[-] \textbf{Redundant extensions/truncations:} Variable $t17$ in Figure ~\ref{fig:vexir-example}(b) is extended from $32$ to $64$ bits in \texttt{L2}, followed by truncation back to $32$ bits in \texttt{L8} before any use.

	\item[-] \textbf{Redundant load-stores:}
	In \texttt{L13} of Figure ~\ref{fig:vexir-example}(b), the variable \texttt{t27} is stored in the memory location $M5$, and in \texttt{L21} the same memory location is loaded to a new variable $t34$ and used before any write to the memory location. Such operations are typically introduced during the code generation phase of an SSA-based IR.

	\item[-] \textbf{Special Register Updates:} The updates to special registers like instruction/stack pointer (\texttt{put} operations on $r184$ and $r68$ in Figure ~\ref{fig:vexir-example}(b)/(c)) in a straight-line code.

	\item[-] \textbf{Indirect Memory Accesses:} Indirect memory accesses arising out of architectural semantics like $t5$ in \texttt{L8} of Figure ~\ref{fig:vexir-example}(c) vs. $t20$ in \texttt{L9} of Figure ~\ref{fig:vexir-example}(b).

\end{itemize}
\end{example}

\subsection{Normalized Instruction Traces to Minimize Code Differences}
\label{sub:solution}

The objective of our work is to learn to represent the functions in the binary as embeddings, such that semantically similar ones are mapped onto vectors that are spatially close in a multi-dimensional Euclidean space.
To achieve this goal, we aim to extract program embeddings from a representation that achieves the following properties:

\begin{description}
\item[Shape Sensitiveness:] Instructions that run more often should be given higher importance. 
In other words, instructions that exist in confluence points (the post-dominators of branches) or within the cycles of the CFG should influence the embeddings more heavily.

\item[Normalizable:] 
The effect of redundant instructions as those seen in Example~\ref{ex:vexir-example} should be discarded as much as possible.
In this regard, the instructions should be \textit{normalizable}, meaning that such redundancies should be pruned prior to the construction of the vector representation of programs.

\item[Flow Insensitiveness:] Recent findings have shown that embeddings that map binaries to vectors should be flow-insensitive to resist changes in the order of instructions~\cite{Damasio23,Gorchakov23}.
This requirement is motivated by Example~\ref{ex:diffControlFlow} 
which illustrates how the ordering of instructions and the control-flow graph of programs change once these 
programs are compiled to target different architectures.
\end{description}

To achieve these three goals, our approach builds from the {\it sequences of Basic blocks}, which we term as \textit{peepholes}.
Because this notion is fundamental to the presentation that follows,
Definition~\ref{def:peephole} states it formally.

\begin{definition}[Peephole]
\label{def:peephole}
Let $G = (V, E)$ be the control-flow graph of a program, where $V$ is a set
of {\it basic blocks}---instructions that always run in sequence.
An edge $v_i \rightarrow v_j \in E$ denotes a branch from $v_i, v_j \in V$.
A peephole is a list of $k$ contiguous basic blocks that form a path in $G$.
\end{definition}

\section{Overview of \vexirtovec{}}
\label{sec:overview}

Figure~\ref{fig:vex-embed} outlines the approach that this paper introduces to solve the binary similarity problem.
Section~\ref{sub:three_steps} provides a brief overview of this approach, and Section~\ref{sub:desirable_properties} explains how it meets the list of requirements previously enumerated in Section~\ref{sub:solution}.

\subsection{From Programs to Vectors to Tasks}
\label{sub:three_steps}

Figure~\ref{fig:vex-embed} illustrates the overview of our pipeline designed to transform the binary representation of a program into a fixed-size vector.
Our approach comprises three main phases:
\begin{enumerate}[label=\Roman*:]
    \item Decomposing functions into a set of normalized peepholes (Section~\ref{sec:phaseI-processing}).
    \item Pre-training the vocabulary to embed these peepholes as vectors (Section~\ref{sec:phaseII-embeddings}).
    \item Fine-tuning the resulting vectors to address downstream tasks such as binary diffing and searching (Section~\ref{sec:phaseIII-model}).
\end{enumerate}

\begin{figure}[!ht]
    \centering
    \resizebox{\textwidth}{!}{
    \includegraphics{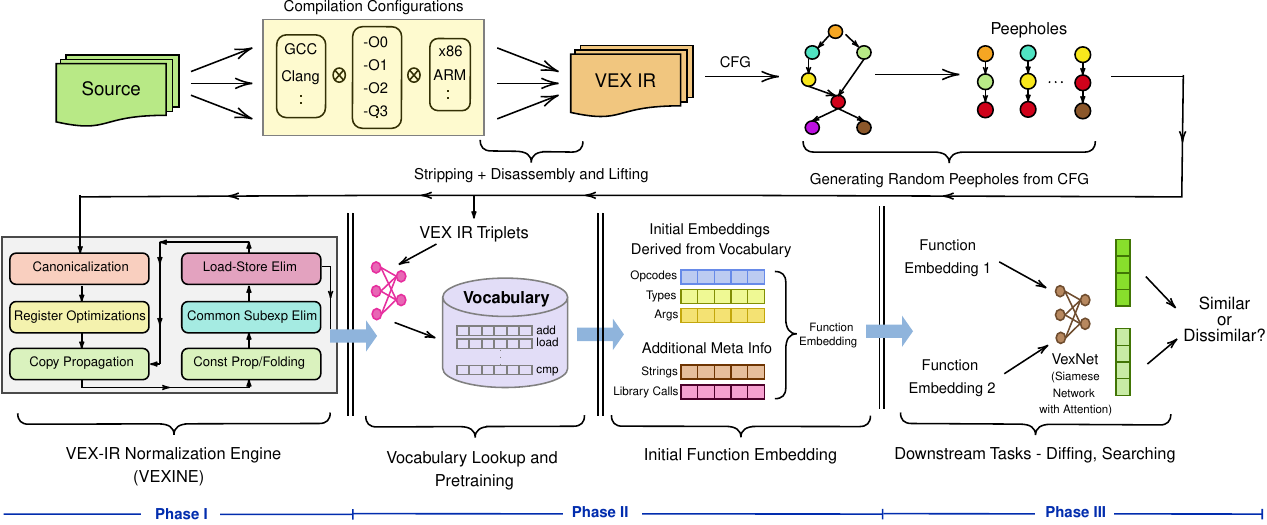}
    }
    \caption{Overview of \vexirtovec{}: The Control-Flow Graphs of the functions from \vexir are generated from different compilation configurations. 
    Then, the peepholes are obtained and normalized by \optengine, the \optenginelong by using different normalizing transformations.
    Function embedding is obtained from the embeddings derived from the Opcodes, Types, and Arguments, along with the strings and external library calls used in the function.
    This embedding is used as input to train \vexnet, a simple feed-forward network in the Siamese setting designed to obtain the final embedding of the function. 
    }
     \Description{Overview of \vexirtovec{}: The Control-Flow Graphs of the functions from \vexir are generated from different compilation configurations. 
    Then, the peepholes are obtained and normalized by \optengine, the \optenginelong by using different normalizing transformations.
    Function embedding is obtained from the embeddings derived from the Opcodes, Types, and Arguments, along with the strings and external library calls used in the function.
    This embedding is used as input to train \vexnet, a simple feed-forward network in the Siamese setting designed to obtain the final embedding of the function.}
    \vspace{-\baselineskip}
    \label{fig:vex-embed}
\end{figure}

\paragraph{Phase I: From Binaries to Normalized Peepholes}
In Phase I, we extract the \vexir from binaries generated with different compiler configurations.
Functions are then decomposed into a set of peepholes derived from their corresponding Control-flow Graphs (CFG), as detailed in Section~\ref{sub:generating-peepholes}.
The resulting intermediate representation (\vexir) undergoes normalization through \optenginelong, \optengine.
This engine applies transformations inspired by traditional compiler optimizations, effectively \textit{normalizing} the \vexir representation of each peephole.
Prior to invoking these passes, we canonicalize the input IR to abstract away details that are unnecessary to solve the binary similarity problem.
Sections~\ref{sub:canonicalization} and~\ref{sub:normalization} elaborate on the normalization engine, its associated normalizations, and its role in decluttering and simplifying the IR.

\paragraph{Phase II: From Peepholes to Embeddings}
In Phase II, we delve into the pre-training process, wherein we learn---without supervision---the vocabulary of \vexir from a corpus of binaries.
This vocabulary facilitates the mapping of the \textit{entities} in the intermediate representation (IR), such as opcodes, types, and arguments to $n$-dimensional vector representations.
This step of learning a vocabulary occurs once in an \textit{offline} manner. Subsequently, we derive $n$-dimensional representations for peepholes and functions using the learned vocabulary.

\paragraph{Phase III: Fine-Tuning Embeddings for Downstream Tasks}
In Phase III, we train \vexnet, a simple feed-forward siamese network with global attention that learns to combine the embeddings generated at the entity level of the functions for addressing the binary similarity problem.
The objective of the model is to represent functions in an Euclidean space, grouping similar functions together while setting apart dissimilar ones.
Once trained, the representations generated by the model can be applied to various downstream tasks, such as binary diffing and searching.

\subsection{Meeting the List of Desirable Properties}
\label{sub:desirable_properties}

The series of steps outlined in Figure~\ref{fig:vex-embed} was conceived as a way to meet the three requirements previously stated in Section~\ref{sub:solution}; namely, shape-sensitiveness, normalization, and flow-insensitiveness.

\paragraph{\textbf{Achieving Shape-Sensitiveness via Random Walks}} We generate peepholes as random walks on the CFG of the functions. Given three parameters: a function $F$, a length of the peephole $k$, and a coverage criterion $c$, peepholes of maximum length $k$ are produced via random walks on the CFG of $F$, where each walk crosses $k$ basic blocks.
Production stops once each basic block is visited at least $c$ times.

\textit{Rationale:} The random walk tends to visit highly connected basic blocks more often. Thus, peepholes are more likely to contain blocks with high in and out degrees in the control-flow graph.

\paragraph{\textbf{Achieving Normalization through Unsound Optimizations}} Before being represented as embeddings, peepholes are normalized using transformations inspired by conventional compiler optimizations.
Currently, we apply the following transformations on each peephole: register optimization, copy propagation, constant propagation and folding, common expression elimination, and load-store elimination.
Our normalizing optimizations are unsound~\cite{Lockwood-Morris-Semantics-10.1145/512927.512941}: they are applied to each peephole; hence, they can eliminate operations that are \textit{dead} within a peephole, but that could be necessary outside it.

\textit{Rationale:}
The objective of \vexirtovec{} is to solve binary similarity. These optimizing transformations act as a means of normalization.
Thus, we want to be able to eliminate as much redundancy from the sequence of instructions that constitute a peephole as possible so as to preserve the essential relations between variables and operations. In this sense, soundness is not essential.

\paragraph{\textbf{Achieving Flow-Insensitiveness through Commutative Accumulation}} Peepholes are transformed into vectors in two steps.
First, each peephole is transformed into a single vector via a {\it mapping
function} that converts instructions to vectors.
Then, all these vectors are added together.

\textit{Rationale:} Addition is commutative; hence, accumulation based on summation ensures that the ordering of instructions plays no hole in the construction of the final vector.
However, instructions that appear in many peepholes (as a result of the random walk) will contribute more to the final vector.

\section{From Binaries to Normalized Peepholes}
\label{sec:phaseI-processing}

In the effort to map binaries to vectors, we first convert them
to a set of peepholes, following Definition~\ref{def:peephole}.
In this section, we explain how we generate such peepholes.
This process happens in multiple steps, starting with the \vexir{} representation of the binary and terminating with a collection of peepholes at the end of Phase I, as shown in Figure~\ref{fig:vex-embed}.
Firstly, Section~\ref{sub:generating-peepholes} shows how the peepholes are generated from the \vexir{} representation of the functions.
Secondly, Section~\ref{sub:canonicalization} explains how peepholes are rewritten to abstract out the different syntax-oriented details that map to the same semantics in \vexir.
Finally, Section~\ref{sub:normalization} explains how this initial set of peepholes is further normalized via a number of transformations.
These transformations, which are inspired by typical compiler optimizations, are applied by our \optenginelong (\optengine).
This final collection of peepholes is the input of Phase II (Figure~\ref{fig:vex-embed}).

\subsection{Generating Peepholes}
\label{sub:generating-peepholes}

\vexir{} programs form {\it Control-Flow Graphs}: graph-like structures whose vertices denote basic blocks, and edges denote possible program flows.
The peepholes of Definition~\ref{def:peephole} is extracted from a control-flow graph via a random walk, which is implemented by Algorithm~\ref{algorithm:random-walk}.
Algorithm~\ref{algorithm:random-walk} translates a function $F$ into a set of peepholes ($\mathcal{P}$), where each peephole $\pi \in \mathcal{P}$ is a sequence of a fixed number ($k$) of basic blocks. 
Algorithm~\ref{algorithm:random-walk} ensures that each basic block in the CFG is \textit{covered} at least $c$ times across $\mathcal{P}$.
As Theorem~\ref{theo:peephole} demonstrates, the expected running time of this algorithm is proportional to the product of this parameter, $c$, and the number of basic blocks in the program.

\begin{algorithm}
\SetAlgoLined\DontPrintSemicolon
\SetKwProg{myalg}{Algorithm}{}{end}
\SetKwRepeat{Do}{do}{while}
    \textbf{Inputs} \\
    \begin{itemize}
    \item $G = (V, E)$: Control-flow graph
    \item $k$: Maximum length of a peephole
    \item $c$: Minimum number of visits per basic block
    \end{itemize}
    \textbf{Output} \\
    \begin{itemize}
    \item $\mathcal{P}$: Set of peepholes
    \end{itemize}
    \textbf{Variables} \\
    \begin{itemize}
        \item $U$: worklist of basic blocks from which the starting block is selected. Initially,  $U = V$ \\
        \item $count$: Number of visits per basic block. Initially, $count[v] = 0$ for every $v \in V$
    \end{itemize}
    \While{$|U| > 0$ }{
        $v$ := randomly sampled start block from $U$ \\
        $\pi$ := random path of length $k$, starting at $v$ \\
        $\mathcal{P}$.append($\pi$) \\

        \For {each block $b\ \in\ \pi $}
        {
            $count[b] := count[b] + 1$ \\
        }
        $U$ := $ \left \{\exists \ v \in V : count[v] < c \right\}$ \\
    }
    \Return $\mathcal{P}$
\caption{Generating peepholes via random walks on the control-flow graph}
\label{algorithm:random-walk}
\end{algorithm}

\begin{theorem}
\label{theo:peephole}
If Algorithm~\ref{algorithm:random-walk} is invoked on a VEX IR function with $|V|$ basic blocks in the control-flow graph and parameters $k$ and $c$, then it terminates in at most $c |V|$ steps.
\end{theorem}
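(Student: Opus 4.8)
The plan is to read ``step'' as a single iteration of the \textbf{while} loop of Algorithm~\ref{algorithm:random-walk}, since each iteration appends exactly one peephole to $\mathcal{P}$; bounding the number of iterations by $c|V|$ then bounds $|\mathcal{P}|$ and hence the running time, as each iteration performs only $O(k)$ work. I would prove the bound with a monotone, bounded potential argument rather than by reasoning directly about the random walk, so that the randomness of the path plays no role in the worst-case count.

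First I would define the progress measure
\[
  \Phi \;=\; \sum_{v \in V} \min\bigl(count[v],\, c\bigr),
\]
which equals $0$ at initialization (all counts start at $0$) and never exceeds $c|V|$, since each summand is capped at $c$. The key observation is that the start block $v$ is sampled from the worklist $U = \{\, v \in V : count[v] < c \,\}$, so at the moment it is chosen we have $count[v] < c$. Since the path $\pi$ begins at $v$, the update loop increments $count[v]$, and because this count was strictly below $c$, its capped contribution $\min(count[v], c)$ rises by exactly $1$. Every other increment performed in the same iteration can only keep $\Phi$ non-decreasing. Hence each iteration raises $\Phi$ by at least $1$.

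Combining these facts, $\Phi$ starts at $0$, increases by at least one unit per iteration, and is bounded above by $c|V|$; therefore the loop executes at most $c|V|$ times, which establishes the claim.

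The one point that deserves care---more a subtlety than a genuine obstacle---is decoupling the argument from the walk length: a path may traverse fewer than $k$ blocks (for instance when a block has no successors, or when the CFG is small), and blocks already saturated at $count = c$ contribute nothing further to $\Phi$. The potential function sidesteps both issues by crediting progress solely to the start block, whose under-visited status is guaranteed by its membership in $U$. I would also remark that the bound is deterministic and worst-case; the ``expected running time'' phrasing in the surrounding text refers to the total cost including the $O(k)$ per-iteration work, not to any probabilistic averaging in the iteration count itself.
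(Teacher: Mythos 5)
Your proof is correct and follows essentially the same route as the paper: both arguments hinge on the fact that the start block is sampled from $U$ and hence has $count < c$, so every iteration makes one unit of progress toward the total budget of $c|V|$ capped visits. Your potential function $\Phi = \sum_{v} \min(count[v], c)$ is simply a rigorous packaging of the paper's informal claim that each iteration visits at least one not-yet-saturated block (and it quietly repairs the paper's slightly imprecise phrase ``monotonic decrease in the size of $U$,'' since $U$ itself need not shrink every iteration).
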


\begin{proof}
The worklist $U$, from which the start basic block is selected at each iteration, is updated to include only the basic blocks $v \in V $ that satisfy the condition $count[v] < c$.
This ensures that at least one basic block in $\pi$ that was not visited $c$ times earlier is visited in each iteration. 
Hence, there is a monotonic decrease in the size of $U$, due to which the algorithm is guaranteed to terminate.
This gives us a worst-case bound of $c|V|$ on the number of steps needed.
The worst case occurs when the start basic block is the only one in $\pi$ that is also in $U$ (that has not been visited $c$
times previously).  
\end{proof}

\begin{observation}
\label{obs:peephole}
From our experiments, we note that the expected number of peepholes generated by using Algorithm~\ref{algorithm:random-walk} is a value close to $c |V|/2$. 
\end{observation}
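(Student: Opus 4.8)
The plan is to reduce the claim to a statement about the average number of \emph{useful} block-visits per peephole, and then to estimate that average. I would call a visit to a block $b$ inside a peephole \emph{useful} if $count[b] < c$ at the instant of that visit. Define the potential $\Phi = \sum_{v \in V} \min(count[v], c)$; it starts at $0$ and equals $c|V|$ exactly when the algorithm terminates, i.e.\ when every block is covered at least $c$ times. A single increment of $count[b]$ raises $\min(count[b],c)$ by one precisely when $count[b] < c$ beforehand---that is, precisely on a useful visit---so the increase of $\Phi$ contributed by peephole $i$ equals the number $u_i$ of useful visits it performs (this holds even when a walk revisits a block, since each increment is accounted for separately). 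Telescoping over all $N$ peepholes gives the exact identity $\sum_{i=1}^{N} u_i = c|V|$, hence $N\,\bar u = c|V|$ where $\bar u = \tfrac{1}{N}\sum_i u_i$ is the mean number of useful visits per peephole. The observation $N \approx c|V|/2$ is therefore equivalent to the single assertion $\bar u \approx 2$, so the whole task becomes estimating $\bar u$ independently of $N$.

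Next I would bound and then estimate $\bar u$. The first block of every peephole is the start block, which is sampled from the worklist $U$ and is undercovered by construction; its visit is always useful, so $u_i \ge 1$ and $\bar u \ge 1$. The extreme $\bar u = 1$ reproduces the worst-case bound $N \le c|V|$ of Theorem~\ref{theo:peephole}, matching its description in which the start block is the only element of $\pi$ still in $U$. Writing $u_i = 1 + r_i$, where $r_i$ counts the useful pass-through visits among the remaining $k-1$ blocks, the claim reduces to $\tfrac{1}{N}\sum_i r_i \approx 1$: on average, about one extra block per walk is still undercovered at the moment it is traversed.

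To estimate $\mathbb{E}[r_i]$ I would adopt a mean-field model of the coverage dynamics. At a moment when a fraction $q$ of blocks remain undercovered, a pass-through lands on an undercovered block with probability approximately $q$ (ignoring correlations), so the instantaneous useful count is about $1 + (k-1)q$. As the run proceeds $q$ decreases monotonically from $1$ to $0$; using the identity $\sum_i u_i = c|V|$ to change variables from peephole index to accumulated coverage and integrating $1+(k-1)q$ against this process yields a run-average $\bar u$ lying strictly between the extremes $1$ (late phase, $q \to 0$) and $k$ (early phase, $q \to 1$). For the peephole lengths used in the experiments this average comes out near $2$, which is the content of the observation.

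The main obstacle is that this last step is genuinely graph-dependent: the useful-visit probability is not exactly $q$, because the random walk oversamples high-degree blocks and blocks nested in loops---the very shape-sensitivity the method exploits---and because start blocks are drawn only from the undercovered set $U$, making the traversed blocks positively correlated with being already covered. A fully rigorous, graph-independent constant is thus not attainable, and the clean factor $1/2$ is better viewed as a robust empirical regularity than as a closed-form consequence of the model; this is exactly why the statement is phrased as an experimental Observation. Accordingly, I would finish by validating $\bar u \approx 2$ empirically across the benchmark control-flow graphs and the chosen values of $k$ and $c$, reporting the distribution of $N/(c|V|)$ and confirming that it concentrates near $1/2$.
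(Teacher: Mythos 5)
Your opening reduction is correct and is, in fact, cleaner than anything the paper writes down. The potential $\Phi=\sum_{v\in V}\min(count[v],c)$ yields the exact identity $\sum_i u_i = c|V|$, so the Observation becomes equivalent to the single assertion $\bar{u}\approx 2$; this is precisely the paper's own informal rationale (``in every step, roughly one more basic block from $U$ is covered apart from the start basic block''), but you have turned it into an exact accounting identity, and as a bonus it re-derives the bound of Theorem~\ref{theo:peephole} from $\bar{u}\ge 1$. Where the two arguments diverge is in how $\bar{u}\approx 2$ is then justified. The paper's rationale appeals to structural properties measured on its dataset: an edge-to-block ratio of about $1.3$--$1.4$ (so the CFGs are sparse and tree-like), a longest path of about $|V|/3$, a hitting-probability argument, and the claim that a non-leaf start block in $G[U]$ forces at least one additional covered block. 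You instead attempt a mean-field estimate in which a pass-through visit is useful with probability $q$, the current undercovered fraction, and then assert that the resulting run-average ``comes out near $2$.''

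That mean-field step is the genuine gap. Under your own model the per-walk useful count is $1+(k-1)q$, and with the experimental parameters ($k=72$, while the CFGs in Table~\ref{tab:peepholes} average only $|V|\approx 20$--$30$ blocks) the term $(k-1)q$ dominates for most of the run, so the model predicts $\bar{u}\gg 2$ and hence $N\ll c|V|/2$. Concretely, for $c=1$ the integral you describe evaluates to $N\approx |V|\ln k/(k-1)$, which at $k=72$ is about $0.06\,|V|$---roughly an order of magnitude below the observed $|V|/2$. The reason the empirical $\bar{u}$ sits near $2$ is exactly what the mean-field ignores: walks terminate early at exit blocks (the average longest path is about $7$ in the appendix data), loops revisit blocks that immediately cease to be useful, and the effective number of \emph{distinct} pass-through blocks per walk is therefore a small constant rather than $k-1$. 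You do concede the graph-dependence in your final paragraph, but the sentence claiming the integral yields $\bar{u}\approx 2$ is unsupported and, taken literally, false; the correct conclusion from your own setup is that the factor $1/2$ cannot be extracted from the mean-field model, and the support for the Observation must come either from the structural facts the paper invokes or directly from the experiments in Appendix~\ref{sec:graph-study}, which is where both you and the paper ultimately land.
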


\begin{rationale}

In practice (from the experiments shown in Appendix~\ref{sec:graph-study}), we observe that as the value of $k$ (maximum length of the peephole) increases, the number of peepholes needed tends to a value close to $c |V|/2$ (as opposed to the worst case $c |V|$). Intuitively, this means that in every step, roughly one more basic block from $U$ is covered apart from the start basic block.

Below, we give an intuitive explanation 
of this. 

\begin{itemize}
    \item The average ratio of the number of edges and the number of blocks from the CFGs used in our dataset is around $1.3$-$1.4$. This implies that the graph is sparse and has the structure of a directed tree with some additional edges.
    \item We also observe that the length of the longest path is about one-third of the number of basic blocks. If the randomly chosen start vertex of the peephole belongs to the first half of the longest path, then we are assured to hit at least $1/6$ fraction of the basic blocks (assuming the peephole length $k$ is long enough). The probability of hitting the first half of a particular longest path is again $1/6$. Hence, with probability $1/6$, we hit at least $|V|/6$ many basic blocks in addition to the starting block. This alone adds a contribution of $|V|/36$ (excluding the starting block) to the expected number of blocks to be covered.
    \item  If the starting block is not a leaf in $G[U]$ (induced graph on the remaining basic blocks), we are assured that there will be at least one more block that will be covered. 
    
\end{itemize}

In our experiments, we set $c=2$, and hence, the expected number of peepholes generated is a value close to $|V|$.
\end{rationale}

The final \vexirtovec{} embeddings are derived from these peepholes.  
By design, the instructions that occur in different peepholes will have higher contributions to the final \vexirtovec{} 
vector---a direct consequence that such instructions occur more often in different \textit{execution-contexts}.
This observation motivates the design of Algorithm~\ref{algorithm:random-walk}, 
which builds peepholes via random walks.
Thus, the basic blocks that are more connected in the function's control-flow graph are likely to appear in more peepholes.
Therefore, they tend to bear more weight on the final vector that represents a
function.
It can be seen that the number of peepholes produced by Algorithm~\ref{algorithm:random-walk} might vary for the same function across different runs due to the randomness induced by the algorithm.
However, the expected number of peepholes can be derived from the parameters $c$ and $k$, as Theorem~\ref{theo:peephole} (and Observation~\ref{obs:peephole}) shows.

\begin{figure}[htb]
\centering
\includegraphics[width=1\linewidth]{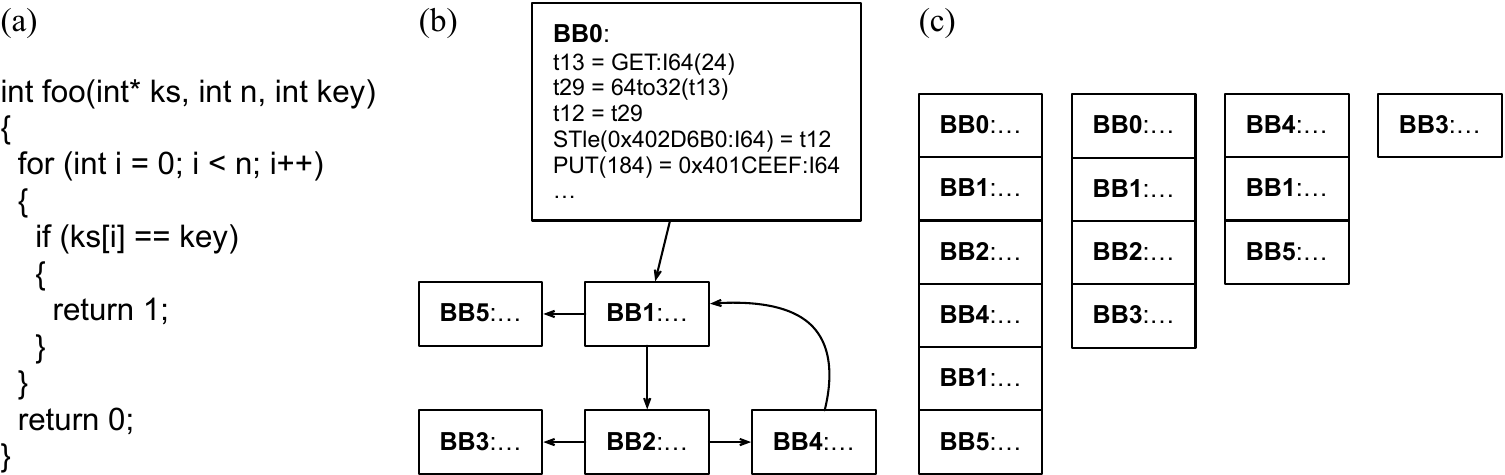}
\caption{(a) Program written in C.
(b) \vexir{} representation of the program, as obtained from an x86 binary.
(c) Peepholes produced by Algorithm~\ref{algorithm:random-walk}, using $k$ = 6 and $c$ = 2.}
\Description{(a) Program written in C.
(b) \vexir{} representation of the program, as obtained from an x86 binary.
(c) Peepholes produced by Algorithm~\ref{algorithm:random-walk}, using $k$ = 6 and $c$ = 2.}
\label{fig:examplePeepholes}
\end{figure}

\begin{example}
\label{ex:peephole1}
Figure~\ref{fig:examplePeepholes} illustrates how peepholes are produced.
This example assumes that Algorithm~\ref{algorithm:random-walk} is invoked on the function shown in Figure~\ref{fig:examplePeepholes} (a), with $k$ = 6 and $c$ = 2.
The peepholes are the sequences of basic blocks seen in Figure~\ref{fig:examplePeepholes} (c).
These peepholes are generated via a random walk on the graph that
Figure~\ref{fig:examplePeepholes} (b) shows.
Notice that the same basic block might appear in different peepholes.
Basic blocks can also appear multiple times in the same peephole if the control-flow graph contains loops.
Thus, highly connected blocks, or blocks in loops, such as
\texttt{BB1} in Figure~\ref{fig:examplePeepholes} (b), are likely to contribute more instances to the peepholes.
\end{example}

\subsection{Canonicalization of Peepholes}
\label{sub:canonicalization}

As seen in Example~\ref{ex:vexir-example}, \vexir{} is a rather prodigal format.
It contains about $1095$ opcodes and $18$ types of operands~\cite{shoshitaishvili2015PyVex}.
Each opcode is specialized for the type and the bit-width of the operation that it implements.
As an illustration, \vexir{} contains about $100$ different ways to write an addition.
This diversity is due to, among other things, types (integer, floating point, etc) and the size of the operands ($8$, $16$, $32$, and $64$ bits).
For example, the ARM instruction \texttt{adds R2, R2, \#8} becomes a sequence of five operations: \texttt{t0 = GET:I32(16); t1 = 0x8:I32; t3 = Add32(t0,t1); PUT(16) = t3; PUT(68) = 0x59FC8:I32}, once converted into \vexir. 

In this paper, vectors derived from \vexir{} instructions form the vocabulary and are used to compare binaries.
The syntactic diversity of instructions that implement the same semantics complicates the generation of this vocabulary.
Ideally, each different syntax should yield different semantics.
Thus, to approximate this goal, we canonicalize the \vexir{} representation of a
program.
This preprocessing step follows a number of simple rewriting rules, which we enumerate as follows:

\begin{enumerate}
\item Opcodes representing the same semantics are replaced with a single opcode.
For instance, we replace opcodes such as \texttt{Add8}, \texttt{Add16} and \texttt{Add32} with just \texttt{Add}.

\item The bitwidths and the endianness of the operands are masked out. For instance, we remove casts such as \texttt{32uto64} from the instructions.

\item Operations that use negative immediate values are converted to operations that use positive constants.
For instance, \texttt{add(-1, t20)} is replaced with \texttt{sub(+1, t20)}.

\item Types are normalized into four primitive types, namely: Integer, Float, Double, and Vector.

\item Following the third and fourth rewriting rules above, all the constants, variables, and registers are abstracted using a corresponding generic representation.

\item Instructions involving indirect memory accesses are replaced with direct memory accesses.
\end{enumerate}

The process of canonicalization of \vexir{} is analogous to the \textit{normalization} step performed in Natural Language Processing~\cite{jurafsky2000nlp-book}.
Canonicalization reduces the number of unique entities to learn, in turn facilitating effective learning.
Example~\ref{ex:canonicalization} illustrates this preprocessing, as done in the context of this work.

\begin{example}
\label{ex:canonicalization}
The canonicalized version corresponding to the example IR shown in Figure ~\ref{fig:vexir-example}(b) is shown in Figure ~\ref{fig:normalization-steps}(a). Notice that the little-endian opcodes \texttt{ldle/stle} are generically represented as \texttt{load/store} while masking out the endianness and bitwidths. 
\end{example}

\begin{figure}
    \centering
    \captionsetup{font=small, skip=8pt}
    \resizebox{0.215\textwidth}{!}{
        \begin{minipage}{0.285\linewidth}
            \captionsetup{font=footnotesize, skip=4pt} 
            \centering
            \input{figures/IR_Optimization/1-canonicalized}
            \caption*{(a) Canonicalized IR} 
        \end{minipage}
    }
    \vrule
    \hspace{0.5em}
    \resizebox{0.215\textwidth}{!}{
        \begin{minipage}{0.29\linewidth}
            \captionsetup{font=footnotesize, skip=4pt} 
            \centering
            \input{figures/IR_Optimization/2-getput-elim}
            \caption*{(b) After Register Optimizations} 
        \end{minipage}
    }
    \hspace{0.7em}
    \vrule
    \hspace{0.7em}
    \resizebox{0.215\textwidth}{!}{
        \begin{minipage}{0.285\linewidth}
            \captionsetup{font=footnotesize, skip=4pt} 
            \centering
            \input{figures/IR_Optimization/3-copy+const-prop}
            \caption*{(c) After Copy and Const prop} 
        \end{minipage}
    }
   \hspace{0.7em}
    \vrule
    \hspace{0.5em}
    \resizebox{0.215\textwidth}{!}{
        \begin{minipage}{0.285\linewidth}
            \captionsetup{font=footnotesize, skip=4pt} 
            \centering
            \input{figures/IR_Optimization/4-offset+expr-prop}
            \caption*{(d) After Common Subexp Elim} 
        \end{minipage}
    }
    \caption{Steps showing the normalizations performed by \optengine on the example shown in Figure ~\ref{fig:vexir-example}(a) for Binary Similarity}.
    \Description{Steps showing the optimizations performed by our Optimization Engine on the example shown in Figure ~\ref{fig:vexir-example}(a) for Binary Similarity}
    \label{fig:normalization-steps}
\end{figure}

The replacement of indirect memory accesses with direct accesses breaks architectural semantics.
However, memory locations are masked out in the process of mapping peepholes to vectors.
Thus, for the purposes of this paper, this kind of replacement is acceptable.
Example~\ref{ex:indirect_access} explains how this transformation is carried out.

\begin{example}
\label{ex:indirect_access}
The \vexir{} program generated from the ARM binary seen in Figure~\ref{fig:vexir-example}(c) shows several indirect accesses.
For instance, the operands $t5$ and $t11$ of the \texttt{add} instruction in \texttt{L8}, are derived from $M1 \rightarrow t2 \rightarrow t5$ and $M4 \rightarrow t8 \rightarrow t11$ respectively.
We replace such accesses by removing additional loads by introducing new memory addresses $M_x$ and $M_y$ as shown in Figure ~\ref{fig:after-normalization} (b). 
\end{example}

\subsection{Normalization of Peepholes by \optengine}
\label{sub:normalization}

Two IRs generated from binaries corresponding to the same source code might be vastly different due to compilation configurations, as explained in Section~\ref{sec:background}.
Additionally, the process of disassembling and lifting from the binary to \vexir{} can also introduce additional redundant instructions that are not of importance for binary similarity.
Thus, to get rid of such \textit{uninteresting} instructions, we normalize the IR via a catalog of transformations that this section discusses.

These transformations follow conventional compiler optimizations~\cite{muchnick1997advanced}.
We perform normalization directly on peepholes, not on functions.
This approach simplifies the implementation of the transformations because the peepholes are straight-line codes.
The normalizations are implemented as part of our \optenginelong (\optengine) module of \vexirtovec{}.
The \optengine{} module takes a peephole as input and produces a new \textit{normalized} version of it.
The implementation of \optengine{} relies on two principles:
\begin{enumerate}
\item Values used in a peephole without being defined within it are assumed to be parameters of the peephole and must be preserved.
\item Values defined within a peephole but not used within it are considered dead and can be eliminated.
\end{enumerate}
The effects produced by \optengine{} have two characteristics:
\begin{description}
\item [Local:] Optimizations affect only one occurrence of an instruction, even if this instruction appears in multiple peepholes. In other words, the elimination of an instruction from a peephole bears no effect on the existence of this instruction within other peepholes.
\item [Unsound:] Due to the local nature of the optimizations, they are unsound. In other words, we remove instructions from a peephole without worrying if these instructions could be used outside the peephole.
\end{description}

In the rest of this section, we describe different normalizations that \optengine{} performs on the canonicalized peepholes.

\begin{description}

\item [Register Promotion:] Memory addresses accessed via the \texttt{put} and \texttt{get} instructions are promoted to registers.
Notice that this transformation breaks the static single-assignment property
of \vexir. Consequently, a reaching-definition analysis becomes necessary to implement the transformations that follow.
It can be seen that the register promotion exposes redundant, useless writes to the same register.

\item [Redundant Write Elimination:] Redundant writes to the same register is removed. This transformation, guided by a reaching-definitions analysis, is useful in reducing the number of updates to special registers like instruction and stack pointers.

\item [Copy Propagation:] Redundant copy instructions are eliminated after a reaching-definition analysis by replacing copies with their sources.
As an example, Lines 6 and 7 in Figure ~\ref{fig:normalization-steps} (b) are replaced with a single copy.

\item [Constant Propagation and Folding:] Variables initialized with constants are eliminated, and the constants are written directly at their use sites.
Expressions that use only constants are replaced with the evaluated result.

\item [Common Subexpression Elimination:] Multiple evaluations of the same expression are replaced with a single variable with the one that holds the first computation of that variable.
This transformation is guided by an available-expression analysis to ensure that the value computed by an expression does not change in between computations.
Example~\ref{ex:opt-steps} provides more details about this transformation.

\item [Load-Store Elimination:] Redundant store instructions in pairs of load and
store operations are eliminated.
A pair \texttt{t = load(M); store(M) = t} is deemed redundant if \texttt{t} is
not updated in between the load and the store.
This transformation is guided by a reaching-definition analysis, as the representation is no longer in the static single-assignment format.

\item [Store-Store Elimination:] Two consecutive stores to the same memory location without any reads in between cause the first store instruction to be removed.

\end{description}

\begin{example}
\label{ex:opt-steps}
Figure~\ref{fig:normalization-steps} shows the successive effects of different normalizations.
First, register promotion and redundant write elimination removes multiple instances of \texttt{put}
instructions; hence, converting Figure~\ref{fig:normalization-steps} (a)
into  Figure~\ref{fig:normalization-steps} (b).
Only the last effect of such instructions, in Line \texttt{L23}, is preserved.
Secondly, a round of copy propagation maps Figure~\ref{fig:normalization-steps} (b) onto Figure~\ref{fig:normalization-steps} (c).
Chained sequences of assignments are eliminated, and definitions are
directly used where they are needed.
As an example, variable \texttt{t17}, defined in Line \texttt{L1}, is directly used in Line \texttt{L9}.
Finally, Figure~\ref{fig:normalization-steps} (d) illustrates a usage of common subexpression elimination.
The redundant \texttt{load} on \texttt{M3} in \texttt{L15} is eliminated, and the use of \texttt{t30} in \texttt{L19} is replaced with \texttt{t19} defined in \texttt{L5}.
\end{example}

The combination of canonicalization rules and normalizations substantially simplifies the instructions in the peepholes that are used to produce vectors.
The resultant codes, even when obtained from different instruction sets, tend to be more similar.

\begin{figure}[ht]
    \centering
    \captionsetup{font=small, skip=8pt}
    \begin{minipage}{0.45\linewidth}
        \captionsetup{font=footnotesize, skip=4pt} 
        \centering
        \input{figures/IR_Optimization/5-load-store-elim}
        \caption*{(a) x86} 
    \end{minipage}
    \hspace{0.7em}
    \vrule
    \hspace{0.5em}
    \begin{minipage}{0.45\linewidth}
        \captionsetup{font=footnotesize, skip=4pt} 
        \centering
        \input{figures/IR_Optimization/arm-normalized}
        \caption*{(b) ARM} 
    \end{minipage}
    \caption{Normalized \vexir generated by \optengine for the example shown in Figure ~\ref{fig:vexir-example}.}
    \Description{Normalized \vexir generated by the \optengine for the example shown in Figure ~\ref{fig:vexir-example}.}
    \label{fig:after-normalization}
\end{figure}

\begin{example}
\label{ex:final_optimization}
Figure~\ref{fig:after-normalization} shows the x86 and the ARM versions of the peepholes earlier seen in Figure~\ref{fig:vexir-example}.
As it can be seen, the syntactic gap between these two programs is substantially shorter than the syntactic difference between the two peepholes seen in
Figure~\ref{fig:vexir-example}.
\end{example}
\section{Pre-Training: From Peepholes to Embeddings}
\label{sec:phaseII-embeddings}

The peepholes obtained from the previous step are projected onto an $n$-dimensional Euclidean space as a vector of real numbers. This $n$-d embedding is suitable for geometric, specifically Euclidean distance comparison. Given a normalized peephole, the goal is to learn an embedding such that vectors in close proximity in the Euclidean space are likely to be semantically similar; conversely, if the vectors are distant, the functions are likely to be semantically different. This rationale is illustrated in Figure~\ref{fig:exVectorSpace}. Upon learning such a projection, the problem of binary similarity of functions is reduced to finding the nearest neighbors of the function in the Euclidean space.

\begin{figure}[htb]
\centering
\includegraphics[width=1\linewidth]{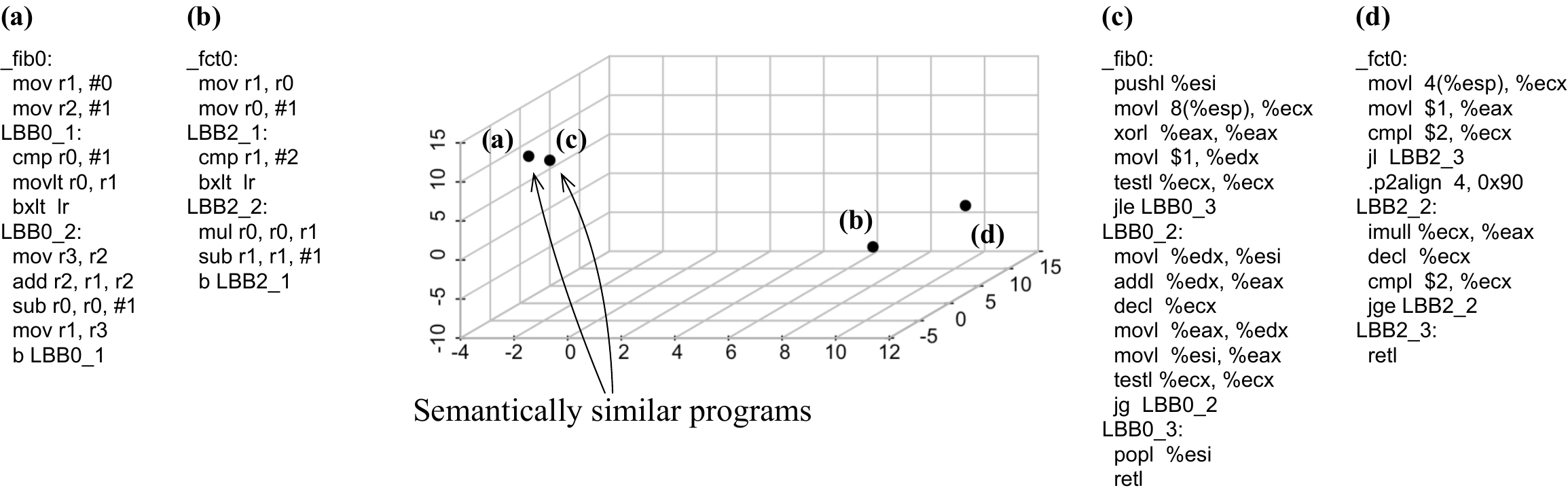}
\caption{An analogy emphasizing the geometric nature of \vexirtovec: similar binary programs are projected onto points that are spatially close.
This example uses a three-dimensional Euclidean space.
The actual implementation of \vexirtovec{} uses an Euclidean space of $128$ dimensions.}
\Description{An analogy emphasizing the geometric nature of \vexirtovec: similar programs are projected onto points that are spatially close.
This example uses a three-dimensional Euclidean space.
The actual implementation of \vexirtovec{} uses an Euclidean space of $128$ dimensions.}
\label{fig:exVectorSpace}
\end{figure}

Learning to represent functions as embeddings involves two steps: lightweight pre-training and fine-tuning.
The first stage, pre-training, consists of learning an initial \textit{task-independent} vocabulary function in an \textit{unsupervised} manner.
The second stage, fine-tuning, consists of constructing the final function embedding using the vocabulary function.

\paragraph{Knowledge Representation via Knowledge Graphs}
In the pre-training step, we learn a vocabulary of embeddings for each entity in the \vexir instruction. This is done by modeling instructions from a corpus of binaries as a simple Knowledge Graph. A knowledge graph~\cite{paulheim2017knowledge} is a graph that captures the relationships between the entities that form the system under analysis.
In our case, such entities are syntactic IR constructs: opcodes, types, values, etc.
Typically, a knowledge graph models \textit{entities} as vertices and the edges as the \textit{relationship} between the neighbors.

Upon obtaining the knowledge graph, the representations of entities and relationships can be derived by using knowledge graph embedding approaches~\cite{knowledge-graph-embedding-survey}.
The effectiveness of knowledge graph embeddings is well-studied in conventional NLP applications.
Additionally, previous work has demonstrated that this effectiveness remains consequential in the context of learning program embeddings~\cite{VenkataKeerthy-2020-IR2Vec}.
In this work, we propose to use knowledge graph embeddings to model the \vexir instructions.

Formally, a knowledge graph $\mathcal{G}$ can be represented as a set of triplets $\langle h, r, t \rangle$, where $h$ and $t$ denote the \textit{head} and \textit{tail} entities connected by a \textit{relation} $r$.
The embeddings of $h$, $r$, and $t$ in an $n$-d space are learned as translations from the head entity $h$ to the tail entity $t$ using the relationship $r$.
Several knowledge graph embedding strategies have been proposed to represent the entities and relations in a continuous $n$-d space~\cite{knowledge-graph-embedding-survey}.
These approaches learn a scoring function that returns a high value if $\langle h, r, t \rangle$ relation holds in $\mathcal{G}$, and a low value otherwise. 
This process results in learning the representations of the triplets (\embed{h} $\in \mathbb{R}^n$, \embed{r} $\in \mathbb{R}^n$ and \embed{t} $\in \mathbb{R}^n$)\footnote{Following previous work, we use \embed{.} to denote an embedding, i.e., a vector in an n-dimensional space.} in an $n$-d space, where the third component of the triplet could be identified given the other two components using vector operations.
We use a Translational Embedding model TransE~\cite{transe-Bordes:2013:TEM:2999792.2999923} to learn the representations of entities of \vexir.

\subsection{Mapping \vexir as Entities and Relations}
After normalization, the variables and constants are masked out with abstract placeholder tokens such as VAR and CONST.
We decompose the normalized \vexir instructions into entities and relations. Such code entities, in turn, are features that characterize program instructions.
Each entity is represented as a triplet formed by an {\it opcode}, a {\it kind}, and a {\it value}.
In this work, we recognize three \textit{kinds} of code entities, which we list as follows:

\begin{description}
\item [Arg:] the entity describes the argument of an instruction; e.g., $\langle \mathit{add}, \mathbf{Arg}_1, \mathit{VAR} \rangle$ indicates that the first argument of an addition operation is a variable (instead of a constant or a memory address, for instance).
\item [Type:] the entity describes the type of instruction; e.g., $\langle \mathit{add}, \mathbf{Type}, \mathit{INT} \rangle$ indicates that an addition instruction operates on integer values.
\item[Next:] the entity describes the successor relation between instructions; e.g.: $\langle \mathit{add}, \mathbf{Next}, \mathit{store} \rangle$ indicates that a store instruction follows the add instruction.
\end{description}

These three categories capture the relationships between two entities at a time, resulting in $\langle h, r, t \rangle$ triplets.
Such triplets are collected over a corpus of \vexir functions and are fed as an input to an ML model to learn the embeddings of the entities in the corpus. Such representation of entities results in a vocabulary.

\begin{example}
Figure~\ref{fig:peephole2Vex} shows the code entities derived from two instructions.
As it can be observed, each instruction is decomposed into multiple $\langle h, r, t \rangle$ triplets using opcodes, kind (type, arg or next), and values.
The decomposition captures the possible relationships defined by \textit{kind}.
For instance, entities of kind ``type'' capture the relation between an operation (ex.: a load or addition) and the type of the value that the operation manipulates (ex.: an integer).
\end{example}
\label{ex:peep2vec}

\begin{figure}[htb]
    \vspace{-0.5\baselineskip}
    \centering
    \includegraphics[width=0.95\linewidth]{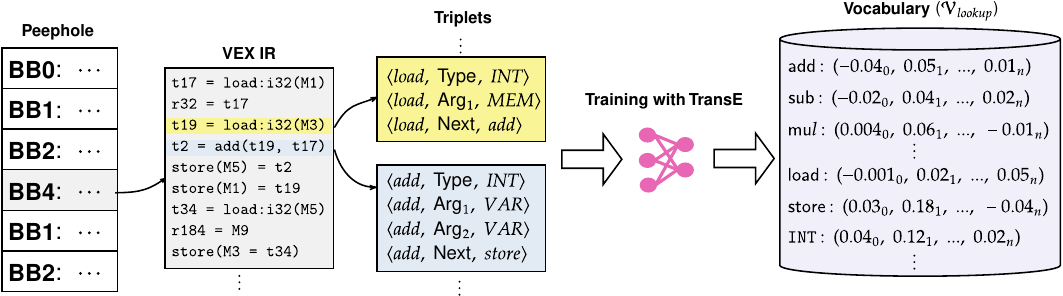}
    \caption{\vexir instructions from a corpus of binaries are decomposed into entities and relations to obtain triplets. These triplets are used to train the TransE model to obtain a vocabulary. 
    The vocabulary contains the $n$ dimensional representations of each of the entities of \vexir in the train set.}
    \Description{\vexir instructions from a corpus of binaries are decomposed into entities and relations to obtain triplets.}
    \label{fig:peephole2Vex}
    \vspace{-0.5\baselineskip}
\end{figure}

\subsection{The Vocabulary Function}
\label{sub:lookup}

A key element in \vexirtovec{} is the vocabulary function, $\mathcal{V}_{lookup}$.
This function maps code entities to vectors.
We use a Translational Embedding model TransE~\cite{transe-Bordes:2013:TEM:2999792.2999923} to learn the representations of entities of \vexir. The model, in this case, learns the relationship $\embedEqn{h} + \embedEqn{r} \approx \embedEqn{t}$, by minimizing the energy function, $E(h, r, t) = \left \Vert \embedEqn{h} + \embedEqn{r} - \embedEqn{t}\right \Vert_2^2$.
The $\approx$ notation indicates that the embedding of $h$ is closer to the embedding of $t$ upon adding the embedding of the relation $r$.
As relation $r$ is used as a translation from the entity $h$ to another entity $t$, this approach is called TransE.
We train a model to minimize $E$ using a margin-based ranking loss $\mathcal{L}$ given by the following equation.

\begin{equation}
\label{eq:lookup}
\mathcal{L} = \sum_{\langle h, r, t \rangle \in \mathcal{G}} \sum_{\langle h', r, t' \rangle \not\in \mathcal{G}} \mbox{max} \left( 0, \gamma + d(\embedEqn{h}+\embedEqn{r}, \embedEqn{t}) - d(\embedEqn{h'}+\embedEqn{r}, \embedEqn{t'}) \right)
\end{equation}

Equation~\ref{eq:lookup} ranges on a {\it Knowledge Graph} $\mathcal{G}$ created from the corpus of \vexir instructions described earlier.
To minimize the loss function $\mathcal{L}$, we map code entities $\triplet{h}{r}{t}$ to vectors that approximate the triangle equality; that is, $\embedEqn{h} + \embedEqn{r} \approx \embedEqn{t} \implies d(\embedEqn{h}+\embedEqn{r}, \embedEqn{t}) \approx 0$, where $d$ is Eucledian distance.
Likewise, if $\langle h, r, t\rangle \notin \mathcal{G}$, then it follows that $\embedEqn{h} + \embedEqn{r} \not\approx \embedEqn{t} \implies d(\embedEqn{h'} +\embedEqn{r}, \embedEqn{t}') > 0$.
Figure~\ref{fig:derivativeBasedProgramming} provides an intuition on this objective function.

\begin{figure}[htb]
    \centering
    \includegraphics[width=0.95\linewidth]{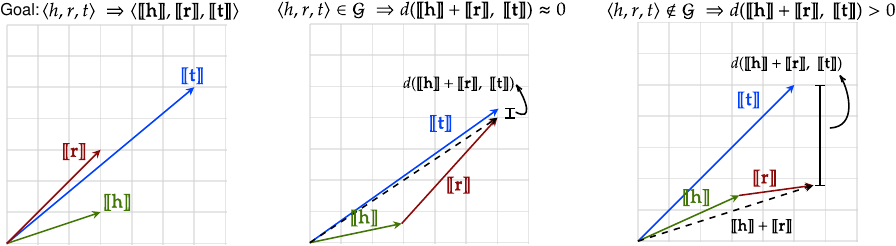}
    \caption{The construction of the vocabulary function $\mathcal{V}_{lookup}$ involves mapping entities $\langle h, r, t\rangle$ into triplets of vectors in a way that minimizes the distance between the vectors $\embedEqn{t}$ and
    $\embedEqn{h} + \embedEqn{r}$ whenever the triplet $\langle h, r, t\rangle$ belongs to the training set.}
    \Description{The construction of the vocabulary function $\mathcal{V}_{lookup}$ involves mapping entities $\langle h, r, t\rangle$ into triplets of vectors in a way that minimizes the distance between the vectors $\embedEqn{t}$ and
    $\embedEqn{h} + \embedEqn{r}$ whenever the triplet $\langle h, r, t\rangle$ belongs to the training set.}
    \label{fig:derivativeBasedProgramming}
    \vspace{-0.5\baselineskip}
\end{figure}

Additionally, $\mathcal{L}$ ensures that $d(\embedEqn{h}+\embedEqn{r}, \embedEqn{t})$ and $d(\embedEqn{h'}+\embedEqn{r}, \embedEqn{t'})$ are at least separated by a margin $\gamma$.
At the end of this learning process, we obtain a vocabulary $\mathcal{V}_{lookup}$ containing the representation of each entity of \vexir in an $n$-d Euclidean space $\mathbb{R}^n$. 
In the next section we shall explain how we derive the embedding of a program function out of the embedding of every entity that makes up this function.

\subsection{Mapping Functions to Vectors}
\label{sub:map}

As described in Section~\ref{sec:phaseI-processing}, each program function is decomposed into a set of peepholes after Algorithm~\ref{algorithm:random-walk}.
Each peephole represents a sequence of basic blocks.
Basic blocks are sequences of \vexir instructions.
Each one of these instructions can be decomposed into the following entities: Opcode, Type, and Arguments.
Thus, to construct the \vexir representation of a function, we combine the representation of these entities in the function's peepholes.
As seen in Section~\ref{sub:lookup}, the representation of each entity is obtained using the $\mathcal{V}_{lookup}$ map.

If a \vexir instruction $l$ is of format $[O^{(l)} T^{(l)} A_1^{(l)} A_2^{(l)} \cdots A_n^{(l)}]$, where $O^{(l)}$, $T^{(l)}$ and $A_i^{(l)}$ represent the opcode, type, and its $i^{th}$ argument, then its representation is obtained as $\embedEqn{O}[^{(l)}] = \vocab{O^{(l)}}, \embedEqn{T}[^{(l)}] = \vocab{T^{(l)}}$, and $\embedEqn{A_i}[^{(l)}] = \vocab{A_i^{(l)}}$. 
The instruction is embedded as a combination of the embeddings of these entities that form it:

\begin{equation}
\label{eq:entities}
    \llangle \embedEqn{O}[^{(l)}], \embedEqn{T}[^{(l)}], \embedEqn{A}[_1^{(l)}] + \embedEqn{A}[_2^{(l)}] + \cdots + \embedEqn{A}[_n^{(l)}] \rrangle
\end{equation}

If the function $F$ is decomposed into a set of normalized peepholes $\mathcal{P} = \{\pi_1, \pi_2, \ldots\}$, the representation of a peephole $\pi \in \mathcal{P}$ containing a set of instructions $I^\pi$ is computed as the pointwise addition of the representation of individual entities of its instructions in $I^\pi$.
\begin{equation}
     \embedEqn{\pi} = \left\langle \sum_{I \in I^\pi} \embedEqn{O}[^I], \sum_{I \in I^\pi} \embedEqn{T}[^I], \sum_{I \in I^\pi} \embedEqn{A}[^I] \right\rangle = \left\langle \embedEqn{O}[^\pi], \embedEqn{T}[^\pi], \embedEqn{A}[^\pi]  \right\rangle 
\end{equation}

The representation $\embedFInit$ of a function $F$ is computed by
summing the representations of entities across different peepholes:

\begin{equation}
    \embedFInit = \left \langle \sum_{\pi \in \mathcal{P}}\embedEqn{O}[^\pi], \sum_{\pi \in \mathcal{P}}\embedEqn{T}[^\pi], \sum_{\pi \in \mathcal{P}}\embedEqn{A}[^\pi]   \right\rangle 
\end{equation}

Therefore, $\embedFInit$, the embedding vector representing function $F$, is produced as a linear combination of the embeddings of the peepholes produced via Algorithm~\ref{algorithm:random-walk}.
This process of mapping functions to embeddings, parameterized by the $\mathcal{V}_{lookup}$ function, is summarized in Algorithm~\ref{algorithm:vec_gen}.

\begin{algorithm}
\DontPrintSemicolon
\SetAlgoNoLine

\textbf{Inputs} \\
\begin{itemize}
\item $\mathcal{P}$: set of normalized peepholes of the function $F$
\end{itemize}

\textbf{Output} \\
\begin{itemize}
\item $\embedFInit = \langle \embedEqn{O}, \embedEqn{T}, \embedEqn{A} \rangle$: A tuple of $n$ dimensional vectors corresponding to opcodes, types and arguments.
\end{itemize}

$\embedEqn{O}, \embedEqn{T}, \embedEqn{A} = \vec{0}$   // Initialize vectors to zero \\
\For{peephole $\pi \in \mathcal{P}$}{
    \For{instruction $l \in I_\pi$}{
        \embed{O} = \embed{O} + $\vocab{O^{(l)}}$\\
        \embed{T} = \embed{T} + $\vocab{T^{(l)}}$\\
        \For{Arg $A_i^{(l)} \in [A_1^{(l)} A_2^{(l)} \cdots A_n^{(l)}]$}{
            \embed{A} = \embed{A} + $\vocab{A_i^{(l)}}$\\
        }
    }
}

\Return $\langle \embedEqn{O}, \embedEqn{T}, \embedEqn{A} \rangle$\ as\ $\embedFInit$
\caption{Mapping functions to vectors.}
\label{algorithm:vec_gen}
\end{algorithm}

\paragraph{Function calls}
We also consider the function calls while obtaining the function vector ($\embedFInit$).
We compute the call graph of the binary. For each function call, the embedding of the callee is first computed and is used in the call site in place of the call instruction. 
This way of substituting the callee embedding in the call site has a similar effect to that of the inlining optimization performed during the compilation process. 
This approach also provides additional information while handling wrapper functions. 
We follow the normal embedding process for representing call instructions if the definition of the callee is not available in the current binary.

\section{Fine-Tuning: Training Embeddings for Computing Similarity}
\label{sec:phaseIII-model}

The combined application of Algorithms~\ref{algorithm:random-walk} and~\ref{algorithm:vec_gen} on a function yields its initial representation $\embedFInit$. 
As seen in Algorithm~\ref{algorithm:vec_gen}, the initial function representation $\embedFInit$ is derived from the vocabulary function $\mathcal{V}_{lookup}$.
The construction of $\mathcal{V}_{lookup}$ depends only on the dataset corpus and is trained in an unsupervised manner; thus, it does not depend on any particular task to be solved by \vexirtovec.
Therefore, to be effectively used, the representation of $\embedFInit$ must be fine-tuned to obtain the actual \vexirtovec{} embeddings to solve particular tasks, such as binary diffing or searching.
This section explains this process of fine-tuning.

\subsection{\vexnet - Siamese Network with EAN modules}
\label{sub:vexnet}

This process of fine-tuning depends on a model that learns to combine the entities $\llangle \embedEqn{O}, \embedEqn{T}, \embedEqn{A} \rrangle$, along with metadata of the function like strings and external library calls.
This resultant embedding is projected on an embedding space in such a way that semantically similar functions are spatially closer than dissimilar functions.
We model this process of fine-tuning using the Siamese network~\cite{koch2015siamese} seen in Figure~\ref{fig:finetuning-network}.
This model is constructed using two identical Entity-Attention Network (EAN) modules.
The rest of this section explains this model, which we call \vexnet.

\begin{figure}[htb]
    \captionsetup[subfigure]{font=footnotesize,labelfont=footnotesize}
    \centering
    \subfloat[\footnotesize{Entity-Attention Network}]{{\includegraphics[width=0.55\linewidth, valign=M]{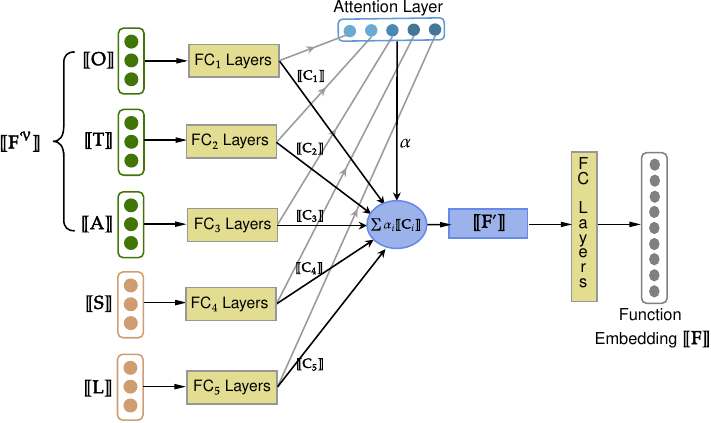} }}%
    \qquad
    \subfloat[\footnotesize{\vexnet - Siamese Configuration}]{{\includegraphics[width=0.37\linewidth, valign=M]{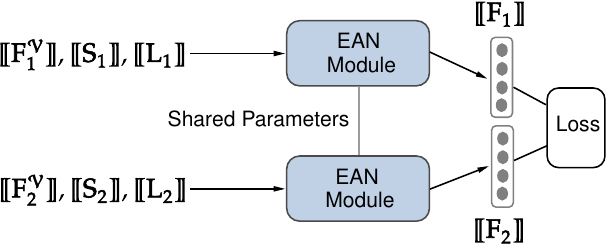} }}
    
    \caption{\vexnet - Siamese network used for fine-tuning the embeddings. This network consists of an Entity-Attention module that learns to combine $\langle \embedEqn{O}, \embedEqn{T}, \embedEqn{A} \rangle$ and $\embedEqn{S}, \embedEqn{L}$ to obtain a function embedding vector $\embedFFinal$ to learn similarity metric.}
    \Description{\vexnet - Siamese network Configuration}
    \label{fig:finetuning-network}
\end{figure}

A Siamese network is a neural network configuration that contains two identical subnetworks that share the same learning parameters and weights~\cite{koch2015siamese}.
We use the EAN module (Section~\ref{sub:ean}) in the Siamese configuration to learn an embedding $\embedFFinal$ that differentiates similar and dissimilar binary functions.
Thus, our training set consists of a collection of pairs of similar and dissimilar functions.
In other words, pairs of binary codes obtained from the same source but compiled with different compilers, optimization levels, or architecture targets constitute similar pairs.
Pairs that do not originate from the same source, in turn, form a set of dissimilar samples.
Normalized temperature-scaled cross-entropy loss (NT-Xent)~\cite{chen2020ContrastiveLearning} is used to train the model.
The model is trained end-to-end to fine-tune the initial embeddings $\embedFInit$ derived from the vocabulary to obtain the final \vexirtovec embeddings $\embedFFinal$ of the function.
This training process ensures that the embeddings of similar functions are positioned closer in the embedding space and the dissimilar functions are pushed farther apart.

\subsubsection{Obtaining strings and library call representations \embed{S} and \embed{L}}
\label{sss:strings}

Strings and calls to the external library methods from \textit{libc} and \textit{libstdc++} seldom change with the variations in compilers, optimizations, and the target architecture.
Hence, they can act as valuable additional information in the binary similarity analysis.
Our approach leverages this information to enhance the effectiveness of the solution.
To incorporate this data into our model, each string used in a function is represented using embeddings generated by a pre-trained model.
In this work, we use a lightweight fastText model~\cite{bojanowsk-2017-fasttext} to obtain embeddings in $\mathbb{R}^{n}$. The embeddings for each string are summed to produce a single representation for all strings used in the function, denoted as \embed{S}. A similar process is applied to represent the library calls. The function names of these calls are treated as string tokens, and their representations are obtained using the fastText model. As with the strings, the representation of library calls, \embed{L}, is obtained by summing the individual embeddings.

\subsection{Entity-Attention Network}
\label{sub:ean}

The EAN module (Figure~\ref{fig:finetuning-network}) is a simple feed-forward neural network with a global attention~\cite{bahdanau2015globalAttention} layer.
This global attention layer combines the entities of the initial embeddings $\embedFInit = \llangle \embedEqn{O}, \embedEqn{T}, \embedEqn{A} \rrangle$ in $\mathbb{R}^{3n}$ with the strings \embed{S} and library calls \embed{L} in $\mathbb{R}^{2n}$, projecting the resulting vector onto $\mathbb{R}^{n}$.
It uses this projection to obtain the final function embedding $\embedFFinal$ in  $\mathbb{R}^{n}$.
Embeddings of the entities and metadata are passed through a set of independent, Fully Connected (FC) layers. These layers process the raw input embeddings derived from the vocabulary and metadata into context vectors ($\embedEqn{C}[_{1}], \embedEqn{C}[_{2}], \embedEqn{C}[_{3}], \embedEqn{C}[_{4}], \embedEqn{C}[_{5}]$).
The final context vector \embed{C} is produced as follows:

\begin{equation*}
    \embedEqn{C}[_{1}] = \text{FC}_1(\embedEqn{O});\quad \embedEqn{C}[_{2}] = \text{FC}_2(\embedEqn{T});\quad \embedEqn{C}[_{3}] = \text{FC}_3(\embedEqn{A});
    \quad \embedEqn{C}[_{4}] = \text{FC}_4(\embedEqn{S});
    \quad \embedEqn{C}[_{5}] = \text{FC}_5(\embedEqn{L})
\end{equation*}

\begin{equation}
\label{eq:c_emb}
    \embedEqn{C} = \left[ \embedEqn{C}[_{1}], \embedEqn{C}[_{2}], \embedEqn{C}[_{3}], \embedEqn{C}[_{4}], \embedEqn{C}[_{5}] \right]
\end{equation}

The attention module learns to combine these context vectors into a single vector by learning the weights of each of the contexts.
An attention vector $\mathbf{u} \in \mathbb{R}^{n}$ is learned during the training and is used to compute the attention weights. 
Attention weight ($\alpha_{i}$) for each of the contexts ($\embedEqn{C}[_{i}]), 1 \leq i \leq 5$ is computed as the dot-product of the context and attention vector, followed by a softmax function over all the attention weights.

\begin{equation}
\alpha_i = \frac{\exp(\embedEqn{C}[_{i}]^T \cdot \mathbf{u})}{\sum_{j=1}^{5} \exp(\embedEqn{C}[_{j}]^T \cdot \mathbf{u})}
\end{equation}
This ensures the attention weights $\alpha_{i} \in (0, 1)$ and $\sum_{i=1}^{5} \alpha_{i} =1$. 
Finally, the attention weights are used to aggregate the individual context vectors into a single vector in $\mathbb{R}^{n}$ according to Equation~\ref{eq:FPrime}.

\begin{equation}
\label{eq:FPrime}
\embedFFinal[^\prime] = \sum_{i=1}^{5} \alpha_i \embedEqn{C}[_i]
\end{equation}  

This attention mechanism helps the network to automatically identify the relative importance of entities of the function.
In this sense, notice that we differ from previous work, which relied on a fixed heuristic~\cite{VenkataKeerthy-2020-IR2Vec} to weigh the relative importance of the different entities that make up a function.
The aggregated function vector ($\embedFFinal[^\prime]$) is then passed through another set of fully connected layers to obtain the final function embedding $\embedFFinal$.
This final embedding vector is a compact and more informative representation that captures the similarity metric of the binary function.
We call this embedding the \vexirtovec{} vector that represents the target function, as Definition~\ref{def:vexir2vec} emphasizes.

\begin{definition}[\vexirtovec]
\label{def:vexir2vec}
If $\embedFInit$ is the initial function representation derived from the vocabulary $\mathcal{V}$ (Algorithm~\ref{algorithm:vec_gen}) on a set of peepholes $\mathcal{P}$
(Algorithm~\ref{algorithm:random-walk}), then the embedding obtained from the \vexnet, $\embedFFinal$ is the final \vexirtovec representation of the function $F$.
\end{definition}

\subsection{Binary Similarity Tasks}

Upon training the \vexnet model, the final function embeddings \embed{F^\prime} can solve binary similarity tasks like diffing and searching.
Definition~\ref{def:binsim-tasks} formalizes these tasks.

\begin{definition}[Binary Similarity Tasks]
\label{def:binsim-tasks}
A binary similarity task is a problem involving two collections of binary programs produced from the same or different source codes, albeit using different compilation settings.
In this paper, we recognize two types of binary similarity tasks:

\begin{description}
\item [Diffing:] $\mathcal{M}_{\mathit{diff}}(A, B)$, where $A = \{a_1, a_2, \ldots, a_m\}$ and $B = \{b_1, b_2, \ldots, b_n\}$ are the two sets of \vexirtovec vectors for the functions in binaries $A$ and $B$, respectively.
$\mathcal{M}_{\mathit{diff}}$ produces a list $L_{\mathit{diff}}$ of $\mbox{min}(m, n)$ pairs $(a_i, b_j)$.
Each $a_i, 1 \leq i \leq m$ and each $b_j, 1 \leq j \leq n$ appear only once in $L_{\mathit{diff}}$.
The goal of $\mathcal{M}_{\mathit{diff}}$ is to match every function in $A$ to its equivalent in $B$.

\item [Searching:] $\mathcal{M}_{\mathit{search}}(A, b)$, where $A = \{a_1, a_2, \ldots, a_m\}$  is a collection of $m$ \vexirtovec vectors obtained from different set of binaries, and $b$ is a \vexirtovec vector.
$\mathcal{M}_{\mathit{search}}$ produces one vector $a_i, 1 \leq i \leq m$.
The goal of $\mathcal{M}_{\mathit{search}}$ is to find, within $A$, a vector $a_i$ representing a function that solves the same task as the function that $b$ represents.
\end{description}

\end{definition}

$\mathcal{M}_{\mathit{diff}}(A, B)$ assumes that each $a_i, 1 \leq i \leq m$, comes from a function extracted from a binary program $P_A$, being derived via the combination of Algorithms~\ref{algorithm:random-walk} and \ref{algorithm:vec_gen}, following Definition~\ref{def:vexir2vec}.
Similarly, each $b_i, 1 \leq i \leq n$, comes from a function extracted from a binary program $P_B$.
We assume that $P_A$ and $P_B$ are binary programs derived from the same source code. 
Whereas, $\mathcal{M}_{\mathit{search}}(A, b)$ assumes that each $a_i, 1 \leq i \leq m$ comes from the functions extracted from a collection of different binary programs. 
And, the query function $b$ is searched to retrieve similar functions among the pool of functions in $A$.

\begin{example}[Diffing]
    Let $P_s$ be a source program, and let $P_{ARM}$ and $P_{x86}$ be binary versions of $P_s$, compiled for ARM and x86.
    If we let $A$ be the set of all the \vexirtovec vectors extracted from $P_{ARM}$, and $B$ be the set of all the \vexirtovec vectors extracted from $P_{x86}$, then  $\mathcal{M}_{\mathit{diff}}(A, B)$ tries to match up the binary functions in $P_{ARM}$ and $P_{x86}$.
\end{example}

\begin{example}[Searching]
    Let $P_s$ be a source program, and let $P_{O0}$, $P_{O1}$, $P_{O2}$ and $P_{O3}$ be binary versions of $P_s$, compiled with different optimization levels. 
    If we let $A$ be the collection of \vexirtovec vectors extracted from a corpus of binaries including all functions from $P_{O1}$, $P_{O2}$ and $P_{O3}$, as well as other from unrelated programs $P'$.
    And, let $b$ be \vexirtovec representation of some function in $P_{O0}$, then $\mathcal{M}_{\mathit{search}}(A, b)$ searches for functions performing same task as $b$ in $A$.
\end{example}

\section{Experimental Setup}
\label{sec:experimental-setup}

This section describes the experimental setup used in the evaluation discussed in Sections~\ref{sec:performance-evaluation} and \ref{sec:ablation}.

\subsection{Dataset}
\label{sub:dataset}

The dataset used in this paper comes from the following projects: Findutils~\cite{findutils}, Diffutils~\cite{diffutils}, Coreutils~\cite{coreutils}, cURL~\cite{curl}, Lua~\cite{lua}, PuTTY~\cite{putty}, and Gzip~\cite{gzip}.
Binaries are generated by varying the compilation configuration, which includes architecture (x86 or ARM), compiler (Clang or GCC), compiler version (six versions per compiler), and optimization level (\texttt{-O0}, \texttt{-O1}, \texttt{-O2}, and \texttt{-O3}).
Thus, in total, we use $2 \times 2 \times 6 \times 4 = 96$ compiler configurations.
Experiments are performed on stripped binaries, which do not contain debug and symbol information. Stripping is done using the GNU \texttt{strip} utility. However, as we explain in Sections~\ref{sss:gt_diffing} and~\ref{sss:gt_searching}, we use unstripped binaries with debug information to aid in obtaining ground truth and identifying the same function across binaries. The size of individual unstripped binary files ranges from $9.2$KB to $7.6$MB.

We use $70\%$ of the binaries from Findutils, Diffutils, and Coreutils for training and the remaining $30\%$ for testing. All binaries from the other projects are used exclusively for testing and studying the generalizability of our approach. The training set includes binaries compiled using $6$ different versions of Clang (V$6.0$, $8.0.1$, $9.0.1$, $10.0.1$, $11.1.0$, and $12.0.1$) and $6$ different versions of GCC (V$6.4.0$, $7.5.0$, $8.4.0$, $9.4.0$, $10.5.0$, and $11.4.0$). In total, the training set contains $10K$ binaries with $1.5M$ functions.
The test set consists of about $5.5K$ binaries with $1.2M$ functions.
These binaries are produced with the same compiler configurations, except that to reduce the time required for experiments, we use only three versions of each compiler: Clang V$6.0$, $8.0.1$, and $12.0.1$, and GCC V$6.0$, $8.0$, and $10.0$. These test-set functions are used in the experiments described in Section~\ref{subsec:exp-diffing} and Section~\ref{subsec:exp-searching}. Appendix~\ref{appendix:dataset-desc} summarizes the total number of projects, files, and functions, along with their sizes.

We use \angr (V9.2.96) to disassemble the binaries and lift them to obtain the \vexir representation. Embeddings are constructed from the disassembled functions by creating peepholes following Algorithm~\ref{algorithm:random-walk}, setting the maximum length of peepholes ($k$) to $72$ and the minimum number of visits per basic block ($c$) to $2$. The resulting peepholes are normalized as described in Section~\ref{sub:normalization}, and embeddings are obtained using the approaches described in Section~\ref{sec:phaseI-processing} and Section~\ref{sec:phaseII-embeddings}.

\subsection{Training}
\label{sub:training}

Training is the process of building a function that maps sets of peepholes $\mathcal{P}$
to vectors $\embedFFinal$, as explained in Sections~\ref{sec:phaseII-embeddings} and~\ref{sec:phaseIII-model}.

\subsubsection{Pre-Training to obtain $\mathcal{V}_{lookup}$}
\label{sss:train_vocabulary}

Section~\ref{sub:lookup} defines the vocabulary function $\mathcal{V}_{lookup}$, which maps IR entities to 128-dimensional vectors. Each dimension is a 64-bit floating-point number. In this paper, $\mathcal{V}_{lookup}$ was learned from triplets extracted from two benchmark suites: SPEC CPU2006 and SPEC CPU2017~\cite{spec17-Bucek:2018:SCN:3185768.3185771}. 
These benchmarks were compiled using the Clang V6.0, V8.0.1, and V12.0.1, and GCC V6.0, V8.0, and V10.0 with different optimization levels (\texttt{O[0-3]}) targeting x86 and ARM.

We used \angr to extract triplets and an open-source implementation of TransE~\cite{han2018openke} to train the model. The dataset contained approximately $72M$ triplets, from which we identified $145$ unique entities and $10$ unique relations. The model converged after about 900 epochs using the Adam optimizer with a learning rate of $0.002$. The margin $\gamma$ was set to $3$, and the batch size was $256$. These hyperparameters were chosen based on tuning the validation loss.

\subsubsection{Fine-Tuning - Training \vexnet to obtain $\embedFFinal$}
\label{sss:train_final_emb}

As explained in Section~\ref{sub:dataset}, to train the \vexnet model (see Section~\ref{sec:phaseIII-model}), we use a dataset formed by Findutils, Diffutils, and Coreutils.
The inputs for the \vexnet model are the initial function embedding $\embedFInit = \llangle \embedEqn{O} \in \mathbb{R}^{1 \times 128}, \embedEqn{T} \in \mathbb{R}^{1 \times 128}, \embedEqn{A}  \in \mathbb{R}^{1 \times 128} \rrangle$ obtained from the vocabulary $\mathcal{V}_{lookup}$, along with the embedding of strings ($\embedEqn{S} \in \mathbb{R}^{1 \times 100}$) and external library calls ($\embedEqn{L} \in \mathbb{R}^{1 \times 100}$) obtained using the fastText model.

A single fully connected layer with SiLU activation~\cite{Hendrycks2016Silu} processes $\embedFInit, \embedEqn{S}, \embedEqn{L}$ (Eq.~\ref{eq:c_emb}).
Before passing to the model, the inputs are $L_2$ normalized\footnote{The $L_2$-norm is the vector norm, e.g., the square root of the sum of the squares of all the dimensions of the vector.}.
Internally, the fully connected layer converts the inputs into context vectors of $180$ dimensions each. These context vectors are attended to by the attention layer to obtain $\embedEqn{F}[^\prime]$ in $180$ dimensions (Eq.~\ref{eq:FPrime}).
$\embedEqn{F}[^\prime]$ is then processed by the final fully connected layer to obtain $\embedFFinal$ in $128$ dimensions. 

Deep metric learning models can suffer from \textit{collapsing}, where the model learns to embed all or most of the inputs very close to each other at a negligible distance, rendering them indistinguishable~\cite{tian2024collapseaware, xuan2020hard, schroff2015facenet}. To overcome this phenomenon, we use batch easy hard mining~\cite{xuan2020EasyHardMining} to select positive and negative samples for training and employ dropout ($0.02$) and batch normalization~\cite{pmlr-v37-ioffe15-batchnorm} as regularizers. 
We train the model with Adam optimizer with default parameters.

Other hyperparameters like batch size, learning rate, gamma (for Adam), and temperature (for NT-Xent loss) are tuned using RayTune~\cite{Liaw2018RayTune} to find the best-performing values.
For hyperparameter tuning, we use the ASHA scheduler~\cite{li2018asha} with Optuna search~\cite{akiba2019Optuna} with default parameters to select the best model based on the Mean Average Precision (MAP) score on the validation set.
We use an Ubuntu server with Intel Gold $6142$ processors ($32$ threads) and a V100 GPU ($32$ GB) for hyperparameter tuning. Each trial is tuned until convergence, allowing a maximum of $200$ epochs. This process results in the best model with a batch size of $4064$, a gamma of $0.817$, and a temperature of $0.05$. 
Training a single trial is lightweight.
For instance, training takes about $5$--$8$ seconds per epoch on an Ubuntu workstation with Xeon(R) W-1390P processor with $16$ threads, $64$ GB RAM, and a $10$ GB variant of an RTX $3080$ GPU.
In contrast, training SAFE and BinFinder on the same machine takes $\approx 1.5$ hours and $\approx 7.5$ hours per epoch, respectively.

\subsection{Baselines}

We compare \vexirtovec{} with five different tools: BinDiff~\cite{bindiff}, DeepBinDiff~\cite{duan2020deepbindiff}, SAFE~\cite{massarelli2019safe}, Histograms of Opcodes~\cite{Damasio23}, and BinFinder~\cite{Qasem2023Binfinder-AsiaCCS}.

\paragraph{BinDiff}
BinDiff extracts the control flow graph (CFG) information of functions in a binary. It compares the number of blocks, edges, and calls between functions in the source and target binaries to obtain an initial match. BinDiff identifies similar function pairs and provides a similarity score and confidence value. It uses disassemblers like IDA Pro~\cite{IDAPro} and Ghidra~\cite{Ghidra} to extract relevant information and convert it to the BinExport format. For our experiments, we use BinDiff (V7) to generate the BinExport files using Ghidra (V9.2.3) with the BinExport plugin (V12).

\paragraph{DeepBinDiff}
DeepBinDiff (DBD) uses an unsupervised learning technique to obtain embeddings from the control flow graph (CFG) of a function. A greedy matching algorithm is then used to provide block-level diffing between two binaries. We use the code provided by the authors of DeepBinDiff for our experiments. To perform function-level diffing, we follow the approach used by Codee~\cite{jia2021codee}, where two functions are considered similar if at least one pair of blocks between them is predicted to be similar.

\paragraph{SAFE}
SAFE uses Radare~\cite{Radare} to disassemble binary files and trains a word2vec model for instruction-to-embedding mapping. This mapping is used to generate an embedding for each function. SAFE does not process functions if Radare returns instructions with invalid opcodes. We skip such instructions and allow SAFE to construct the function embedding from the remaining instruction sequence.
We use the code and datasets provided by the authors of SAFE and the pre-trained model for the x86 experiments. However, the public distribution of SAFE does not include a pre-trained model for the ARM architecture. Therefore, we train the model using the cross-architecture dataset from SAFE to create a model for all our experiments involving ARM binaries.

\paragraph{Histograms of Opcodes}
In their recent study, \citet{Damasio23} demonstrated that representing programs using histograms of opcodes is as effective as other source code embedding techniques. They showed that using opcode frequencies of LLVM IR~\cite{LLVM-LangRef} as input features can effectively identify similar codes, even when compiler optimizations and obfuscation transformations are applied.
We implemented this technique on binaries using \angr and consider it as a baseline for comparison.
We create a $144$-dimensional feature vector, with each dimension representing the frequency of a specific opcode in \vexir of the function. We then train the \vexnet model using these histogram representations on our training dataset. The model's hyperparameters were fine-tuned based on validation loss, similar to the process described in Section~\ref{sss:train_final_emb}. 
The best model obtained through this process is then used for evaluation. In our experiments, we refer to this technique as OPC.

\paragraph{BinFinder}
BinFinder uses a similar setup as SAFE. We implemented the necessary feature extraction using \angr as described in the paper to create a functional setup. We trained the BinFinder model on our training set using the provided training scripts and then evaluated the model on our test set for comparison.

\section{Performance Evaluation}
\label{sec:performance-evaluation}

This section evaluates the performance of \vexirtovec, in terms of accuracy and scalability, on the two binary similarity experiments formalized in Definition~\ref{def:binsim-tasks}: function diffing and searching.
This evaluation happens in an adversarial setting, involving binaries produced using different compilation configurations, as explained in Section~\ref{sub:training}.
The goal of this evaluation is to provide answers to the following research questions:

\begin{enumerate}[label=(RQ\arabic*)]
    \item \labelText{}{RQ1} How well does our approach perform on the binary diffing task, withstanding the differences arising out of different architectures, compilers, optimizations, and obfuscations? (Section~\ref{subsec:exp-diffing})
    \item \labelText{}{RQ2} How effective is our approach in searching and retrieving a similar function from a large pool of functions varying in compilation configurations? (Section~\ref{subsec:exp-searching})
    \item \labelText{}{RQ3} How well does the vocabulary capture the semantic information from the input \vexir? (Section~\ref{sec:vocab-eval})
    \item \labelText{}{RQ4} How scalable is our proposed tool in comparison with the other approaches? (Section~\ref{sub:scalability})
\end{enumerate}

\subsection{RQ1: Binary Diffing}
\label{subsec:exp-diffing}

We evaluate our approach on the binary diffing task at the function level, as described in Section~\ref{def:binsim-tasks} and compare it with different baselines to answer \hyperref[RQ1]{RQ1}.

\subsubsection{Ground truth}
\label{sss:gt_diffing}

The ground truth for the binary diffing experiment consists of function pairs from stripped source and target binaries, corresponding to the same source code function compiled using different configurations. To identify functions from the same source code, we first match functions from stripped binaries with their unstripped equivalents using their function addresses. Then, we obtain the debugging and symbol information from the unstripped functions. A pair of functions from the stripped source and target binaries with the same source filename and function name are considered identical and added to the ground truth.

\begin{example}
To create a ground truth function pair between the \texttt{find} binaries compiled with x86-Clang12-O0 and x86-GCC10-O2, we first match the function addresses in the stripped and unstripped binaries to associate the source filename and function name from the unstripped function with the stripped function. This process is performed separately for each of the \texttt{find} binaries compiled with x86-Clang12-O0 and x86-GCC10-O2. Next, we pair functions from the stripped x86-Clang12-O0 binary with those from the x86-GCC10-O2 binary by checking for matching source filenames and function names. When a match is found, we identify two functions derived from the same source code and add this pair to the ground truth.
\end{example}

\subsubsection{Metrics}
\label{sss:metrics_diffing}

We use precision, recall, and F1 score as the primary evaluation metrics for the binary diffing task.
\begin{itemize}
    \item True Positive ($TP$): number of pairs of functions correctly predicted in the ground truth.
    \item False Positive ($FP$): number of pairs of functions predicted as similar but not present in the ground truth.
    \item False Negative ($FN$): number of pairs of functions in the ground truth that are not predicted to be similar.
    \item Precision ($\frac{TP}{TP+FP}$): the proportion of true positives among all positive results predicted by the model.
    \item Recall ($\frac{TP}{TP+FN}$):  the proportion of true positives among all actual positive cases in the dataset.
    \item F1 score: the harmonic mean of precision and recall, computed as $\frac{2*Precision*Recall}{Precision+Recall}$.
\end{itemize}

DeepBinDiff and BinDiff directly generate matching function pairs for two binaries with a similarity score.
For other approaches, we use the KDTree implementation from scikit-learn~\cite{scikit-learn} to compute the nearest neighbors for a source function among the set of target functions. 
KDTrees are binary search trees where data in each node is a K-Dimensional point. KDTrees provide an efficient means\footnote{They have a linear-space complexity, an average logarithmic-time complexity, and a worst-case linear-time complexity.} of computation of nearest neighbors~\cite{bentley1975kdtrees}. 
We consider $10$ neighbors in our experiments. 
A prediction is counted as a true positive ($TP$) if the target function is among the computed neighbors; otherwise, it is marked as a false positive ($FP$). 
Additionally, we plot the Cumulative Distribution Function (CDF) of F1-Score curves for each experiment.

Some functions identified in the ground truth are not present in the baselines.
Such function pairs are considered false negatives according to our definitions. Experiments use a timeout of 7,200 seconds, beyond which we terminate the process for the pair of binaries under comparison. We skip evaluating DeepBinDiff on binaries from Lua, cURL, and PuTTY, and in experiments involving obfuscated binaries, as generating the embeddings and matching exceeded the 7,200-second timeout for almost 90\% of the binary pairs. Additionally, we do not evaluate DeepBinDiff on cross-architecture experiments since the approach is architecture-specific and supports only x86.

\subsubsection{Cross-Optimization Binary Diffing}
\label{sec:cross-opt-diffing}

\begin{table}[ht]
\centering
\caption{Precision and Recall Scores for Cross-Optimization Binary Diffing. The first number of each pair corresponds to the O0 Vs. O3 setting. The second number shows the O0 Vs. O2 setting. Missing pairs occur due to timeouts.
The null hypothesis' expected score is 0.005.}
\label{tab:crossopt-O0-O2-O3}
\resizebox{\textwidth}{!}{
\begin{tabular}{l|cccccc|cccccc}
\toprule
                                              & \multicolumn{6}{|c}{\textbf{\textbf{Precision}}} & \multicolumn{6}{|c}{\textbf{\textbf{Recall}}} \\
                                              &   \textbf{\textbf{BinDiff}} & \textbf{\textbf{DBD}} & \textbf{\textbf{SAFE}} & \textbf{\textbf{OPC}} & \textbf{\textbf{BinFinder}} &    \textsc{\textbf{VexIR2Vec}} & \textbf{\textbf{BinDiff}} & \textbf{\textbf{DBD}} & \textbf{\textbf{SAFE}} & \textbf{\textbf{OPC}} & \textbf{\textbf{BinFinder}} &    \textsc{\textbf{VexIR2Vec}} \\
\midrule
& \multicolumn{12}{c}{\textbf{\underline{ARM - Clang12}}} \\
\textbf{\textbf{Coreutils}} &                 0.08 / 0.08 &           0.17 / 0.18 &            0.11 / 0.11 &                0.36 / 0.34 &                 0.35 / 0.35 &  \textbf{0.57} / \textbf{0.58} &               0.05 / 0.05 &           0.09 / 0.10 &            0.09 / 0.09 &                0.47 / 0.45 &                 0.46 / 0.45 &  \textbf{0.75} / \textbf{0.76} \\
                       \textbf{\textbf{Diffutils}} &                 0.37 / 0.41 &           0.24 / 0.27 &            0.17 / 0.19 &                0.44 / 0.44 &                 0.49 / 0.48 &  \textbf{0.68} / \textbf{0.67} &               0.27 / 0.30 &           0.19 / 0.22 &            0.13 / 0.14 &                0.57 / 0.58 &                 0.65 / 0.63 &  \textbf{0.89} / \textbf{0.89} \\
                       \textbf{\textbf{Findutils}} &                 0.11 / 0.34 &           0.31 / 0.32 &            0.12 / 0.13 &                0.40 / 0.40 &                 0.47 / 0.52 &  \textbf{0.68} / \textbf{0.69} &               0.06 / 0.22 &           0.15 / 0.17 &            0.10 / 0.11 &                0.48 / 0.48 &                 0.56 / 0.62 &  \textbf{0.82} / \textbf{0.82} \\
                       \textbf{\textbf{Gzip}} &                 0.07 / 0.05 &           0.15 / 0.17 &            0.16 / 0.20 &                0.32 / 0.33 &                 0.41 / 0.43 &   \textbf{0.60} / \textbf{0.61} &               0.05 / 0.04 &           0.16 / 0.18 &            0.11 / 0.13 &                0.42 / 0.43 &                 0.54 / 0.56 &   \textbf{0.78} / \textbf{0.80} \\
                       \textbf{\textbf{Lua}} &                 0.20 / 0.17 &                       &            0.04 / 0.04 &                0.13 / 0.14 &                 0.19 / 0.21 &  \textbf{0.41} / \textbf{0.41} &               0.19 / 0.16 &                       &            0.03 / 0.02 &                0.21 / 0.21 &                 0.31 / 0.33 &  \textbf{0.65} / \textbf{0.65} \\
                       \textbf{\textbf{PuTTY}} &                 0.07 / 0.07 &                       &            0.04 / 0.04 &                0.06 / 0.06 &                 0.21 / 0.23 &  \textbf{0.28} / \textbf{0.29} &               0.04 / 0.05 &                       &            0.04 / 0.04 &                0.08 / 0.07 &                 0.29 / 0.30 &  \textbf{0.39} / \textbf{0.39} \\
                       \textbf{\textbf{cURL}} &                 0.14 / 0.07 &                       &            0.12 / 0.14 &                0.50 / 0.46 &                 0.58 / 0.58 &  \textbf{0.67} / \textbf{0.67} &               0.12 / 0.06 &                       &            0.08 / 0.10 &                0.57 / 0.53 &                 0.67 / 0.67 &  \textbf{0.77} / \textbf{0.77} \\
\cline{1-13}
& \multicolumn{12}{c}{\textbf{\underline{ARM - GCC8}}} \\
\textbf{\textbf{Coreutils}} &                 0.13 / 0.14 &           0.16 / 0.18 &            0.18 / 0.20 &                0.36 / 0.29 &                 0.34 / 0.40 &  \textbf{0.51} / \textbf{0.55} &               0.09 / 0.10 &           0.11 / 0.12 &            0.13 / 0.15 &                0.49 / 0.41 &                 0.47 / 0.56 &   \textbf{0.70} / \textbf{0.76} \\
                                               \textbf{\textbf{Diffutils}} &                 0.29 / 0.25 &           0.19 / 0.22 &            0.32 / 0.34 &                0.36 / 0.40 &                 0.43 / 0.41 &   \textbf{0.60} / \textbf{0.65} &               0.24 / 0.18 &           0.09 / 0.14 &            0.21 / 0.23 &                0.52 / 0.55 &                 0.62 / 0.63 &   \textbf{0.86} / \textbf{0.90} \\
                                               \textbf{\textbf{Findutils}} &                 0.13 / 0.10 &           0.18 / 0.28 &            0.22 / 0.26 &                0.41 / 0.33 &                 0.42 / 0.50 &  \textbf{0.59} / \textbf{0.65} &               0.09 / 0.07 &           0.12 / 0.12 &            0.18 / 0.20 &                0.52 / 0.42 &                 0.55 / 0.64 &  \textbf{0.77} / \textbf{0.83} \\
                                               \textbf{\textbf{Gzip}} &                 0.23 / 0.23 &           0.14 / 0.21 &            0.33 / 0.37 &                0.17 / 0.18 &                 0.38 / 0.48 &   \textbf{0.50} / \textbf{0.55} &               0.21 / 0.22 &           0.13 / 0.16 &            0.21 / 0.25 &                0.25 / 0.24 &                 0.55 / 0.64 &  \textbf{0.73} / \textbf{0.72} \\
                                               \textbf{\textbf{Lua}} &                 0.13 / 0.23 &                       &            0.10 / 0.16 &                0.15 / 0.13 &                 0.20 / 0.30 &   \textbf{0.32} / \textbf{0.40} &               0.13 / 0.23 &                       &            0.07 / 0.12 &                0.25 / 0.19 &                 0.33 / 0.44 &  \textbf{0.52} / \textbf{0.58} \\
                                               \textbf{\textbf{PuTTY}} &                 0.05 / 0.10 &                       &            0.08 / 0.10 &                0.09 / 0.06 &                    0.19 / 0.27          &  \textbf{0.28} / \textbf{0.36} &               0.03 / 0.07 &                     &            0.08 / 0.10 &                0.12 / 0.08 &                 0.23 / 0.36            &  \textbf{0.39} / \textbf{0.48} \\
                                               \textbf{\textbf{cURL}} &                 0.19 / 0.23 &                       &            0.21 / 0.25 &                0.53 / 0.12 &                 0.14 / 0.15 &  \textbf{0.57} / \textbf{0.57} &               0.15 / 0.19 &                       &            0.16 / 0.19 &                0.69 / 0.58 &                 0.58 / 0.61 &  \textbf{0.76} / \textbf{0.75} \\
\cline{1-13}
& \multicolumn{12}{c}{\textbf{\underline{x86 - Clang12}}} \\
\textbf{\textbf{Coreutils}} &                 0.12 / 0.16 &           0.12 / 0.12 &            0.24 / 0.24 &                0.33 / 0.34 &                 0.34 / 0.34 &  \textbf{0.53} / \textbf{0.53} &               0.09 / 0.12 &           0.36 / 0.34 &            0.20 / 0.19 &                0.44 / 0.46 &                 0.44 / 0.44 &   \textbf{0.71} / \textbf{0.70} \\
                                               \textbf{\textbf{Diffutils}} &                 0.48 / 0.48 &           0.17 / 0.17 &            0.31 / 0.33 &                0.45 / 0.46 &                 0.43 / 0.45 &  \textbf{0.68} / \textbf{0.68} &               0.44 / 0.45 &           0.55 / 0.60 &            0.22 / 0.23 &                0.60 / 0.61 &                 0.57 / 0.59 &   \textbf{0.91} / \textbf{0.90} \\
                                               \textbf{\textbf{Findutils}} &                 0.27 / 0.32 &           0.12 / 0.17 &            0.32 / 0.32 &                0.38 / 0.37 &                 0.50 / 0.45 &  \textbf{0.66} / \textbf{0.67} &               0.21 / 0.28 &           0.47 / 0.53 &            0.26 / 0.26 &                0.47 / 0.46 &                 0.61 / 0.55 &  \textbf{0.82} / \textbf{0.82} \\
                                               \textbf{\textbf{Gzip}} &                 0.22 / 0.24 &           0.11 / 0.11 &            0.35 / 0.35 &                0.38 / 0.37 &                 0.42 / 0.36 &   \textbf{0.60} / \textbf{0.62} &               0.18 / 0.20 &           0.66 / 0.76 &            0.25 / 0.25 &                0.49 / 0.48 &                 0.54 / 0.48 &  \textbf{0.79} / \textbf{0.81} \\
                                               \textbf{\textbf{Lua}} &                 0.26 / 0.23 &                       &            0.17 / 0.17 &                0.11 / 0.10 &                 0.21 / 0.22 &  \textbf{0.39} / \textbf{0.39} &               0.24 / 0.21 &                       &            0.12 / 0.12 &                0.17 / 0.15 &                 0.32 / 0.34 &  \textbf{0.59} / \textbf{0.59} \\
                                               \textbf{\textbf{PuTTY}} &                 0.11 / 0.11 &                       &            0.12 / 0.12 &                0.06 / 0.06 &                 0.20 / 0.21 &    \textbf{0.30} / \textbf{0.30} &               0.10 / 0.10 &                       &            0.12 / 0.12 &                0.08 / 0.07 &                 0.27 / 0.28 &   \textbf{0.40} / \textbf{0.41} \\
                                               \textbf{\textbf{cURL}} &                 0.25 / 0.30 &                       &            0.35 / 0.37 &                0.42 / 0.41 &                 0.51 / 0.57 &  \textbf{0.74} / \textbf{0.73} &               0.20 / 0.25 &                       &            0.25 / 0.27 &                0.48 / 0.47 &                 0.59 / 0.66 &  \textbf{0.84} / \textbf{0.84} \\
\cline{1-13}
& \multicolumn{12}{c}{\textbf{\underline{x86 - GCC8}}} \\
\textbf{\textbf{Coreutils}} &                 0.11 / 0.21 &           0.12 / 0.14 &            0.23 / 0.30 &                0.31 / 0.32 &                 0.32 / 0.36 &   \textbf{0.50} / \textbf{0.51} &               0.08 / 0.19 &           0.29 / 0.31 &            0.18 / 0.23 &                0.44 / 0.45 &                 0.44 / 0.51 &  \textbf{0.69} / \textbf{0.72} \\
                                               \textbf{\textbf{Diffutils}} &                 0.27 / 0.29 &           0.14 / 0.14 &            0.37 / 0.39 &                0.42 / 0.43 &                 0.38 / 0.42 &    \textbf{0.60} / \textbf{0.60} &               0.25 / 0.31 &           0.42 / 0.42 &            0.25 / 0.26 &                0.60 / 0.61 &                 0.55 / 0.60 &  \textbf{0.86} / \textbf{0.86} \\
                                               \textbf{\textbf{Findutils}} &                 0.10 / 0.36 &           0.10 / 0.13 &            0.26 / 0.32 &                0.34 / 0.32 &                 0.40 / 0.48 &   \textbf{0.60} / \textbf{0.62} &               0.08 / 0.32 &           0.31 / 0.34 &            0.21 / 0.25 &                0.44 / 0.41 &                 0.52 / 0.62 &  \textbf{0.78} / \textbf{0.81} \\
                                               \textbf{\textbf{Gzip}} &                 0.30 / 0.35 &           0.11 / 0.14 &            0.34 / 0.38 &                0.18 / 0.22 &                 0.34 / 0.44 &  \textbf{0.52} / \textbf{0.55} &               0.32 / 0.36 &           0.53 / 0.53 &            0.22 / 0.27 &                0.25 / 0.30 &                 0.48 / 0.59 &  \textbf{0.75} / \textbf{0.73} \\
                                               \textbf{\textbf{Lua}} &                 0.14 / 0.35 &                       &            0.14 / 0.19 &                0.15 / 0.12 &                 0.22 / 0.30 &  \textbf{0.31} / \textbf{0.39} &               0.14 / 0.37 &                       &            0.10 / 0.14 &                0.24 / 0.18 &                 0.35 / 0.44 &   \textbf{0.50} / \textbf{0.57} \\
                                               \textbf{\textbf{PuTTY}} &                 0.09 / 0.23 &                       &            0.12 / 0.15 &                0.08 / 0.07 &                 0.16 / 0.25 &  \textbf{0.29} / \textbf{0.36} &               0.08 / 0.21 &                       &            0.12 / 0.15 &                0.11 / 0.10 &                 0.23 / 0.33 &  \textbf{0.41} / \textbf{0.49} \\
                                               \textbf{\textbf{cURL}} &                 0.16 / 0.20 &                       &            0.34 / 0.40 &                0.55 / 0.56 &                 0.58 / 0.63 &  \textbf{0.79} / \textbf{0.79} &               0.15 / 0.19 &                       &            0.24 / 0.29 &                0.64 / 0.64 &                 0.69 / 0.72 &  \textbf{0.93} / \textbf{0.91} \\
\bottomrule
\end{tabular}
}
\end{table}

We perform cross-optimization level binary diffing involving binaries compiled with Clang12 and GCC8 targeting x86 and ARM architectures.
 Table~\ref{tab:crossopt-O0-O2-O3}\footnote{Henceforth, we add to the caption of each table the expected score of a random guesser (the null hypothesis). This score, being so low, demonstrates that the different binary similarity tools are able to learn non-trivial information.} shows the precision and recall values corresponding to the diffing between O0-O3 and O0-O2 optimization levels. Additionally, we studied the performance of diffing between O1-O3 optimization levels (results are shown in Table~\ref{tab:crossopt-O1-O3} of the Appendix). Our approach consistently achieves higher precision and recall values across all configurations compared to the baselines. It attains the best average F1-Score across all three experiments, at 0.65, while the nearest baselines, BinFinder and Histograms of Opcodes (OPC), achieve F1-Scores of 0.47 and 0.41, respectively. On average, SAFE, DeepBinDiff, and BinDiff achieve F1 scores of 0.28, 0.27, and 0.26, respectively.

\subsubsection{Cross-Compiler Binary Diffing}
\label{sec:cross-compiler-diffing}

To perform cross-compiler diffing, we consider binaries generated using Clang12 and GCC8 targeting x86 and ARM. Diffing is performed on binaries that target the same architecture but that were produced with different compilers or different versions of the same compiler. Table~\ref{tab:crosscompiler} shows the precision and recall for our approach and the baselines. Our approach achieves the highest precision, recall, and F1 scores, outperforming the baselines in all configurations.

\begin{table}[ht]
\centering
\caption{Cross-Compiler Binary Diffing - Clang12 Vs. GCC8. Null hypothesis' expected score: 0.006}
\label{tab:crosscompiler}
\resizebox{0.9\textwidth}{!}{
\begin{tabular}{l|rrrrrr|rrrrrrr}
\toprule
                             & \multicolumn{6}{|c}{\textbf{Precision}} & \multicolumn{6}{|c}{\textbf{Recall}} \\
                             &   \textbf{BinDiff} & \textbf{DBD} & \textbf{SAFE} & \textbf{OPC} & \textbf{BinFinder} & \textbf{\textsc{\textbf{VexIR2Vec}}} & \textbf{BinDiff} & \textbf{DBD} & \textbf{SAFE} & \textbf{OPC} & \textbf{BinFinder} & \textbf{\textsc{\textbf{VexIR2Vec}}} \\
\midrule

& \multicolumn{12}{c}{\textbf{\underline{ARM}}} \\

\textbf{Coreutils} &               0.23 &         0.34 &          0.27 &              0.60 &               0.66 &      \textbf{0.78} &             0.13 &         0.32 &          0.22 &              0.68 &               0.75 &      \textbf{0.88} \\
              \textbf{cURL} &               0.20 &              &          0.32 &              0.65 &               0.69 &      \textbf{0.79} &             0.19 &              &          0.22 &              0.79 &               0.84 &      \textbf{0.96} \\
              \textbf{Diffutils} &               0.28 &         0.37 &          0.37 &              0.64 &               0.59 &      \textbf{0.78} &             0.21 &         0.39 &          0.26 &              0.78 &               0.79 &      \textbf{0.94} \\
              \textbf{Findutils} &               0.41 &         0.53 &          0.32 &              0.60 &               0.69 &      \textbf{0.83} &             0.29 &         0.36 &          0.27 &              0.67 &               0.78 &      \textbf{0.93} \\
              \textbf{Gzip} &               0.37 &         0.27 &          0.37 &              0.29 &               0.61 &      \textbf{0.71} &             0.34 &         0.56 &          0.26 &              0.35 &               0.72 &      \textbf{0.84} \\
              \textbf{Lua} &               0.27 &              &          0.19 &              0.29 &               0.44 &      \textbf{0.52} &             0.26 &              &          0.14 &              0.36 &               0.54 &      \textbf{0.64} \\
              \textbf{PuTTY} &               0.19 &              &          0.13 &              0.26 &               0.44 &      \textbf{0.49} &             0.13 &              &          0.15 &              0.28 &               0.48 &      \textbf{0.55} \\
\cline{1-13}
 
 & \multicolumn{12}{c}{\textbf{\underline{x86}}} \\

\textbf{Coreutils} &               0.25 &         0.16 &          0.32 &              0.58 &               0.63 &      \textbf{0.75} &             0.22 &         0.53 &          0.27 &              0.66 &               0.71 &      \textbf{0.84} \\
              \textbf{cURL} &               0.30 &              &          0.48 &              0.76 &               0.82 &      \textbf{0.87} &             0.22 &              &          0.35 &              0.81 &               0.88 &      \textbf{0.93} \\
              \textbf{Diffutils} &               0.42 &         0.18 &          0.44 &              0.65 &               0.62 &      \textbf{0.81} &             0.43 &         0.32 &          0.31 &              0.80 &               0.76 &      \textbf{0.98} \\
              \textbf{Findutils} &               0.34 &         0.16 &          0.36 &              0.57 &               0.68 &      \textbf{0.82} &             0.30 &         0.59 &          0.30 &              0.65 &               0.76 &      \textbf{0.92} \\
              \textbf{Gzip} &               0.34 &         0.08 &          0.43 &              0.32 &               0.62 &       \textbf{0.7} &             0.33 &         0.72 &          0.32 &              0.37 &               0.70 &      \textbf{0.81} \\
              \textbf{Lua} &               0.34 &              &          0.29 &              0.35 &               0.42 &      \textbf{0.54} &             0.33 &              &          0.23 &              0.43 &               0.51 &      \textbf{0.66} \\
              \textbf{PuTTY} &               0.22 &              &          0.21 &              0.38 &               0.43 &      \textbf{0.51} &             0.20 &              &          0.24 &              0.41 &               0.47 &      \textbf{0.55} \\
\bottomrule
\end{tabular}
}
\end{table}

We also carry out diffing by varying the compiler version. Table~\ref{tab:crossversion} shows the results of the comparisons between Clang6 and Clang12, Clang8 and Clang12, GCC6 and GCC10, GCC8 and GCC10. In our dataset, we observe that the cross-compiler version diffing is easier than other diffing experiments. Our approach results in the highest average F1 score of $0.93$, while BinFinder, OPC, and SAFE result in $0.87$, $0.8$, and $0.53$, respectively.

\begin{table}[ht]
\centering
\caption{Compiler Cross-Version Binary Diffing. Null hypothesis' expected score: 0.006}
\label{tab:crossversion}
\resizebox{0.9\textwidth}{!}{
\begin{tabular}{l|rrrrrr|rrrrrr}
\toprule
                                      & \multicolumn{6}{|c}{\textbf{Precision}} & \multicolumn{6}{|c}{\textbf{Recall}} \\
                                      &   \textbf{BinDiff} & \textbf{DBD} & \textbf{SAFE} & \textbf{OPC} & \textbf{BinFinder} & \textbf{\textsc{\textbf{VexIR2Vec}}} & \textbf{BinDiff} & \textbf{DBD} & \textbf{SAFE} & \textbf{OPC} & \textbf{BinFinder} & \textbf{\textsc{\textbf{VexIR2Vec}}} \\
\midrule

& \multicolumn{12}{c}{\textbf{\underline{Clang6 Vs. Clang12}}} \\

\textbf{ARM} &               0.72 &         0.59 &          0.51 &              0.80 &               0.93 &      \textbf{0.97} &             0.57 &         0.86 &          0.43 &              0.80 &               0.93 &      \textbf{0.98} \\
                         \textbf{x86} &               0.86 &         0.34 &          0.55 &              0.88 &               0.94 &      \textbf{0.98} &             0.94 &         0.89 &          0.48 &              0.89 &               0.94 &      \textbf{0.99} \\
\cline{1-13}

& \multicolumn{12}{c}{\textbf{\underline{Clang8 Vs. Clang12}}} \\

\textbf{ARM} &               0.72 &         0.62 &          0.53 &              0.84 &               0.94 &      \textbf{0.97} &             0.58 &         0.88 &          0.45 &              0.84 &               0.94 &      \textbf{0.98} \\
                         \textbf{x86} &               0.88 &         0.36 &          0.56 &              0.89 &               0.95 &      \textbf{0.99} &             0.96 &         0.92 &          0.50 &              0.89 &               0.95 &      \textbf{0.99} \\
\cline{1-13}

& \multicolumn{12}{c}{\textbf{\underline{GCC6 Vs. GCC10}}} \\

\textbf{ARM} &               0.58 &         0.52 &          0.49 &              0.72 &               0.75 &      \textbf{0.84} &             0.49 &         0.82 &          0.40 &              0.79 &               0.83 &      \textbf{0.92} \\
                     \textbf{x86} &               0.62 &         0.21 &          0.53 &              0.69 &               0.76 &      \textbf{0.83} &             0.66 &         0.84 &          0.44 &              0.76 &               0.83 &      \textbf{0.91} \\
\cline{1-13}

& \multicolumn{12}{c}{\textbf{\underline{GCC8 Vs. GCC10}}} \\

\textbf{ARM} &               0.66 &         0.56 &          0.51 &              0.75 &               0.75 &      \textbf{0.85} &             0.56 &         0.84 &          0.41 &              0.81 &               0.84 &      \textbf{0.92} \\
                         \textbf{x86} &               0.71 &         0.21 &          0.55 &              0.74 &               0.77 &      \textbf{0.84} &             0.81 &         0.86 &          0.46 &              0.82 &               0.85 &      \textbf{0.93} \\
\bottomrule
\end{tabular}
}
\end{table}

\subsubsection{Cross-Architecture Binary Diffing}
\label{sec:cross-arch-diffing}
\begin{table}[ht]
    \centering
    \caption{Cross-Architecture Binary Diffing. Null hypothesis' expected score: 0.005}
    \label{tab:crossarchitecture}
    \resizebox{0.9\textwidth}{!}{
    \begin{tabular}{l|ccccc|ccccc}
    \toprule
                                        & \multicolumn{5}{|c}{\textbf{Precision}} & \multicolumn{5}{|c}{\textbf{Recall}} \\
                                        &   \textbf{BinDiff} & \textbf{SAFE} & \textbf{OPC} & \textbf{BinFinder} & \textbf{\textsc{\textbf{VexIR2Vec}}} & \textbf{BinDiff} & \textbf{SAFE} & \textbf{OPC} & \textbf{BinFinder} & \textbf{\textsc{\textbf{VexIR2Vec}}} \\
    \midrule
& \multicolumn{10}{c}{\textbf{\underline{Clang12 - O0}}} \\
    \textbf{Coreutils} &               0.53 &          0.10 &              0.41 &               0.64 &      \textbf{0.79} &             0.43 &          0.09 &              0.45 &               0.67 &      \textbf{0.85} \\
                         \textbf{cURL} &               0.83 &          0.10 &              0.41 &               0.83 &      \textbf{0.94} &             0.72 &          0.08 &              0.41 &               0.83 &      \textbf{0.95} \\
                         \textbf{Diffutils} &               0.71 &          0.14 &              0.41 &               0.70 &      \textbf{0.84} &             0.60 &          0.12 &              0.44 &               0.74 &       \textbf{0.9} \\
                         \textbf{Findutils} &               0.78 &          0.10 &              0.37 &               0.70 &      \textbf{0.85} &             0.66 &          0.09 &              0.39 &               0.73 &       \textbf{0.9} \\
                         \textbf{Gzip} &               0.81 &          0.08 &              0.50 &               0.71 &      \textbf{0.89} &             0.78 &          0.07 &              0.53 &               0.76 &      \textbf{0.94} \\
                         \textbf{Lua} &      \textbf{0.87} &          0.02 &              0.12 &               0.61 &               0.77 &    \textbf{0.82} &          0.02 &              0.12 &               0.61 &               0.77 \\
                         \textbf{PuTTY} &               0.56 &          0.03 &              0.10 &               0.50 &      \textbf{0.65} &             0.55 &          0.03 &              0.12 &               0.55 &      \textbf{0.72} \\
    \cline{1-11}
& \multicolumn{10}{c}{\textbf{\underline{Clang12 - O3}}} \\
    \textbf{Coreutils} &               0.30 &          0.17 &              0.48 &               0.67 &      \textbf{0.84} &             0.14 &          0.15 &              0.52 &               0.73 &      \textbf{0.92} \\
                         \textbf{cURL} &               0.63 &          0.26 &              0.54 &               0.77 &      \textbf{0.82} &             0.51 &          0.19 &              0.55 &               0.78 &      \textbf{0.83} \\
                         \textbf{Diffutils} &               0.64 &          0.22 &              0.55 &               0.76 &      \textbf{0.88} &             0.43 &          0.17 &              0.61 &               0.83 &      \textbf{0.96} \\
                         \textbf{Findutils} &               0.45 &          0.23 &              0.52 &               0.69 &      \textbf{0.86} &             0.21 &          0.20 &              0.56 &               0.74 &      \textbf{0.92} \\
                         \textbf{Gzip} &               0.55 &          0.26 &              0.49 &               0.75 &      \textbf{0.86} &             0.45 &          0.19 &              0.53 &               0.82 &      \textbf{0.94} \\
                         \textbf{Lua} &               0.80 &          0.13 &              0.24 &               0.70 &      \textbf{0.82} &             0.71 &          0.11 &              0.25 &               0.73 &      \textbf{0.85} \\
                         \textbf{PuTTY} &               0.56 &          0.10 &              0.18 &               0.54 &      \textbf{0.67} &             0.41 &          0.12 &              0.19 &               0.57 &      \textbf{0.71} \\
    \cline{1-11}
& \multicolumn{10}{c}{\textbf{\underline{GCC10 - O0}}} \\
    \textbf{Coreutils} &               0.78 &          0.29 &              0.43 &               0.71 &      \textbf{0.89} &             0.65 &          0.26 &              0.46 &               0.76 &      \textbf{0.94} \\
                         \textbf{cURL} &      \textbf{0.87} &          0.36 &              0.10 &               0.18 &               0.21 &             0.78 &          0.28 &              0.47 &               0.82 &      \textbf{0.94} \\
                         \textbf{Diffutils} &               0.88 &          0.39 &              0.41 &               0.79 &      \textbf{0.93} &             0.76 &          0.33 &              0.43 &               0.82 &      \textbf{0.98} \\
                         \textbf{Findutils} &               0.86 &          0.29 &              0.45 &               0.73 &      \textbf{0.95} &             0.71 &          0.27 &              0.47 &               0.76 &      \textbf{0.98} \\
                         \textbf{Gzip} &               0.84 &          0.31 &              0.49 &               0.84 &      \textbf{0.88} &             0.82 &          0.25 &              0.52 &               0.89 &      \textbf{0.93} \\
                         \textbf{Lua} &      \textbf{0.95} &          0.17 &              0.34 &               0.59 &               0.88 &    \textbf{0.91} &          0.16 &              0.34 &               0.59 &               0.88 \\
                         \textbf{PuTTY} &               0.59 &          0.12 &              0.10 &               0.47 &      \textbf{0.72} &             0.57 &          0.13 &              0.12 &               0.52 &      \textbf{0.78} \\
    \cline{1-11}
& \multicolumn{10}{c}{\textbf{\underline{GCC10 - O3}}} \\
    \textbf{Coreutils} &               0.34 &          0.26 &              0.33 &               0.51 &      \textbf{0.61} &             0.24 &          0.20 &              0.44 &               0.68 &      \textbf{0.81} \\
                         \textbf{cURL} &               0.56 &          0.30 &              0.52 &               0.64 &      \textbf{0.75} &             0.47 &          0.21 &              0.59 &               0.73 &      \textbf{0.86} \\
                         \textbf{Diffutils} &               0.48 &          0.25 &              0.44 &               0.71 &      \textbf{0.79} &             0.33 &          0.17 &              0.52 &               0.84 &      \textbf{0.93} \\
                         \textbf{Findutils} &               0.45 &          0.26 &              0.34 &               0.65 &      \textbf{0.68} &             0.31 &          0.22 &              0.42 &               0.80 &      \textbf{0.83} \\
                         \textbf{Gzip} &       0.61             &          0.24 &              0.56 &               0.75 &      \textbf{0.87} &       0.52           &          0.18 &              0.64 &               0.84 &      \textbf{0.97} \\
                         \textbf{Lua} &               0.67 &          0.16 &              0.25 &               0.62 &      \textbf{0.81} &             0.59 &          0.13 &              0.26 &               0.64 &      \textbf{0.83} \\
                         \textbf{PuTTY} &        0.46          &          0.09 &              0.21 &               0.48 &      \textbf{0.67} &          0.29        &          0.10 &              0.23 &               0.50 &       \textbf{0.7} \\
    \bottomrule
    \end{tabular}
    }
    \end{table}

This experiment uses binaries generated by Clang (V12.0.1) and GCC (V10.0) at \texttt{-O0} and \texttt{-O3} optimization levels for x86 and ARM. We perform diffing between binaries generated by the same compiler and optimization level but for different architectures. Table~\ref{tab:crossarchitecture} shows the results. \vexirtovec achieves the highest precision and recall values in most configurations.
Notice that in this experiment, \vexirtovec is not consistently better than all the other baselines. For instance, in the clang12 \texttt{-O0} setup, \vexirtovec trails BinDiff on the Lua binary, although it outperforms it in all the other six binaries on the same setting.
On average, \vexirtovec achieves the highest F1 score of $0.82$, surpassing the closest baseline, BinFinder, which obtains an average F1 score of $0.67$.

\subsubsection{Mixture-of-all Diffing}
\label{sec:all-diffing}

\begin{table}[ht]
    \centering
    \caption{Cross-Compiler + Cross-Optimization + Cross-Architecture Binary Diffing. Null hypothesis' expected score: 0.005}
    \label{tab:cross-comp-arch-opt}
    \resizebox{0.9\textwidth}{!}{
    \begin{tabular}{l|rrrrr|rrrrr}
    \toprule
                                                           & \multicolumn{5}{|c}{\textbf{Precision}} & \multicolumn{5}{|c}{\textbf{Recall}} \\
                                                           &   \textbf{BinDiff} & \textbf{SAFE} & \textbf{OPC} & \textbf{BinFinder} & \textbf{\textsc{\textbf{VexIR2Vec}}} & \textbf{BinDiff} & \textbf{SAFE} & \textbf{OPC} & \textbf{BinFinder} & \textbf{\textsc{\textbf{VexIR2Vec}}} \\
    \midrule
    
& \multicolumn{10}{c}{\textbf{\underline{x86-Clang12-O0 Vs. ARM-GCC10-O2}}} \\

     \textbf{Coreutils} &               0.10 &          0.08 &              0.21 &               0.28 &      \textbf{0.45} &             0.05 &          0.05 &              0.34 &               0.43 &       \textbf{0.7} \\
                                            \textbf{cURL} &               0.10 &          0.09 &              0.25 &               0.42 &      \textbf{0.57} &             0.06 &          0.06 &              0.33 &               0.56 &      \textbf{0.76} \\
                                            \textbf{Diffutils} &               0.19 &          0.09 &              0.25 &               0.31 &      \textbf{0.58} &             0.11 &          0.06 &              0.37 &               0.51 &      \textbf{0.86} \\
                                            \textbf{Findutils} &               0.14 &          0.07 &              0.17 &               0.39 &      \textbf{0.56} &             0.08 &          0.05 &              0.24 &               0.53 &      \textbf{0.76} \\
                                            \textbf{Gzip} &               0.13 &          0.08 &              0.12 &               0.39 &       \textbf{0.5} &             0.10 &          0.05 &              0.17 &               0.54 &       \textbf{0.7} \\
                                            \textbf{Lua} &               0.19 &          0.04 &              0.06 &               0.20 &      \textbf{0.34} &             0.16 &          0.03 &              0.08 &               0.30 &      \textbf{0.51} \\
                                            \textbf{PuTTY} &               0.06 &          0.03 &              0.04 &          0.19          &      \textbf{0.22} &             0.04 &          0.03 &              0.05 &          0.24          &       \textbf{0.3} \\
    \cline{1-11}
    
& \multicolumn{10}{c}{\textbf{\underline{x86-GCC8-O1 Vs. ARM-Clang6-O3}}} \\

     \textbf{Coreutils} &               0.10 &          0.13 &              0.38 &               0.41 &      \textbf{0.64} &             0.05 &          0.10 &              0.45 &               0.49 &      \textbf{0.76} \\
                                            \textbf{cURL} &               0.07 &          0.18 &              0.41 &               0.57 &      \textbf{0.74} &             0.04 &          0.13 &              0.43 &               0.59 &      \textbf{0.77} \\
                                            \textbf{Diffutils} &               0.24 &          0.17 &              0.36 &               0.45 &      \textbf{0.72} &             0.17 &          0.12 &              0.44 &               0.55 &      \textbf{0.89} \\
                                            \textbf{Findutils} &               0.19 &          0.14 &              0.39 &               0.47 &      \textbf{0.73} &             0.09 &          0.12 &              0.45 &               0.54 &      \textbf{0.83} \\
                                            \textbf{Gzip} &               0.28 &          0.13 &              0.17 &               0.47 &      \textbf{0.59} &             0.22 &          0.09 &              0.20 &               0.57 &      \textbf{0.72} \\
                                            \textbf{Lua} &               0.22 &          0.06 &              0.12 &               0.27 &      \textbf{0.46} &             0.19 &          0.04 &              0.15 &               0.34 &      \textbf{0.57} \\
                                            \textbf{PuTTY} &               0.08 &          0.06 &              0.09 &               0.25 &      \textbf{0.29} &             0.05 &          0.07 &              0.10 &               0.28 &      \textbf{0.32} \\
    \bottomrule
    \end{tabular}
    }
    \end{table}
    
\begin{table}[ht]
    \centering
    \caption{Cross-Compiler + Cross-Optimization Binary Diffing. Null hypothesis' expected score: 0.006}
    \label{tab:cross-comp-opt}
    \resizebox{0.9\textwidth}{!}{
    \begin{tabular}{l|rrrrrr|rrrrrr}
    \toprule
                                                           & \multicolumn{6}{|c}{\textbf{Precision}} & \multicolumn{6}{|c}{\textbf{Recall}} \\
                                                           &   \textbf{BinDiff} & \textbf{DBD} & \textbf{SAFE} & \textbf{OPC} & \textbf{BinFinder} & \textbf{\textsc{\textbf{VexIR2Vec}}} & \textbf{BinDiff} & \textbf{DBD} & \textbf{SAFE} & \textbf{OPC} & \textbf{BinFinder} & \textbf{\textsc{\textbf{VexIR2Vec}}} \\
    \midrule
    
& \multicolumn{12}{c}{\textbf{\underline{x86-Clang12-O0 Vs. x86-GCC10-O2}}} \\

     \textbf{Coreutils} &               0.11 &         0.08 &          0.20 &              0.24 &               0.31 &      \textbf{0.48} &             0.06 &         0.23 &          0.16 &              0.34 &               0.43 &      \textbf{0.67} \\
                                            \textbf{cURL} &               0.14 &              &          0.27 &              0.41 &               0.50 &      \textbf{0.73} &             0.10 &              &          0.19 &              0.49 &               0.60 &      \textbf{0.86} \\
                                            \textbf{Diffutils} &               0.24 &         0.11 &          0.28 &              0.44 &               0.39 &      \textbf{0.63} &             0.17 &         0.36 &          0.19 &              0.60 &               0.53 &      \textbf{0.86} \\
                                            \textbf{Findutils} &               0.27 &         0.08 &          0.21 &              0.26 &               0.38 &       \textbf{0.6} &             0.16 &         0.26 &          0.17 &              0.34 &               0.49 &      \textbf{0.77} \\
                                            \textbf{Gzip} &               0.32 &         0.09 &          0.30 &              0.21 &               0.39 &      \textbf{0.55} &             0.25 &         0.37 &          0.21 &              0.27 &               0.52 &      \textbf{0.71} \\
                                            \textbf{Lua} &               0.24 &              &          0.16 &              0.11 &               0.21 &      \textbf{0.36} &             0.23 &              &          0.12 &              0.16 &               0.32 &      \textbf{0.53} \\
                                            \textbf{PuTTY} &               0.10 &              &          0.09 &              0.07 &               0.21 &      \textbf{0.24} &             0.09 &              &          0.09 &              0.09 &               0.28 &      \textbf{0.33} \\
    \cline{1-13}
    
& \multicolumn{12}{c}{\textbf{\underline{x86-GCC8-O1 Vs. x86-Clang6-O3}}} \\

\textbf{Coreutils} &               0.15 &         0.22 &          0.29 &              0.61 &               0.49 &      \textbf{0.72} &             0.11 &         0.46 &          0.25 &              0.67 &               0.54 &      \textbf{0.79} \\
                                            \textbf{cURL} &               0.13 &              &          0.39 &              0.66 &               0.73 &      \textbf{0.87} &             0.09 &              &          0.29 &              0.70 &               0.77 &      \textbf{0.91} \\
                                            \textbf{Diffutils} &               0.29 &         0.25 &          0.41 &              0.63 &               0.62 &      \textbf{0.81} &             0.25 &         0.60 &          0.29 &              0.71 &               0.70 &      \textbf{0.92} \\
                                            \textbf{Findutils} &               0.20 &         0.24 &          0.32 &              0.60 &               0.51 &      \textbf{0.81} &             0.15 &         0.57 &          0.27 &              0.66 &               0.56 &      \textbf{0.89} \\
                                            \textbf{Gzip} &               0.20 &         0.11 &          0.26 &              0.28 &               0.41 &      \textbf{0.54} &             0.17 &         0.54 &          0.21 &              0.33 &               0.49 &      \textbf{0.64} \\
                                            \textbf{Lua} &               0.29 &              &          0.28 &              0.28 &               0.34 &      \textbf{0.51} &             0.27 &              &          0.22 &              0.34 &               0.42 &      \textbf{0.62} \\
                                            \textbf{PuTTY} &               0.14 &              &          0.20 &              0.24 &               0.32 &      \textbf{0.37} &             0.11 &              &          0.23 &              0.26 &               0.35 &       \textbf{0.4} \\
    \bottomrule
    \end{tabular}
    }
    \end{table}

As an extreme case of adversarial diffing, we conducted experiments using a diverse set of binaries compiled with different compilers and varying optimization levels for both x86 and ARM.
This setup poses a significant challenge to the binary diffing tools due to its mix of compilation configurations.
Tables~\ref{tab:cross-comp-arch-opt} and~\ref{tab:cross-comp-opt} present the compilation configurations, precision, and recall results for this experiment.
Cross-architecture experiments are excluded for DeepBinDiff for the reasons mentioned in Section~\ref{sss:metrics_diffing}.
\vexirtovec outperforms all baselines across these configurations in terms of both precision and recall.
On average, \vexirtovec achieves approximately a $45\%$ higher average F1-score compared to the best-performing baseline, BinFinder.

Figure~\ref{fig:f1-curves} shows Cumulative Distributive Function plots for F1-scores.
Ideal curves have a maximum increase in the fraction of data close to 1.0.
Notice that \vexirtovec{} achieves superior results than the baselines in all the experiments that Figure~\ref{fig:f1-curves} summarizes.

\begin{figure*}[ht]
    \centering
    \includegraphics[width=\textwidth,  height=0.30\textheight] {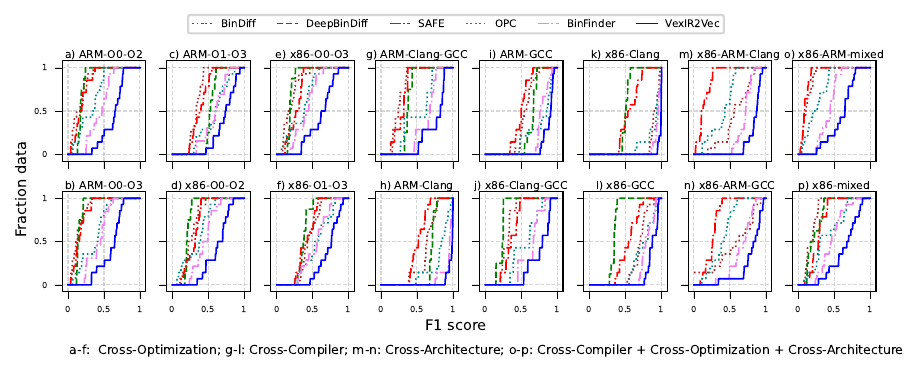}
    \caption{The cumulative distribution function (CDF) of F1-Scores in different adversarial settings.}
    \Description{F1-Score CDF plots}
    \label{fig:f1-curves}
    \vspace{-0.75\baselineskip}
\end{figure*}

\subsubsection{Diffing on Obfuscated Binaries}
\label{sec:obf-diffing}

\begin{table}[ht]
\centering
\caption{Diffing between binaries generated by GCC-10 (with -O0) and the obfuscated version of binaries generated by Clang-4 (with -O3). Null hypothesis' expected score: 0.006}
\label{tab:obfuscation}
\resizebox{0.9\textwidth}{!}{
\begin{tabular}{l|rrrrrr|rrrrrr}
\toprule
                                                            & \multicolumn{6}{|c}{\textbf{Precision}} & \multicolumn{6}{|c}{\textbf{Recall}} \\
                                                            &   \textbf{BinDiff} & \textbf{DBD} & \textbf{SAFE} & \textbf{OPC} & \textbf{BinFinder} & \textbf{\textsc{\textbf{VexIR2Vec}}} & \textbf{BinDiff} & \textbf{DBD} &  \textbf{SAFE} & \textbf{OPC} & \textbf{BinFinder} & \textbf{\textsc{\textbf{VexIR2Vec}}} \\
\midrule

& \multicolumn{12}{c}{\textbf{\underline{BCF}}} \\

\textbf{Coreutils} &               0.07 &         0.15 &          0.24 &         0.22 &               0.31 &      \textbf{0.56} &             0.04 &         0.16 &           0.35 &         0.28 &               0.39 &      \textbf{0.72} \\
                                             \textbf{cURL} &               0.05 &              &          0.34 &         0.29 &               0.68 &      \textbf{0.83} &             0.02 &              &           0.56 &         0.31 &               0.74 &       \textbf{0.9} \\
                                             \textbf{Diffutils} &               0.23 &         0.14 &          0.27 &         0.20 &               0.39 &      \textbf{0.64} &             0.13 &         0.32 &           0.49 &         0.27 &               0.52 &      \textbf{0.85} \\
                                             \textbf{Findutils} &               0.12 &         0.13 &          0.32 &         0.21 &               0.40 &      \textbf{0.68} &             0.07 &         0.25 &           0.42 &         0.25 &               0.47 &      \textbf{0.81} \\
                                             \textbf{Gzip} &               0.24 &              &          0.32 &         0.17 &      \textbf{0.37} &               0.27 &             0.16 &              &  \textbf{0.55} &         0.21 &               0.46 &               0.34 \\
                                             \textbf{PuTTY} &               0.04 &              &          0.23 &         0.06 &               0.17 &      \textbf{0.63} &             0.02 &              &           0.21 &         0.08 &               0.22 &      \textbf{0.79} \\
\cline{1-13}

& \multicolumn{12}{c}{\textbf{\underline{FLA}}} \\

 \textbf{Coreutils} &               0.09 &         0.14 &          0.20 &         0.15 &               0.35 &      \textbf{0.55} &             0.06 &         0.07 &           0.30 &         0.20 &               0.46 &      \textbf{0.72} \\
                                             \textbf{cURL} &               0.16 &              &          0.22 &         0.22 &               0.70 &       \textbf{0.8} &             0.08 &              &           0.39 &         0.25 &               0.80 &       \textbf{0.9} \\
                                             \textbf{Diffutils} &               0.22 &         0.14 &          0.20 &         0.10 &               0.40 &      \textbf{0.64} &             0.16 &         0.23 &           0.36 &         0.14 &               0.54 &      \textbf{0.86} \\
                                             \textbf{Findutils} &               0.17 &         0.13 &          0.24 &         0.13 &               0.37 &      \textbf{0.63} &             0.12 &         0.20 &           0.34 &         0.16 &               0.45 &      \textbf{0.77} \\
                                             \textbf{Gzip} &               0.25 &              &          0.22 &         0.20 &      \textbf{0.32} &               0.28 &             0.18 &              &           0.41 &         0.27 &      \textbf{0.42} &               0.37 \\
                                             \textbf{PuTTY} &               0.08 &              &          0.17 &         0.06 &               0.22 &      \textbf{0.61} &             0.07 &              &           0.16 &         0.08 &               0.30 &       \textbf{0.8} \\
\cline{1-13}

& \multicolumn{12}{c}{\textbf{\underline{SUB}}} \\

\textbf{Coreutils} &               0.11 &         0.15 &          0.25 &         0.34 &               0.31 &      \textbf{0.54} &             0.06 &         0.08 &           0.38 &         0.45 &               0.42 &      \textbf{0.72} \\
                                             \textbf{cURL} &               0.07 &              &          0.35 &         0.56 &               0.66 &      \textbf{0.79} &             0.05 &              &           0.63 &         0.65 &               0.76 &      \textbf{0.91} \\
                                             \textbf{Diffutils} &               0.28 &         0.16 &          0.27 &         0.44 &               0.39 &      \textbf{0.64} &             0.18 &         0.31 &           0.51 &         0.60 &               0.54 &      \textbf{0.88} \\
                                             \textbf{Findutils} &               0.12 &         0.14 &          0.30 &         0.38 &               0.34 &      \textbf{0.65} &             0.07 &         0.26 &           0.44 &         0.48 &               0.43 &      \textbf{0.82} \\
                                             \textbf{Gzip} &               0.25 &              &          0.32 &         0.32 &      \textbf{0.38} &               0.25 &             0.17 &              &  \textbf{0.61} &         0.43 &               0.51 &               0.34 \\
                                             \textbf{PuTTY} &               0.07 &              &          0.30 &         0.08 &               0.18 &      \textbf{0.62} &             0.05 &              &           0.29 &         0.11 &               0.24 &      \textbf{0.84} \\

\cline{1-13}

& \multicolumn{12}{c}{\textbf{\underline{ALL}}} \\

\textbf{Coreutils} &               0.06 &         0.14 &          0.16 &         0.08 &               0.29 &      \textbf{0.58} &             0.04 &         0.03 &           0.24 &         0.11 &               0.36 &      \textbf{0.74} \\
                                             \textbf{cURL} &               0.00 &              &          0.16 &         0.15 &               0.48 &      \textbf{0.78} &             0.00 &              &           0.26 &         0.16 &               0.52 &      \textbf{0.84} \\
                                             \textbf{Diffutils} &               0.29 &         0.11 &          0.15 &         0.11 &               0.35 &      \textbf{0.65} &             0.20 &         0.24 &           0.26 &         0.14 &               0.45 &      \textbf{0.81} \\
                                             \textbf{Findutils} &               0.08 &         0.11 &          0.16 &         0.08 &               0.31 &      \textbf{0.61} &             0.04 &         0.23 &           0.22 &         0.10 &               0.37 &      \textbf{0.74} \\
                                             \textbf{Gzip} &               0.10 &              &          0.17 &         0.10 &      \textbf{0.34} &               0.28 &             0.07 &              &           0.28 &         0.12 &      \textbf{0.41} &               0.35 \\
                                             \textbf{PuTTY} &               0.04 &              &          0.11 &         0.03 &               0.14 &      \textbf{0.65} &             0.02 &              &           0.10 &         0.04 &               0.17 &      \textbf{0.77} \\

\bottomrule
\end{tabular}
}
\end{table}

Code obfuscation is a deliberate technique that alters code to make it difficult for humans to understand or interpret its underlying functionality.
This section evaluates the ability of the different diffing tools to resist the effects of code obfuscation on the target binaries.
To this end, we use O-LLVM to modify binaries before passing them to the diffing software.
O-LLVM is a widely used obfuscation tool integrated with the Clang and LLVM toolchain (specifically, it runs on Clang version 4.0)~\cite{ollvm-JunodRWM}.
O-LLVM offers three primary obfuscation schemes:

\begin{itemize}

\item \textbf{Bogus Control Flow (BCF)}: This scheme adds unnecessary branches guarded by conditions (opaque predicates) that always evaluate to the same outcome. These branches cannot be eliminated by compiler optimizations, making the code's control flow harder to follow.

\item \textbf{Control-Flow Flattening (FLA)}: This scheme flattens the Control Flow Graph by modifying the conditions within the graph and inserting additional code blocks that do not affect the program's functionality.
In practice, the program becomes a large switch within a loop~\cite{laszlo2009obfuscating}.

\item \textbf{Instruction Substitution (SUB)}: This scheme replaces a single instruction with a sequence of multiple instructions that achieve the same result. These replacements increase the complexity of the code without changing its behavior.
\end{itemize}

We generated obfuscated binaries using O-LLVM (Clang V4.0) on x86 at the \texttt{-O3} optimization level.  In addition to applying the three individual obfuscation passes mentioned above, we also created a variant by applying all three passes together (denoted as ``ALL'').  These obfuscated binaries were then compared to their non-obfuscated counterparts generated using the GCC compiler (version 10.0) with the optimization level set to \texttt{-O0}.  The experimental setup was identical to that described in Section~\ref{subsec:exp-diffing}.
Table~\ref{tab:obfuscation} shows the results of our approach and the baselines\footnote{We do not consider the binaries from Lua in this experiment as the \angr (V9.2.96) was not able to retrieve CFGs of the functions after obfuscation using Clang 4.}.
As shown in that table, \vexirtovec achieves the highest precision and recall values in almost all cases.  \vexirtovec obtains the highest average F1 score of $0.65$, compared to BinFinder's $0.41$. SAFE outperforms OPC, with an average F1 score of $0.29$ versus $0.22$. In general, it can be seen that the baselines do not perform well under obfuscation.

\subsection{RQ2: Searching and Retrieval}
\label{subsec:exp-searching}

In this section, we evaluate our approach on the searching task described in Section~\ref{def:binsim-tasks} to provide an answer to \hyperref[RQ2]{RQ2}.

\subsubsection{Ground Truth}
\label{sss:gt_searching}

We create a pool of functions from the test set of x86 binaries generated with Clang (V8.0.1 and V12.0.1) and GCC (V8.0 and V10.0), at four optimization levels: \texttt{-O0}, \texttt{-O1}, \texttt{-O2} and \texttt{-O3}.
In each search, functions from one compilation configuration are used as the query set.
All the other functions, excluding the query set, form the set of search candidates.
We obtain the function identifiers by following the same process described in Section~\ref{subsec:exp-diffing} to form the ground truth. 
In any search set, there are $15$ candidates that can match a query ($2 \ \mbox{compilers} \times 2 \ \mbox{versions} \times 4 \ \mbox{optimizations} - \mbox{input query}=15$).
On average, a query set has about $2K$ functions, and a search set has about $323K$ functions; thus, the chance of correctly retrieving a match for a query with a random guess is $\approx 15/323K$.

\begin{figure}[ht]
    \centering
    \includegraphics[width=\linewidth]{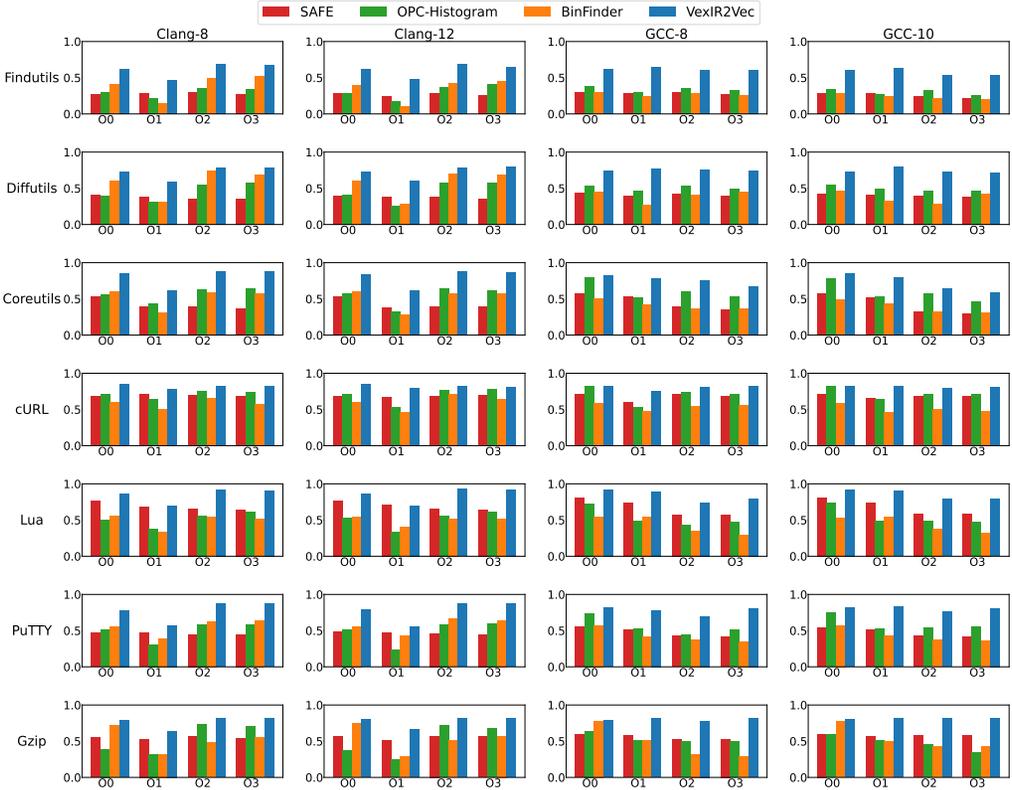}
    \caption{Mean Average Precision (MAP) scores for searching experiments.}
    \Description{MAP scores for searching experiment}
    \label{fig:MAP-Scores-searching}
    \vspace{-\baselineskip}
\end{figure}

\subsubsection{Metric}
We use the Mean Average Precision (MAP) as the evaluation metric for the searching task.
MAP is computed as $\frac{1}{Q}\sum_{q=1}^{Q}AP(q)$, where $AP(q)$ is the average precision for query $q$. $Q$ is the total number of queries. $AP(q)$ is defined as $\frac{1}{N}\sum_{i=1}^{N}Prec(i)\times Rel(i)$. 
$Prec(i)$ is precision at $i$; 
$Rel(i)$ denotes the relevance at $i$. $Rel(i)$ is set to $1$ when the retrieved function at position $i$ matches the query function, and is set to $0$ otherwise. $N$ is the total number of retrieved functions that match the query function.

\subsubsection{Mixture of all - searching}
\label{sec:all-searching}

Each query function is searched against the candidate set of functions. Similar to Section~\ref{subsec:exp-diffing}, we obtain a list of $10$ nearest functions to the query function among the set of candidates ranked by the Euclidean distance. 
A function from the list is considered to match the query function if both of them have the same source file and function name.
We do not consider BinDiff and DeepBinDiff for the searching experiments as they are designed only for binary diffing.
Figure~\ref{fig:MAP-Scores-searching} shows the MAP scores produced by SAFE, BinFinder, OPC, and \vexirtovec{} for this task.
\vexirtovec{} consistently obtains higher MAP scores than the other baselines across all compiler configurations.
The MAP score of \vexirtovec is $0.76$ on average, while SAFE's, BinFinder's, and OPC's are $0.5$, $0.47$, and $0.52$, respectively.

\subsection{RQ3: Evaluation of Vocabulary}
\label{sec:vocab-eval}

Our vocabulary is the set of 128-dimensional vectors that represent the entities that form the program's intermediate format: opcodes, types, and operands.
As explained in Section~\ref{sub:lookup}, this vocabulary is encoded as a function $\mathcal{V}_{lookup}$, which maps entities to vectors.
This section presents different experiments---essentially of a qualitative nature---that we have engineered to answer (\hyperref[RQ3]{RQ3}), which deals with the capacity of $\mathcal{V}_{lookup}$ to encode meaningful semantic information.

\subsubsection{Clustering}
\label{sss:vocab_clustering}

$\mathcal{V}_{lookup}$ encodes semantic information as points within an Euclidean Space.
Thus, entities that are semantically related tend to be mapped to points that are close in this space, as illustrated in Figure~\ref{fig:derivativeBasedProgramming} (Page~\pageref{fig:derivativeBasedProgramming}).
To provide some evidence that this semantic information is correctly encoded in $\mathcal{V}_{lookup}$, we use t-SNE~\cite{van2008tsne} to project the $128$-D entity vectors onto a bi-dimensional surface.
Figure~\ref{fig:vocab-clusters} shows the resulting image.
For easier interpretation, we mark entity vectors into nine logical groups.
These groups are either based on the type of the data they operate on (integers, floating-point numbers, vectors), or on the kind of operations (logic operations, loads, stores, comparisons, max/min).
We also have a group for entities to denote user-defined symbols: variables, functions, constants, etc.

\begin{table}[ht]
    \centering
    \begin{minipage}{0.53\textwidth}
        \centering
        \includegraphics[width=\textwidth]{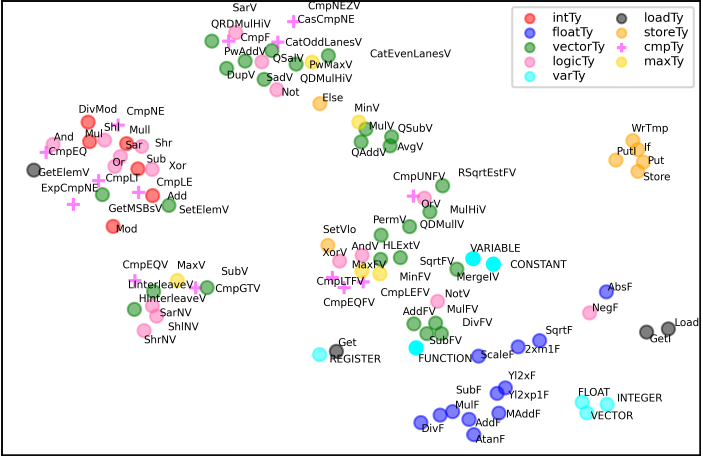}
        \captionof{figure}{Vocabulary Clusters}
        \Description{Vocabulary Clusters}
        \label{fig:vocab-clusters}
    \end{minipage}
    \qquad
    \begin{minipage}{0.26\textwidth}
        \centering
        \captionof{table}{Example Analogies}
        \label{tab:some-analogies}
        \resizebox{\textwidth}{!}{
        \begin{tabular}{l}
        \toprule
        \multicolumn{1}{c}{\textbf{Syntactic Analogies}}\\ 
        \midrule
        getI : putI    :: get   : put                 \\ 
        subf : addf    :: subfv : addfv   \\ 
        add  : addf    :: sub   : subf       \\ 
        ext  : integer :: extf  : float  \\ 
        orv  : vector  :: or    : integer \\
        \hline
        \multicolumn{1}{c}{\textbf{Semantic Analogies}} \\ 
        \hline
        shl  : mul     :: shr : divmod             \\
        mul  : divmod  :: and : or              \\
        get  : put :: load  : store \\
        maxfv : minfv :: addfv : subfv \\
        get  : register :: geti : constant \\
        \bottomrule
        \end{tabular}
        }
    \end{minipage}
\end{table}

Figure~\ref{fig:vocab-clusters} shows the distinct formation of smaller clusters in the case of integer types, float types, and store types.
The cluster of logical types containing operations such as \texttt{xor} are closer to the integer types.
This proximity comes from the fact that logical operations actuate on integer-type variables.
Similar observations can also be made for some of the vector types lying in closer proximity to some of the Ext types, such as \texttt{truncv}.

\subsubsection{Analogies}
\label{sss:vocab_analogies}

An {\it analogy} in a machine-learning embedding is a relationship between vectors that reflects a similar relationship between the entities they represent, often expressed in the form: "{\it a is to b as c is to d}."
\vexirtovec{} lets us explore several kinds of analogies.
For instance, an analogy such as ``\texttt{get} is to \texttt{put} as \texttt{load} is to \texttt{store}'' explores the similarity between different semantics of data movement instructions related to register and memory accesses.
In order to probe \vexirtovec's capacity to build meaningful analogies, we have designed $90$ analogy queries to cover relations among operators, types, arguments, and their semantics.
We represent an analogy query as \textit{a\,:\,b\,::\,c\,:\,?}, meaning ``{\it if a is to b, then c is to which entity?}''
To answer the query, the missing value is computed as the entity the closest to $b - a + c$ by Euclidean distance.
Appendix~\ref{appendix:clusters-analogies} gives the full list of 90 queries.
Table~\ref{tab:some-analogies} lists some of the analogies that were correctly captured by $\mathcal{V}_{lookup}$.
Notably, the vocabulary is able to capture intrinsic semantic information to capture that multiplication and division operations can be achieved using left and right shift operations, respectively. Other relations that characterize the type of operand for each operation are also captured.
In this exercise, we observed the expected query result for 63 out of 90 queries, totaling an accuracy of 70\%.

\subsubsection{Number of Dimensions}
\label{sss:vocab_dimensions}

The queries explored in Section~\ref{sss:vocab_analogies}
gave us enough data to fine-tune hyperparameters (learning rate, batch size, and margin) besides the number of dimensions used in the representation of entities.
As shown in Figure~\ref{fig:dim-oov}(a), we observe an increase in the accuracy in correctly answering the number of analogies with the increase in the number of dimensions until $128$.
Further increase in the number of dimensions resulted in a fall in accuracy.

\begin{figure}[ht]
    \vspace{-\baselineskip}
    \centering
    \subfloat[Accuracy vs Dimensions]{\includegraphics[width=0.48\linewidth]{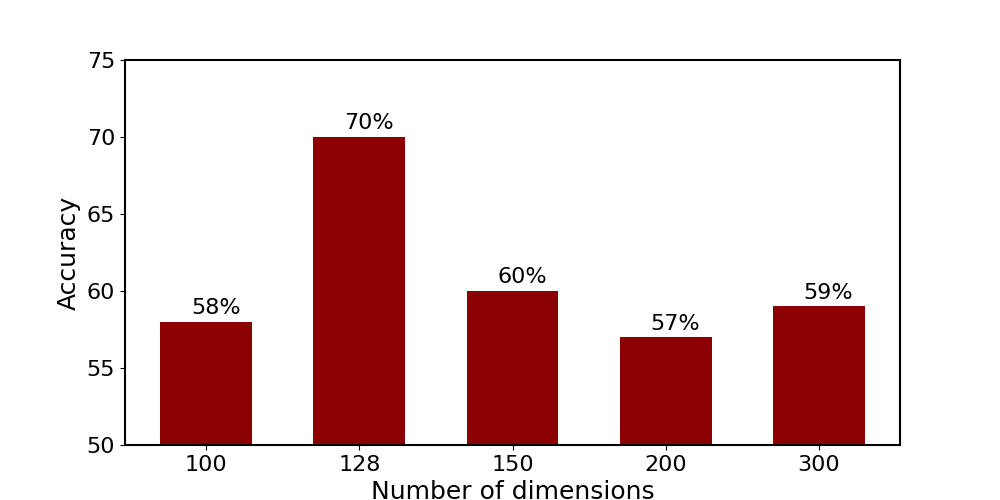}} \quad
    \subfloat[\#OOV Occurrences]{{\includegraphics[width=0.4\linewidth]{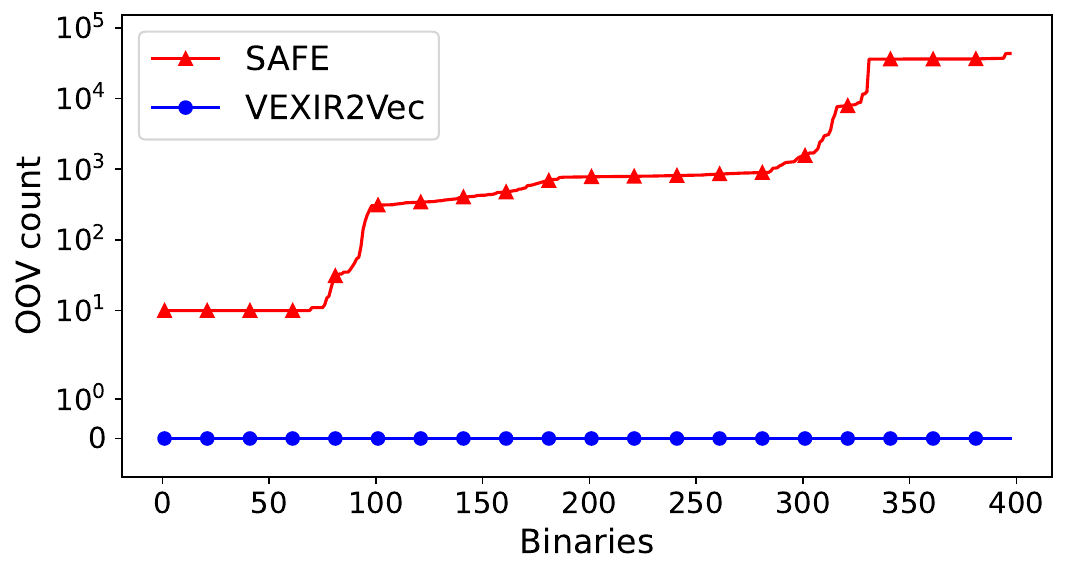}} }%
    \caption{\#Dimensions of Embeddings and OOV Study}
    \Description{\#Dimensions of Embeddings and OOV Study}
    \label{fig:dim-oov}
    \vspace{-\baselineskip}
\end{figure}

\subsubsection{OOV study}
\label{sss:oov_study}
The performance of supervised learning approaches typically depends on the quality and comprehensiveness of the training data. When the training data does not cover the entire vocabulary of the language, unseen words can occur at test time, leading to Out-Of-Vocabulary (OOV) issues. These OOV words, which are not seen by the model during training, create challenges in accurately predicting and understanding the input binaries during inference.

The approaches that do not model input properly suffer from serious OOV issues. Approaches like SAFE~\cite{zuo2018innereye} and InnerEye~\cite{zuo2018innereye} learn the representations at the instruction level. And, typically there could be a very high number of combinations of opcodes ($O$), types ($T$), and the number of arguments ($A$). To avoid OOV issues, such approaches should use a training set comprising all possible combinations of opcodes, types, and arguments, leading to an enormous training space of $|O| \times |T| \times |A|$. Covering this huge space is highly infeasible.  

In contrast, \vexirtovec learns the representations at the entity level, reducing the training space to  $|O| + |T| + |A|$. 
These entities exist in limited numbers and can be quickly learned. By covering all possible opcodes, types, and arguments individually, it is easier to avoid OOV issues with a limited training set in \vexirtovec.

Figure~\ref{fig:dim-oov}(b) shows the number of OOV occurrences encountered by SAFE and \vexirtovec during inference on a collection of $400$ randomly chosen binaries.
It can be observed that \vexirtovec did not face any OOV issues.
SAFE, in contrast, encountered a large number of ($\approx 10^5$) OOV words.
This study does not consider the implementations of OPC and BinFinder, because they are feature-based, not relying on a vocabulary of embeddings.

\subsection{RQ4: Scalability}
\label{sub:scalability}

In this section, we evaluate the scalability of \vexirtovec{} to answer \hyperref[RQ4]{RQ4}.

\subsubsection{Time for Embedding Generation}
\label{sss:embedding_time}

The implementation of \vexirtovec{} features two levels of parallelism:
\begin{description}
    \item[Thread level:] Each function of the binary is processed in parallel by different threads to obtain the function-level embedding.
    \item[Task level:] Each binary is processed in parallel via a new process.
\end{description}
Figure~\ref{fig:opttime-oov} (a) provides data that demonstrates that this parallelization yields benefits.
The figure reports running times to generate embeddings for a randomly chosen set of $100$ binaries from our dataset, whose sizes range between $15KB$ and $5MB$.
Lines show the cumulative time observed for the different binaries.
These binaries are sorted by size along the X-axis.
The implementation of \vexirtovec{} with one, two, four and eight threads takes $490$, $335$, $264$, $221$ seconds,
respectively, to process the $100$ binary files.
Thus, the eight-threaded implementation of \vexirtovec is $2.2 \times$ faster than its sequential version in this experiment.

\begin{figure}[ht]
    \vspace{-\baselineskip}
    \centering
    \subfloat[Embedding Generation Time]{\includegraphics[width=0.5\linewidth]{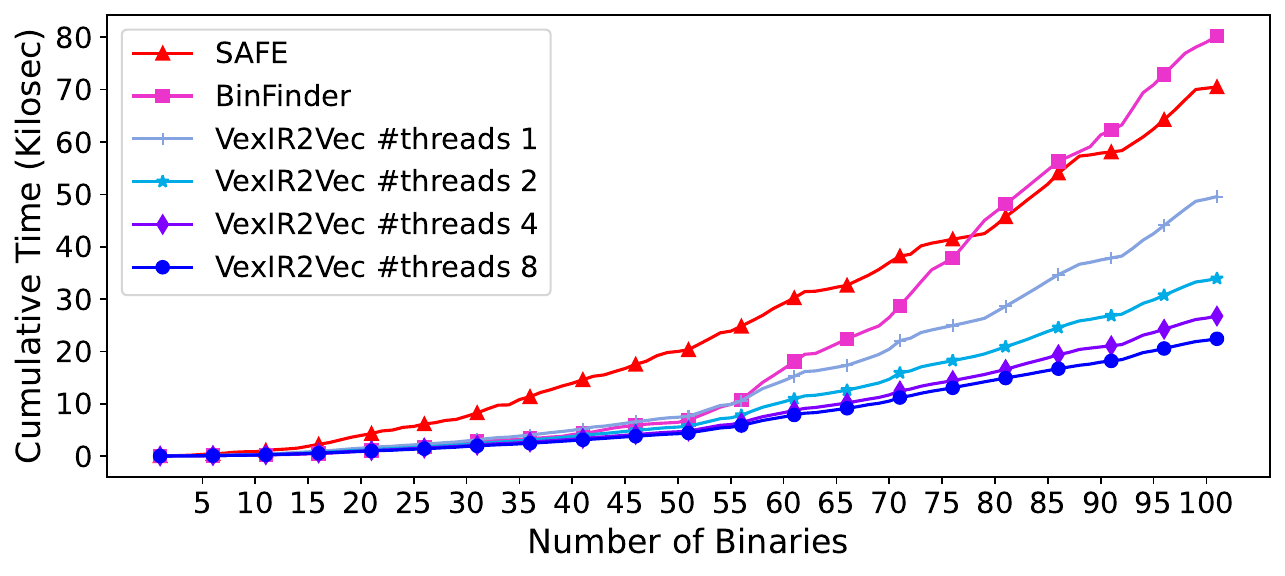}} \quad
    \subfloat[Normalization Time]{{\includegraphics[width=0.4\linewidth]{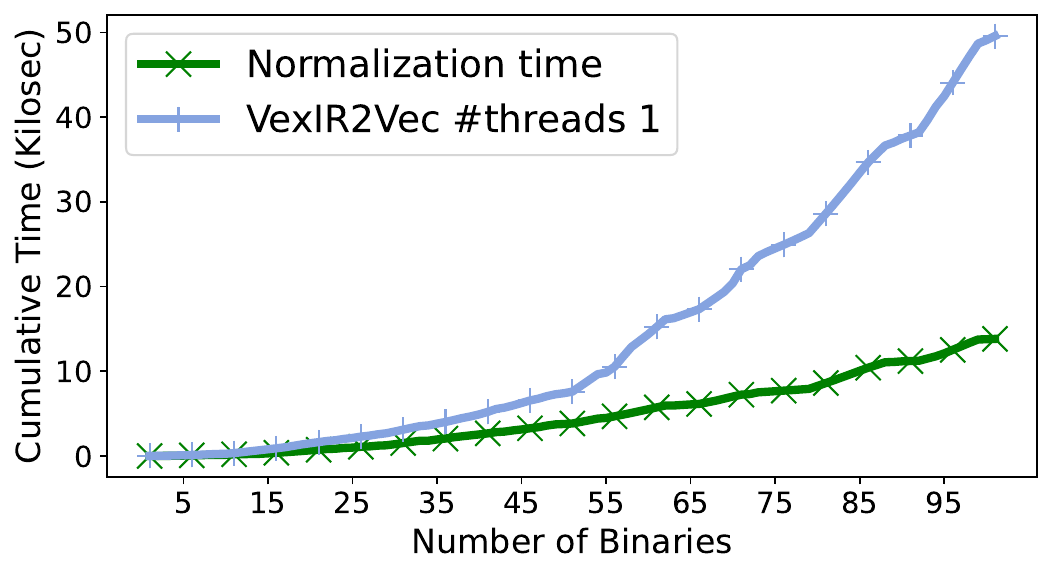}}}
    \caption{Time Taken to generate and normalize embeddings.}
    \Description{Time Taken by \optengine and OOV Study}
    \label{fig:opttime-oov}
    \vspace{-\baselineskip}
\end{figure}

Figure~\ref{fig:opttime-oov} (a) also reports running times to generate embeddings via SAFE and BinFinder.
We omit BinDiff and OPC as they do not generate embeddings.
We also omit DeepBinDiff because, in this experiment, this tool takes about $6,800$ seconds on average, timing out for binaries greater than $300KB$.
SAFE and Binfinder take $700$ and $795$ seconds, respectively, to process the 100 binaries; hence, these tools tend to be much slower than \vexirtovec.
\vexirtovec{} running with one thread achieves a speedup of $1.62 \times$ and $1.42 \times$ over BinFinder and SAFE, respectively.
With eight threads, this speedup grows to $3.5 \times$ and to $3.11 \times$, respectively.
BinFinder and SAFE also seem to be impacted by worst asymptotic behavior: the time these tools need to produce embeddings grows orders of magnitude faster with the binary size compared to \vexirtovec.

\subsubsection{Time taken by \optengine}
\label{sss:time_optengine}

Figure~\ref{fig:opttime-oov} (b) shows the time that \optengine takes to normalize the same 100 binaries used in Section~\ref{sss:embedding_time}.
All the normalizations described in Section~\ref{sub:normalization} can be implemented to run in linear time on the number of instructions that constitute a peephole.
Linearity comes from the fact that peepholes are straight-line sequences of code; hence, each optimization visits each instruction only once.
Figure~\ref{fig:opttime-oov} (b) indicates that the time required to normalize peepholes grows more slowly than the overall time taken by \vexirtovec{} to produce embeddings.
The cumulative time taken by \optengine is about $25\%$ of the time that \vexirtovec{} takes to produce embeddings.

\section{Ablation Study}
\label{sec:ablation}

\vexirtovec{} is customizable.
For instance, its scalability and precision can be controlled by parameters of Algorithm~\ref{algorithm:random-walk}, such as the length $k$ of peepholes and the number $c$ of times each basic block is visited to produce a peephole.
Similarly, precision can be enhanced with the addition of phases such as the normalization proposed in Section~\ref{sub:normalization}, which is entirely optional.
In other words, \vexirtovec{} can still be extracted from non-normalized peepholes.
To evaluate the impact of different customizations on the precision and on efficiency of \vexirtovec, this section provides answers to the following research questions:

\begin{enumerate}[label=(RQ\arabic*)]
    \setcounter{enumi}{4}
    \item \labelText{}{RQ5} How the parameters of Algorithm~\ref{algorithm:random-walk} ($k$ and $c$) impact the precision and the scalability of \vexirtovec? (Section~\ref{sub:random_walk})
    \item \labelText{}{RQ6} What is the impact of the different types of normalization implemented by \optengine on the precision of \vexirtovec? (Section~\ref{sub:eval_normalization})
    \item \labelText{}{RQ7} Can the proposed normalizations be adapted to work with the baseline classifiers used in this work, and in case such is possible, how would they change the precision of these tools? (Section~\ref{sub:eval_norm_baselines})
    \item \labelText{}{RQ8} What is the contribution of strings and library calls to the precision of \vexirtovec? (Section~\ref{sub:eval_strings})
    \item \labelText{}{RQ9} What is the relative importance of the different program entities in the precision of \vexirtovec? Or, in other words, how much attention does \vexnet{} put on each kind of entity? (Section~\ref{sub:attention_eval})
\end{enumerate}

\subsection{RQ5: Impact of the Parameters of Random Walk}
\label{sub:random_walk}

The random walk algorithm to generate peepholes described in Algorithm~\ref{algorithm:random-walk} is parameterized by the maximum length of the peephole $k$ and the minimum number of visits per basic block $c$.
The experiments discussed in Section~\ref{sec:performance-evaluation} use $k = 72$ and $c = 2$.
We have arrived at this configuration after testing different values.
This section describes this search, using, to this end, the setting involving cross-architecture and mixture-of-all diffing experiments.

Figure~\ref{fig:ablation1}(a) shows the variation in F1 score for different values of $k$.
The F1 score increases with $k$ from $1$ to $72$.
This growth indicates the benefit of additional contextual information that comes with larger peepholes.
However, past $k = 72$, we start observing a decrease in F1 scores.
Indeed, $k=144$ and $k=32$ deliver almost identical results in the cross-architecture/mixture-of-all setting.
This behavior suggests that context improves the precision of \vexirtovec; however, only up to a certain limit.

\begin{figure}[ht]
    \centering
    \vspace{-\baselineskip}
    \subfloat[Impact of $k$]{{\includegraphics[width=0.37\linewidth]{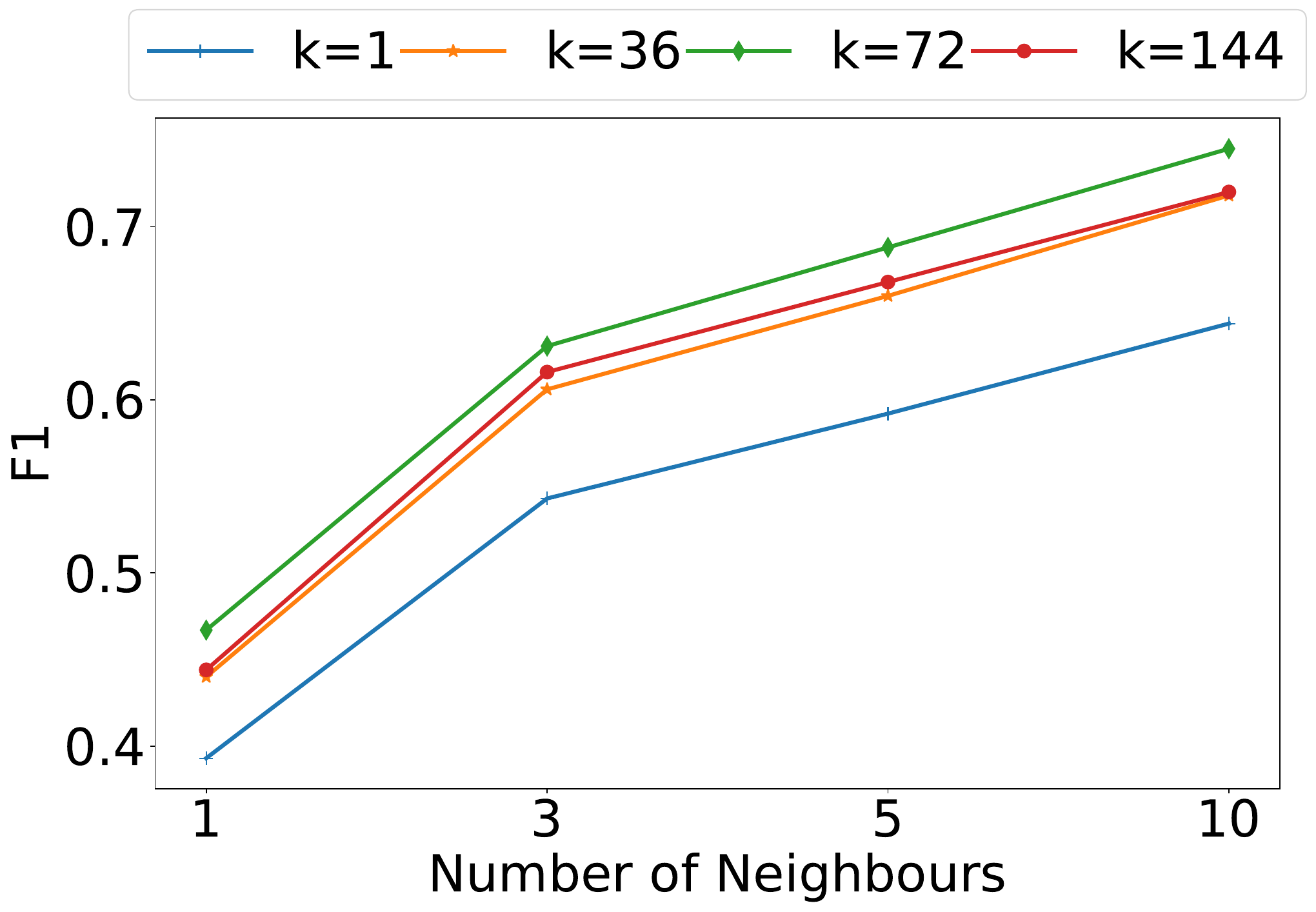}}} \quad
    \subfloat[Impact of $c$]{{\includegraphics[width=0.37\linewidth]{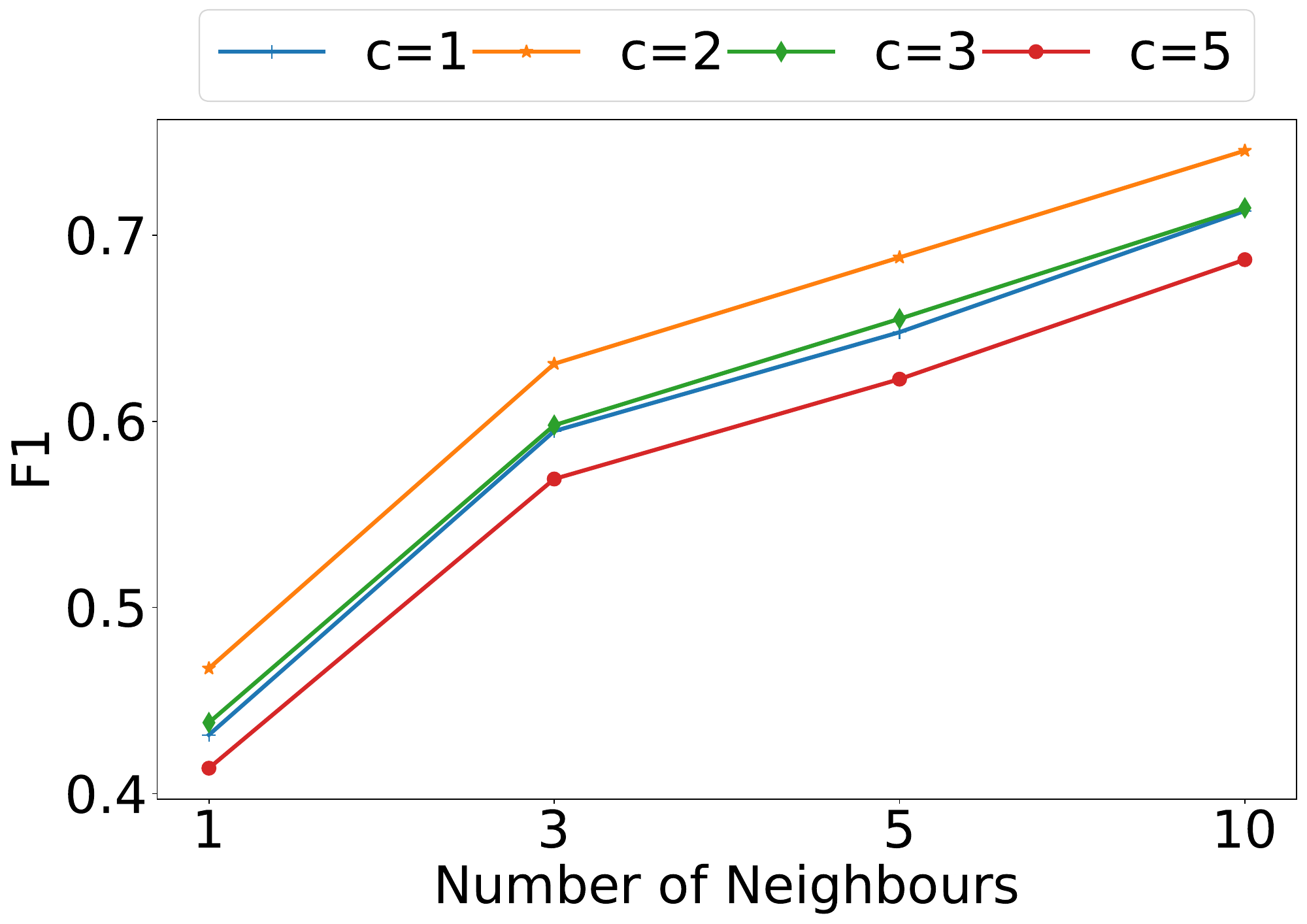}}}
    \caption{The impact of the parameters of the random walk (Algorithm~\ref{algorithm:random-walk}) on the F1 scores of \vexirtovec.}
    \Description{Impact of Random Walk Parameters}
    \label{fig:ablation1}
    \vspace{-\baselineskip}
\end{figure}

Similar to the experiment on $k$, we vary the values of $c$, the number of times each basic block is visited by Algorithm~\ref{algorithm:random-walk}, and record F1 scores.
Figure~\ref{fig:ablation1}(b) shows the results of this experiment.
F1 scores increase from $c=1$ to $c=2$.
However, upon increasing $c$ further to $3$ and $5$, we observe a reduction in F1 scores.
The results obtained for $c=3$ closely match those seen with $c=1$.
This behavior indicates a similar trend as that of $k$:
it is better to visit a basic block multiple times than just once.
However, there seems to exist a limit to how much information can be derived from the extra visits.
In our case, this limit is two.

\subsection{RQ6: Effectiveness of Normalization}
\label{sub:eval_normalization}

To answer \hyperref[RQ6]{RQ6}, we have created four sets of normalizations, each containing an increasing collection of the transformations introduced in Section~\ref{sub:normalization}: 

\begin{enumerate}
    \item[$N0$:] No normalization.
    \item[$N1$:] Register optimizations, Redundant Write Elimination, and Copy propagation.
    \item[$N2$:] $N1$ along with Constant Propagation, Constant Folding, and Common Subexpression Elimination.
    \item[$N3$:] All the normalizing transformations described in Section~\ref{sub:normalization}.
\end{enumerate}

Figure~\ref{fig:norm-ablation}(a) shows the F1 score measured when \vexirtovec is equipped with each of these sets of normalizing transformations.
In these experiments, a new model is trained from scratch with the obtained embeddings generated by the normalized peepholes.
We observe an increasing trend in performance with the increase in normalization level.
Consequently, $N3$ yields the highest F1 score.
$N0$, in turn, yields the lowest.
Nevertheless, even in this poor configuration---without any support of normalization---\vexirtovec{} is still able to produce higher F1 scores than the baseline approaches.
This empirical observation supports the trend observed in Examples~\ref{ex:opt-steps} and \ref{ex:final_optimization}.
These examples provide intuition on how successive normalizing transformations simplify peepholes, hence filtering out non-essential syntactic elements and revealing their essential semantic characteristics.

\begin{figure}[ht]
    \vspace{-\baselineskip}
    \centering
    \subfloat[Levels of Normalizations on \vexirtovec]{{\includegraphics[width=0.39\linewidth]{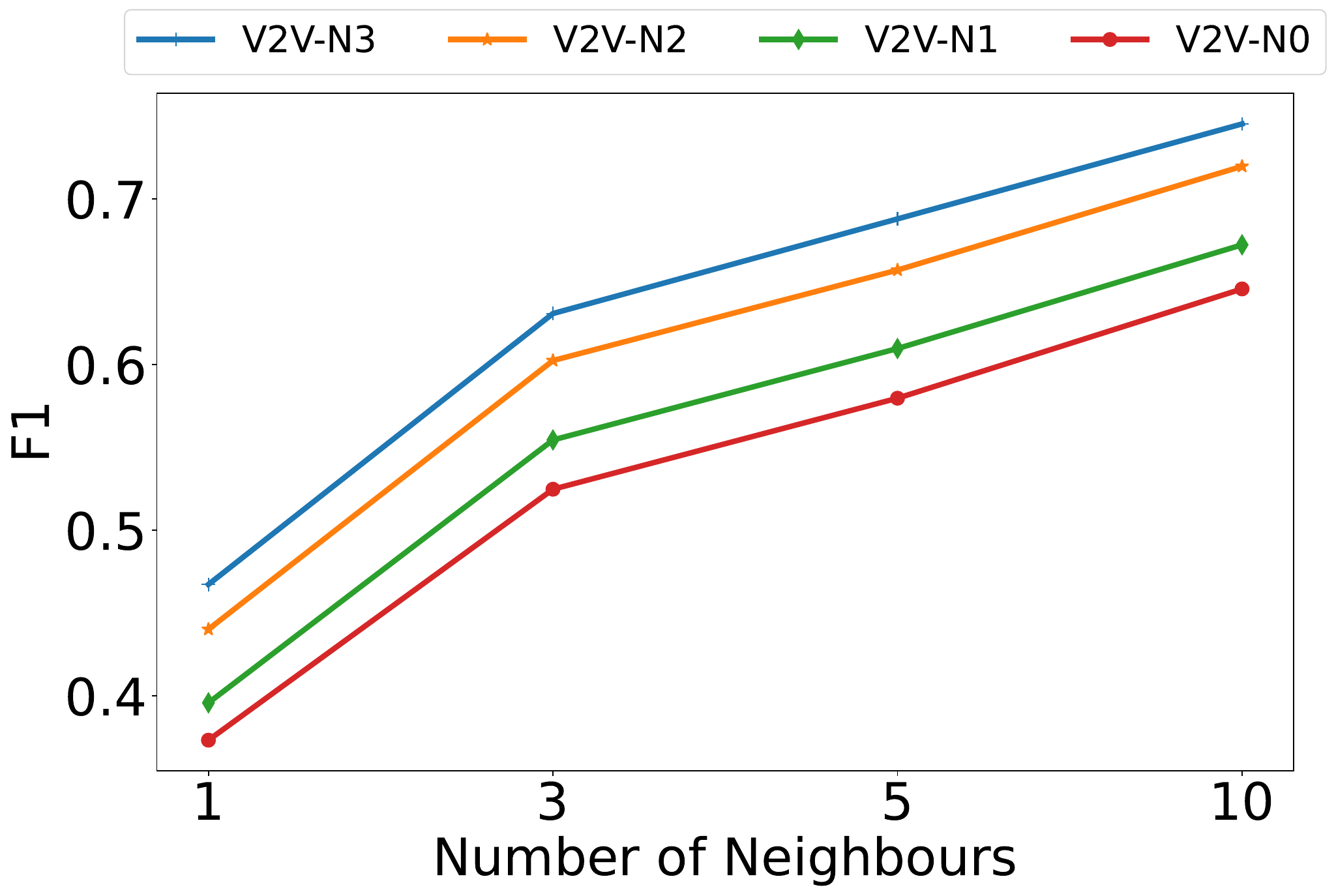}} }%
    \qquad
    \subfloat[Normalizations on BinFinder \& OPC]{{\includegraphics[width=0.39\linewidth]{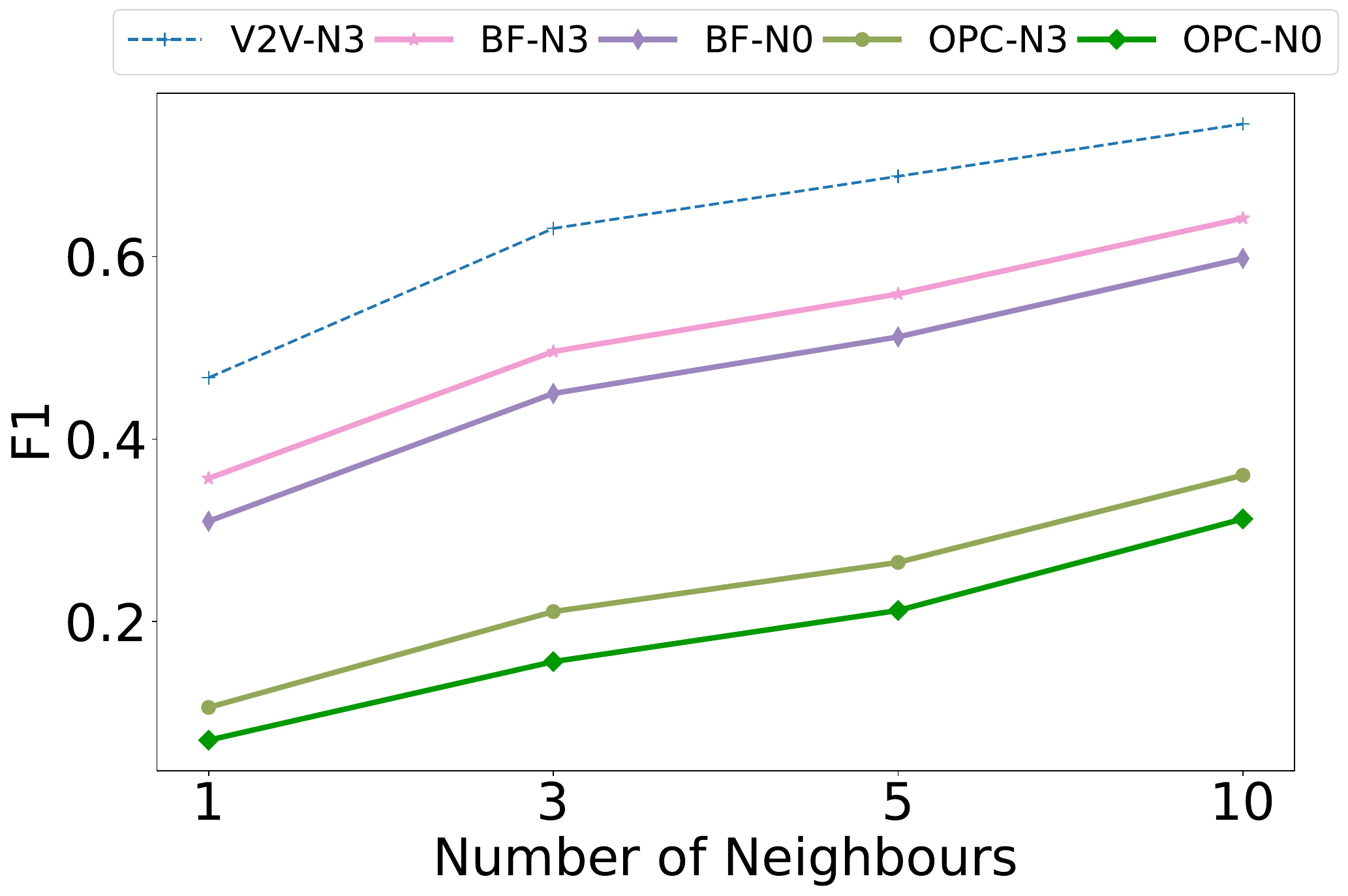}} }%
    \caption{The impact of the normalizing transformations (Sec.~\ref{sub:normalization}) on the F1 scores of \vexirtovec.}
    \Description{Effectiveness of Normalization}
    \label{fig:norm-ablation}
    \vspace{-\baselineskip}
\end{figure}

\subsection{RQ7: Impact of Normalization on Other Baselines}
\label{sub:eval_norm_baselines}

Although we have designed the normalizing transformations of Section~\ref{sub:normalization} to work in tandem with \vexirtovec, they can still be of service in other binary similarity tools.
Question \hyperref[RQ7]{RQ7} explores this possibility.
To answer this research question, we consider two baselines: BinFinder and Histograms of Opcodes.
These are the only baselines that we see how to augment with normalizations.

We have modified the implementation of these baselines, so that they would extract embeddings from normalized peepholes.
Peepholes are extracted from the very Algorithm~\ref{algorithm:random-walk} that empowers \vexirtovec.
Decomposing functions into peepholes is fundamental to carry out this experiment, for the normalizing transformations were designed to work on straight-line sequences of instructions.
Embeddings are extracted per peephole and then merged via vector addition, as done for \vexirtovec.

Figure~\ref{fig:norm-ablation}(b) shows the resulting F1 scores for these two modified baselines, using two versions of each one of them.
The first version, with the suffix $N0$, is the original implementation without any normalization.
The other version, with the suffix $N3$, is the modified version, which incorporates all the transformations described in Section~\ref{sub:normalization}.
Figure~\ref{fig:norm-ablation}(b) shows that normalization improves F1 scores in both the baselines.
Notice, however, that even when equipped with the highest normalization level ($N3$, as explained in Section~\ref{sub:eval_normalization}), the baselines still lag behind \vexirtovec{} running on non-normalized peepholes.
Nevertheless, this experiment implies that compiler-inspired normalizations, as a general pre-processing transformation, are general enough to be deployed onto different embedding functions.

\subsection{RQ8: Contribution of Strings and Library Calls}
\label{sub:eval_strings}

As Section~\ref{sss:strings} explains, the final embedding $\embedFFinal$ that characterizes the \vexirtovec{} representation of a function includes information extracted from strings and library calls.
This information is encoded in the $\embedEqn{S}$ and $\embedEqn{L}$ vectors that Section~\ref{sub:ean} introduces.
\hyperref[RQ8]{RQ8} asks for the importance of these vectors in the design of \vexirtovec.
This section provides an answer to this question.

To answer \hyperref[RQ8]{RQ8}, we train the \vexnet model only by considering $\llangle \embedEqn{O}, \embedEqn{T}, \embedEqn{A} \rrangle$; that is, without adding in the $\embedEqn{S}$ and $\embedEqn{L}$ vectors.
Figure~\ref{fig:ablation2}(a) shows the resulting F1 scores.
On average, strings and library calls improve the F1 scores across different neighbors by about $0.1$ ($20\%$).
Thus, strings and library calls provide \vexirtovec{} with non-trivial information.
They enhance the quality of the function embeddings and, consequently, the performance of \vexnet.
Nevertheless, even without strings and library calls, \vexirtovec{} remains more precise than the competing baselines.
For instance, while BinFinder and SAFE achieve average F1 scores of $0.58$ and $0.32$, respectively, \vexirtovec, even without the extra data provided by strings and library calls, achieves an F1 score of $0.63$.

\begin{figure}[ht]
    \centering
    \vspace{-\baselineskip}
    \subfloat[Impact of Strings and Library Calls]{\includegraphics[scale=0.4]{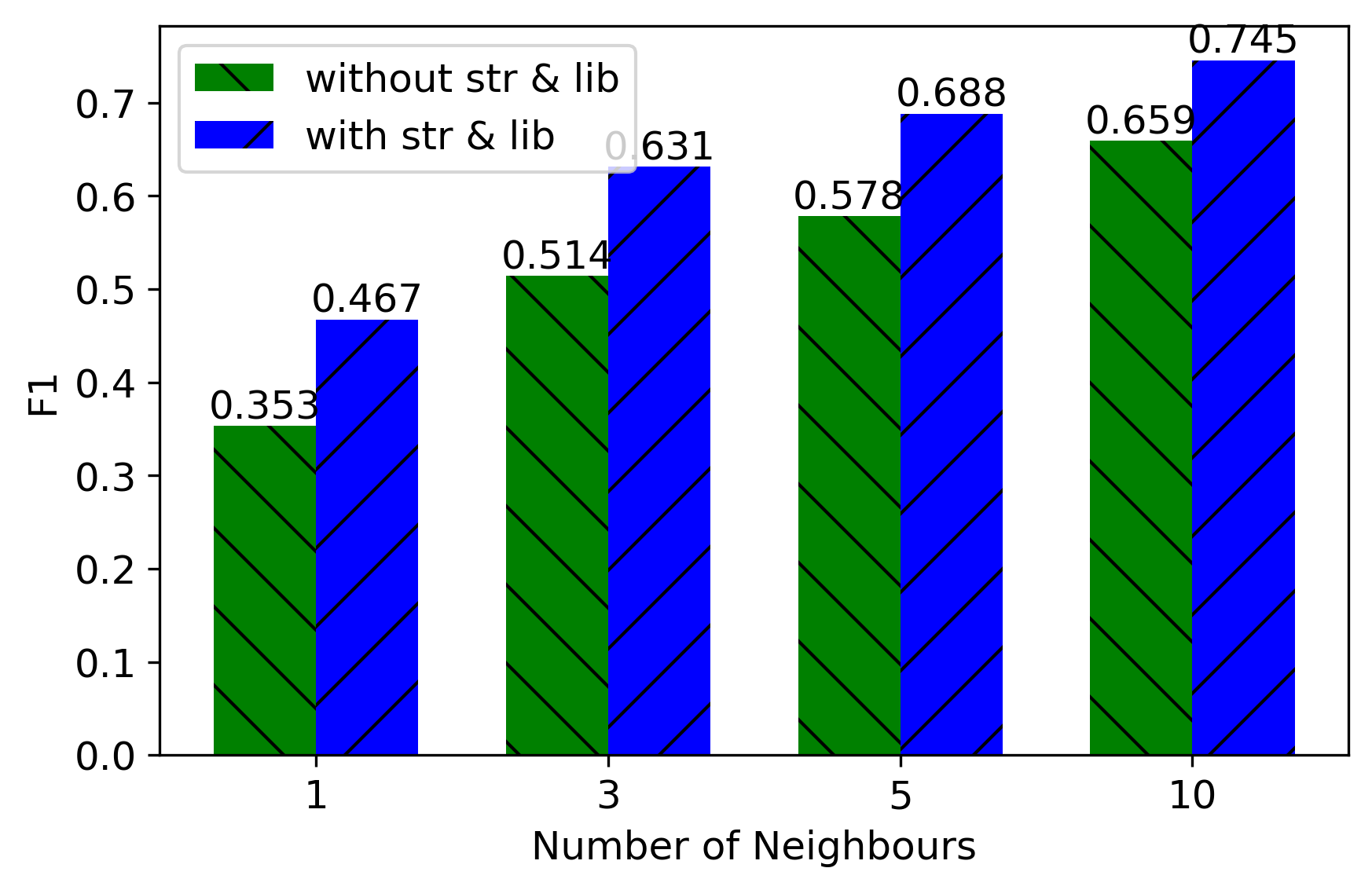}} \qquad
    \subfloat[Heatmap of Attention Scores]{{\includegraphics[width=0.3\linewidth]{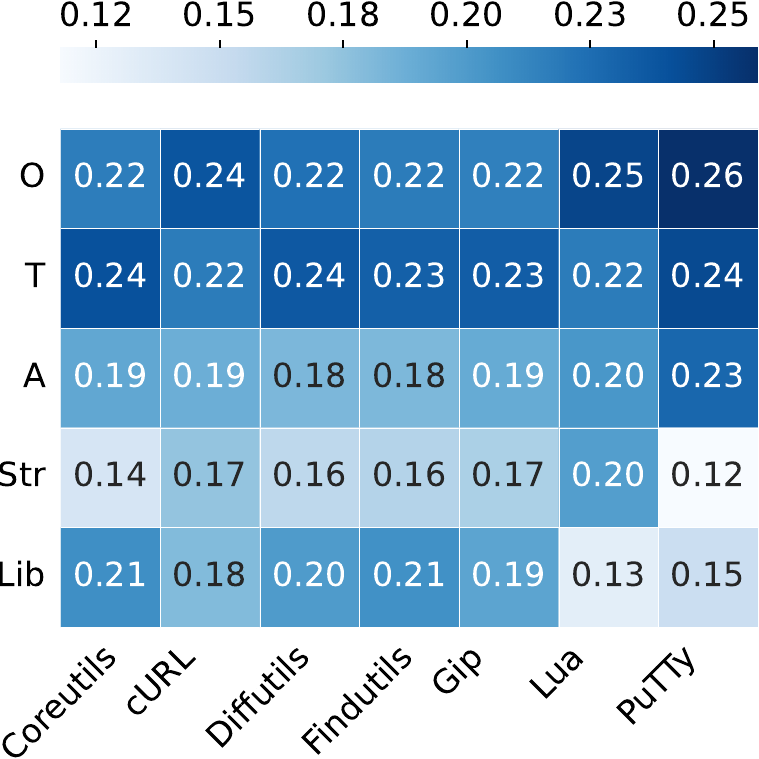}}}
    \Description{Impact of Random Walk Parameters and Effectiveness of Normalization}
    \caption{The contribution of different entities (see Equation~\ref{eq:entities}) to the F1 scores of \vexirtovec.}
    \label{fig:ablation2}
    \vspace{-\baselineskip}
\end{figure}

Data extracted from strings and library calls is important because it tends to remain unchanged, even in very aggressive adversarial settings.
Thus, unless the compiler determines that the relevant portion of the code is \textit{dead}, strings and library calls will persist in the different versions of the binary code.
However, strings and library calls alone cannot be used as code embedding for two reasons.
First, because many functions simply do not contain any data of this kind.
Second, different functions might still contain exactly the same data, such as calls to I/O operations, for instance.
Therefore, while strings and library calls cannot be the primary representation of a function, their inclusion alongside other features considerably improves the ability of \vexirtovec{} to match binary code. 

\subsection{RQ9: Attention Weights}
\label{sub:attention_eval}

As Section~\ref{sub:ean} explains, the final embedding that characterizes a function includes information taken from five vectors:
operands ($\embedEqn{O}$), types ($\embedEqn{T}$), arguments ($\embedEqn{A}$), strings  ($\embedEqn{S}$) and library calls ($\embedEqn{L}$).
Section~\ref{sub:eval_strings} provides some evidence that strings and library calls, although not essential, are important to ensure the high precision of \vexirtovec.
In this section, we analyze how the model perceives the other entities by examining the learned attention values.

Figure~\ref{fig:ablation2}(b) examines the learned attention values of the different vectors that constitute \vexirtovec.
The figure includes data taken from the entire test set described in Section~\ref{sec:experimental-setup}.
The heatmap reveals that \vexirtovec assigns varying levels of importance to different features across various datasets (Coreutils, cURL, Diffutils, Findutils, Gzip, Lua, and PuTTY).
Generally, Opcodes (O), Types (T), and Arguments (A) receive higher attention scores, highlighting their critical role in function representation.
However, Strings (Str) and External Library calls (Lib) might, sometimes, receive higher attention than Arguments (A).
This observation is especially true when they provide unique information.
For instance, if a particular library call is only invoked in a certain function $f$, this feature will be essential to distinguish $f$ from other routines.
Notice that the model does not give any attention weights to strings and library calls if they are not present in the function.

We observe that unique or less commonly used strings tend to receive higher attention scores.
For example, the string ``all'' which appears $64$ times in our entire dataset, receives an attention score of 0.3. In contrast, commonly used strings like ``\%s'', which appears $14K$ times, get a much lower attention score ($0.07$).
Similarly, specialized or domain-specific library functions, such as \texttt{pow} and \texttt{sin}, receive more attention weights than commonly used functions like \texttt{malloc} and \texttt{free}. Hence, on average, projects like Coreutils and PuTTY, which have many common strings, receive a lower attention weight for strings.

Figure~\ref{fig:ablation2}(b) shows that \vexirtovec learns to assign varying degrees of importance to different entities, aligning with their contextual contributions for binary similarity. 
This adaptive approach stands in contrast with works like IR2Vec~\cite{VenkataKeerthy-2020-IR2Vec}, which assigns fixed weights to Opcodes, Types, and Arguments heuristically, with a predetermined importance order of $O > T > A$.

\section{Related Work}
\label{sec:relatedworks}

Given the practical importance of binary similarity, extensive research has been dedicated to studying this problem\footnote{As testimony to this variety, the ``{\it Awesome Binary Similarity}'' webpage catalogs about $214$ publications as on June 2024 - \url{https://github.com/SystemSecurityStorm/Awesome-Binary-Similarity}}.
This section covers different facets of this literature.
Notice that much of this literature is defined by the fact that
determining program equivalence is an undecidable problem~\cite{Rice53}; hence, solutions to binary similarity are based on heuristics.
Early implementations of such heuristics worked on source code~\cite{Hunt77}.
As a summary of this section, Figure~\ref{fig:all-configurations} positions our work within this literature.

\begin{figure}[ht]
    \vspace{-0.5\baselineskip}
    \centering
    \includegraphics[scale=0.48]{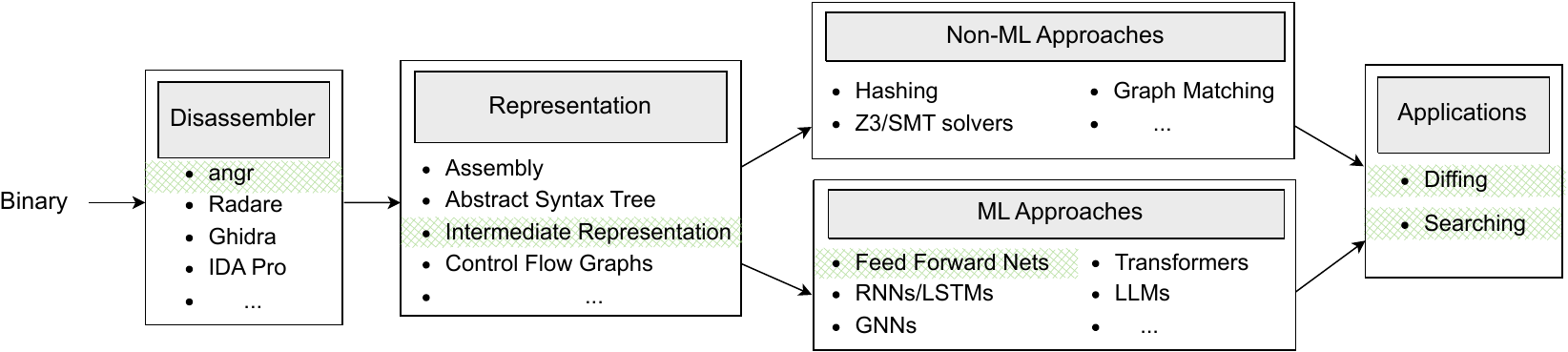}
    \caption{The different elements that constitute the Binary Similarity problem. The options used by \vexirtovec under each facet are highlighted.}
    \Description{The many parts involved in the implementation of binary similarity tools. We have highlighted the elements of
    this taxonomy that are present in the design of \vexirtovec.}
    \label{fig:all-configurations}
    \vspace{-\baselineskip}
\end{figure}

Binary code became a more intense focus of research in the nineties.
Those initial efforts were mostly centered on sequence alignment algorithms~\cite{Coppieters95,Baker98} and hash functions~\cite{Wang00}. Canonicalization techniques, like register renaming~\cite{Debray00}, were proposed around that time.
These techniques are still in use today~\cite{Collyer23,He24,massarelli2019safe,Wang22jTrans,li2021palmtree}.
However, nowadays, research on binary similarity is directed towards machine-learning approaches~\cite{ding2019asm2vec,massarelli2019safe,duan2020deepbindiff,li2021palmtree,wang2023sem2vec,Yu2020OrderMatters}.

\subsection{Binary Similarity}
\label{sub:xisa_rw}

In Table~\ref{tab:survey}, we provide a detailed comparison of our approach with other related approaches on various design choices and applications, some of which we explain below.
To provide the reader with some perspective on how different solutions to adversarial binary similarity compare, the column ``Works Compared'' in Table~\ref{tab:survey} shows which tools have been used as baselines during the evaluation of different approaches.

\begin{table}[ht]
    \centering
    \caption{Comparison of \vexirtovec with earlier works. XO, XC, and XA indicate cross-optimization, cross-compiler, and cross-architecture support. 
    D and S indicate support for Diffing and Searching tasks.}
    \label{tab:survey}
   \resizebox{\textwidth}{!}{%
    \begin{tblr}{
        colspec={llllccccllcc},
        cell{1}{2-5} = {r=2}{c, font=\bfseries},
        cell{1}{6} = {c=3}{c, font=\bfseries},
        cell{1}{9} = {c=2}{c, font=\bfseries},
        cell{1}{11} = {c=2}{c, font=\bfseries},
        cell{2}{9-10} = {}{c},
        row{odd}={bg=lightgray}, 
        row{1}={bg=white}, 
        column{1}={bg=white, font=\bfseries},
        row{2}={font=\bfseries},
    }
    \toprule
    & Name & Works Compared & Disassembler & {Input \\Representation} & Comp Conf. & & & Approach &  & Tasks & \\
    \cmidrule[lr=-0.5]{6-8}
    \cmidrule[lr=-0.5]{9-10}
    \cmidrule[lr=-0.5]{11-12}
    & & & & & XO & XC & XA & Models used & GPUs & D & S \\
    \midrule
    $\Rightarrow$ & \vexirtovec & \cite{Qasem2023Binfinder-AsiaCCS, Damasio23, massarelli2019safe,duan2020deepbindiff, bindiff} & angr & VEX IR & Y & Y & Y & FCNN & 1 RTX 3080 & Y & Y \\
    \cmidrule[gray]{2-13}\morecmidrules\cmidrule[gray]{2-13}
    \SetCell[r=5]{c}{\rotatebox{90}{2023}} & BinFinder~\cite{Qasem2023Binfinder-AsiaCCS} & \cite{massarelli2019safe,zeek2018plas,ding2019asm2vec,Li19GraphMatchingNetworks} & angr & Assembly & Y & Y & Y & FCNN & 8 TITAN & Y & Y \\
    & Sem2Vec~\cite{wang2023sem2vec} & \cite{bindiff,ding2019asm2vec,massarelli2019safe,Yu2020OrderMatters,li2021palmtree} & angr & Assembly & Y & Y & & {RoBERTa,\\ LSTM, GNN} & 1 Tesla V100 32G & Y & Y \\
    & kTrans~\cite{zhu2023ktrans} & \cite{xu2017gemini,Wang22jTrans,li2021palmtree} & IDA-Pro & Assembly & Y & & & BERT & 4 V100 & & Y \\
    & VulHawk~\cite{Luo23} & \cite{ding2019asm2vec, yang2021asteria, li2021palmtree, Li19GraphMatchingNetworks, massarelli2019safe, pei2020trex} & IDA-Pro & MicroCode & Y & Y & Y & {RoBERTa,\\ GCN, ResNet} & 1 RTX 3090 & Y & Y \\
    & Asteria-pro~\cite{yang2023asteriapro} & \cite{xu2017gemini,massarelli2019safe,pei2020trex} & IDA-Pro & AST & & Y & Y & Tree LSTM & & & Y \\
    \cmidrule[gray]{2-13}
    \SetCell[r=2]{c}{\rotatebox{90}{2022}} & jTrans~\cite{Wang22jTrans} & \cite{Feng2016Genius,xu2017gemini,massarelli2019safe,ding2019asm2vec,Yu2020OrderMatters} & IDA-Pro & Assembly & Y & & & BERT & 8 A100 & Y & \\
    & BinShot~\cite{ahn2022BinShot} & \cite{xu2017gemini,ding2019asm2vec,li2021palmtree} & IDA-Pro & Assembly & Y & Y & & BERT & 2 RTX A6000 & Y & \\
    \cmidrule[gray]{2-13}
    \SetCell[r=4]{c}{\rotatebox{90}{2021}} & Oscar~\cite{peng2021oscar} & \cite{bindiff,ding2019asm2vec,Yu2020OrderMatters} & RetDec & LLVM IR & Y & & & RoBERTa & 8 Tesla V100 & Y & \\
    & PalmTree~\cite{li2021palmtree} & \cite{xu2017gemini,Guo19DeepVSA,chua17eklavya} & BinaryNinja & Assembly & Y & Y & & BERT & 1 GTX 2080Ti & & Y \\
    & Asteria~\cite{yang2021asteria} & \cite{xu2017gemini,Feng2016Genius,eschweiler2016discovre} & IDA-Pro & AST & & & Y & Tree LSTM & & Y & \\
    & Codee~\cite{jia2021codee} & \cite{ding2019asm2vec,xu2017gemini,massarelli2019safe,duan2020deepbindiff,Yu2020OrderMatters} & IDA-Pro, angr & Assembly & Y & Y & Y & Word2Vec & 2 GTX RTX5000 & & Y \\
    \cmidrule[gray]{2-13}
    \SetCell[r=3]{c}{\rotatebox{90}{2020}} & DeepBinDiff~\cite{duan2020deepbindiff} & \cite{ding2019asm2vec,bindiff} & angr & Assembly & Y & & & DeepWalk & Does not use & Y & \\
    & Trex~\cite{pei2020trex} & \cite{ding2019asm2vec,massarelli2019safe,xu2017gemini} &  & Assembly & Y & Y & Y & {Hierarchical\\ Transformers} & 8 RTX 2080-Ti & & Y \\
    & Order Matters~\cite{Yu2020OrderMatters} & \cite{xu2017gemini} &  & Assembly & & & Y & {BERT, ResNet,\\ GRU} & & & Y \\
    \cmidrule[gray]{2-13}
    \SetCell[r=3]{c}{\rotatebox{90}{2019}} & SAFE~\cite{massarelli2019safe} & \cite{xu2017gemini} & Radare2 & Assembly & Y & Y & Y & RNNs & K80 & & Y \\
    & Asm2vec~\cite{ding2019asm2vec} & \cite{chandramohan2016bingo,funcsimsearch} & IDA-Pro & Assembly & Y & Y & & Word2Vec & & & Y \\
    & Innereye~\cite{zuo2018innereye} & -- & BAP & Assembly & Y & & Y & FCNN, LSTM & Does not use & & Y \\
    \cmidrule[gray]{2-13}
    \SetCell[r=2]{c}{\rotatebox{90}{2017}} & Gemini~\cite{xu2017gemini} & ~\cite{Feng2016Genius} & IDA-Pro & Assembly & & & & structure2vec & 1 GTX 1080 & Y & \\
    & \citet{yaniv2017pldi} & -- & angr & LLVM IR & Y & Y & Y & Non-ML & & & Y \\
    \cmidrule[gray]{2-13}
     & BinDiff~\cite{bindiff} & -- & Ghidra & -- & Y & Y & Y & Non-ML & & Y & \\
    \bottomrule
    \end{tblr}
    }
\end{table}

\paragraph{Disassemblers and Input Representations}
Binary similarity analysis begins by disassembling the input binary using disassemblers and binary analysis tools~\cite{wang2017angr, IDAPro, Ghidra, Radare, brumley2011BAP}. 
These tools translate the binary code into human-readable formats like assembly code, Abstract Syntax Tree (AST), or Intermediate Representations (IR).
Different approaches rely on different representations; however, a vast majority of binary similarity tools use assembly code~\cite{Qasem2023Binfinder-AsiaCCS, wang2023sem2vec, zhu2023ktrans, Wang22jTrans, ahn2022BinShot, li2021palmtree, jia2021codee, duan2020deepbindiff, pei2020trex, Yu2020OrderMatters, massarelli2019safe, ding2019asm2vec, zuo2018innereye, xu2017gemini}.
Works like Asteria~\cite{yang2021asteria} and Asteria-Pro~\cite{yang2023asteriapro} use AST, whereas \citet{yaniv2017pldi} and \citet{peng2021oscar} work on the LLVM IR.
This paper uses the \vexir representation generated by \angr to model binary similarity.
We chose \vexir because it is architecture-neutral and open-source.
\vexirtovec{} is heavily engineered to work on \vexir; however, its ideas could be employed on other representations.
Section~\ref{sub:eval_norm_baselines} supports this statement, demonstrating that the peepholes and normalizing transformations of Section~\ref{sub:normalization} could be used to enhance different binary similarity tools.

\paragraph{Compilation Configurations}
Adversarial binary similarity explores the challenges of identifying differences in binaries resulting from various compilation configurations, including cross-compiler, cross-optimization, and cross-architecture settings. These variations pose significant challenges to the binary classifier.
Consequently, they have received considerable research attention. However, not all existing implementations of binary similarity tools natively support all three scenarios.

Prior works by \citet{pewny2015sp} and \citet{Redmond18} initiated research on cross-architecture similarity. Their approaches, however, often suffer from limitations such as slow execution speeds~\cite{eschweiler2016discovre,qian2016bugsearch}. Existing works that rely on assembly codes for modeling binary similarity~\cite{duan2020deepbindiff, wang2023sem2vec, li2021palmtree, ahn2022BinShot} are architecture-specific and cannot perform cross-architecture binary similarity analysis. This limitation is often overcome by the usage of feature-based representations~\cite{Qasem2023Binfinder-AsiaCCS, kim23binaryfeatures, xu2017gemini, eschweiler2016discovre, chandramohan2016bingo}, with features extracted using binary analysis tools. Alternatively, unified representation spaces~\cite{pei2020trex,Kim22,zhang2020similarity} are created by training models on assembly codes from multiple architectures; however, they do not generalize to unseen architectures.

\vexirtovec overcomes these limitations by leveraging \vexir, the Intermediate Representation (IR) used by analysis tools like \angr and Valgrind. Vexir offers a more architecture-neutral view of the program compared to assembly code. Consequently, the implementation of \vexirtovec{} evaluated in this paper is effective in different adversarial settings, including the cross-compiler, the cross-optimization, and the cross-architecture scenarios.

\paragraph{Approach}
Early methods for comparing binaries relied on statistical analysis and static code inspection.
 Tools like BinDiff~\cite{bindiff} utilize graph isomorphism algorithms to identify code similarities.
Other approaches employ specialized hashing techniques~\cite{funcsimsearch, yaniv2017pldi} or solvers like Z3~\cite{pewny2015sp} and SMT~\cite{yaniv2016pldi} to perform more intricate comparisons.

More recent binary similarity techniques have seen the introduction of Machine Learning (ML) for generating n-dimensional vector representations of binaries.
 Asm2Vec~\cite{ding2019asm2vec} and SAFE~\cite{massarelli2019safe} adapt the word2vec model~\cite{mikolov2013word2vec} for this purpose. However, the majority of current research leverages transformers~\cite{Vaswani2017attentionTransformers} such as BERT~\cite{devlin2019bert} and RoBERTa~\cite{Liu2019Roberta}. These methods often model Control Flow Graphs (CFGs)~\cite{Luo23, Yu2020OrderMatters, qian2016bugsearch} using Graph Neural Networks (GNNs)~\cite{GNNBook2022}. Works like sem2vec~\cite{wang2023sem2vec} even combine symbolic execution powered by Z3, GNNs, and transformers for a more comprehensive representation learning process.

ML-based approaches can represent binaries either as feature vectors or by learning distributed vector representations.
 Tools like BinFinder~\cite{Qasem2023Binfinder-AsiaCCS} rely on predefined features like caller/callee information, status flags, and library calls. In contrast, Asm2Vec\cite{ding2019asm2vec} and Codee~\cite{jia2021codee} learn the input assembly representation by employing variations of the word2vec model~\cite{mikolov2013word2vec}.
\vexirtovec, in turn, leverages knowledge graph embeddings~\cite{transe-Bordes:2013:TEM:2999792.2999923} to model binary functions as distributed vectors.

\paragraph{Out-of-vocabulary Issue}
One of the main challenges in existing works is the out-of-vocabulary (OOV) problem. Approaches that learn instruction-level or basic block-level embeddings are highly prone to this issue~\cite{massarelli2019safe, zuo2018innereye, Redmond18}. In our experiments, we observed that SAFE encounters a very high number of OOV tokens, resulting in significant performance degradation. Works that use token-level embeddings generally face fewer OOV problems~\cite{ding2019asm2vec, duan2020deepbindiff}. However, solely encoding tokens can overlook the semantics of the instruction.

Canonicalization and normalization techniques can help mitigate OOV issues to some extent. Nevertheless, OOV problems can still arise if numeric and string constants are not properly handled. Works that consider such constants as tokens within their vocabulary often suffer from OOV issues~\cite{wang2023sem2vec, li2021palmtree}. Improper canonicalization can lead to a very large vocabulary, increasing the likelihood of OOV problems~\cite{zhu2023ktrans}. Additionally, works that model instructions and basic blocks as tokens and use LSTMs and transformers can suffer from OOV issues due to their fixed-length input requirements.

\vexirtovec mitigates the OOV issue by learning entity-level representations of instructions with carefully designed canonicalization, resulting in a smaller vocabulary. During fine-tuning, instruction-level semantics are effectively captured, addressing the shortcomings of previous methods.

\paragraph{Resources}
Several existing binary similarity approaches face limitations when dealing with large binaries containing thousands of basic blocks~\cite{haq2021binsurvey}.
These limitations stem from the time and resources required for training or deployment.
Techniques that leverage modern language models~\cite{ahn2022BinShot, wang2023sem2vec, zhu2023ktrans, Luo23, li2021palmtree,Wang22jTrans,peng2021oscar,Yu2020OrderMatters,pei2020trex} often require training millions of parameters.
This requirement leads to lengthy training times and necessitates significant computational resources.
For example, many of these approaches rely on large clusters equipped with multiple high-end GPUs, such as A100 and V100~\cite{Qasem2023Binfinder-AsiaCCS,zhu2023ktrans,Wang22jTrans,ahn2022BinShot,peng2021oscar,pei2020trex}.
Still, training can take weeks.
These demands translate to substantial resource and time constraints during deployment, hindering practical usability.

In contrast, \vexirtovec{} relies on lightweight feed-forward neural networks to model binary similarity.
This low-computational requirement enables our model to train effectively on a workstation equipped with a single RTX3080 GPU with 10 GB of memory.
As seen in Section~\ref{sub:scalability}, each training epoch takes 5-8 seconds, hence being orders of magnitude faster than other tools.

\paragraph{Application}
Existing binary similarity approaches can be broadly categorized into two groups:
those that learn a similarity score and those that learn a distance measure.
The first category focuses on predicting whether a pair of binary snippets are similar or dissimilar~\cite{duan2020deepbindiff, zuo2018innereye, zeek2018plas}.
The typical application of these methods is binary diffing, not searching.
They become impractical for searching due to their reliance on pairwise comparisons, resulting in a quadratic time complexity, as we have already explained in Section~\ref{sec:introduction}.

In contrast, approaches that learn a distance measure~\cite{massarelli2019safe, Qasem2023Binfinder-AsiaCCS} are applicable to both searching and diffing tasks.
\vexirtovec{} falls into this category, leveraging the Siamese networks of Section~\ref{sub:vexnet} to learn a distance measure between programs.
This design makes \vexirtovec{} suitable for both diffing and searching.

\subsection{Our Work in Perspective}
\label{sub:perspective}

\vexirtovec involves a two-step learning process - pre-training a vocabulary to derive the final representation of the function as a series of lookups, followed by fine-tuning with \vexnet to learn similarity using global attention.
This way of learning vocabulary representation draws its inspiration from our earlier work: IR2Vec~\cite{VenkataKeerthy-2020-IR2Vec}, which uses representation learning for obtaining program embeddings.
However, there are several key differences between the two methods:

\begin{description}
    \item[Embeddings:] IR2Vec models function embeddings through a heuristic-based accumulation of the entities of the LLVM-IR instructions, whereas \vexirtovec uses global attention to learn to combine the different entities of the \vexir instructions and metadata. Section~\ref{sub:attention_eval} shows that using the attention mechanism enhances the effectiveness of our proposed approach.
    \item[Sampling:] \vexirtovec is extracted from our notion of peepholes (Definition~\ref{def:peephole}). The same program instruction might appear in multiple peepholes or multiple times within the same peephole. As seen in Section~\ref{sub:random_walk}, the length of peepholes and the minimum number of times each basic block contributes to them are configurable parameters of \vexirtovec. 
    In contrast, IR2Vec is extracted from LLVM IR, where each IR instruction is visited exactly once.
    
    \item[Normalization:] The construction of \vexirtovec{} involves normalizing the \vexir using unsound architecture-independent optimizations implemented as \optengine, as explained in Section~\ref{sub:normalization}. In contrast, IR2Vec does not use any such form of normalization.
\end{description}

\paragraph{Canonicalization and Normalization}
As explained in Section~\ref{sub:canonicalization}, to construct the \vexirtovec{} embedding of a program,  we simplify its intermediate representation, replacing concrete syntax (constants, bitwidths, types, etc) with abstract placeholders.
Previous works on binary similarity have used similar forms of canonicalization.
The most common techniques consist in masking out constants, pointers, and registers~\cite{ding2019asm2vec,duan2020deepbindiff,zuo2018innereye, massarelli2019safe, farhadi2014binclone,wang2023sem2vec,li2021palmtree}. 

Optimizations, similar in purpose as the normalization step of Section~\ref{sub:normalization}, have also been used in previous works.
For instance, VulHawk~\cite{Luo23} uses \textit{def-use} information to prune off redundant and unused instructions in IDA-Pro Microcode~\cite{IDAPro}.
As another example, \citet{yaniv2017pldi} lift the binaries to LLVM IR~\cite{Lattner:2004:llvm} and decompose the functions into \textit{strands}, which are then subject to the off-the-shelf optimizations of LLVM.
These optimized \textit{strands} are then used for computing similarity through statistical methods.
Their notion of strands corresponds to the \textit{program slices} proposed by Weiser~\cite{weiser1984programSlicing}.
In this regard, \citeauthor{yaniv2017pldi}'s work differs from ours in two aspects: first, their optimizations are sound; second, they are not restricted to straight-line code sequences.
However, the differences between our work and \citeauthor{yaniv2017pldi} go much beyond optimizations: they extract embeddings from single program slices, whereas we extract them from peepholes, which might include multiple occurrences of the same instruction.
In terms of software engineering, we notice that extracting executable program slices is much more complex than extracting peepholes.
Due to this complexity and a lack of a public artifact, we could not compare \citeauthor{yaniv2017pldi}'s work and ours.

\section{Conclusions and Future Works}
\label{sec:conclusion}

We introduce \vexirtovec{}, a \vexir-based embedding framework designed to solve binary similarity tasks. 
\vexirtovec{} is architecture-neutral and can be tuned to solve different binary similarity problems, such as diffing or searching.
The process of constructing this embedding follows a three-phase approach.
In the first phase, we decompose the functions into peepholes derived from their CFG and normalize them by using our \optenginelong (\optengine).
In the second phase, we follow an unsupervised pre-training to learn the vocabulary of \vexir so as to represent the peepholes.
In the third phase, we train \vexnet, a simple feed-forward Siamese network, to learn the similarity metric.
Pre-training and fine-tuning are done once, independent of the binary similarity task.
We demonstrate that \vexirtovec{} is more scalable and more accurate than previous works, avoiding out-of-vocabulary issues that are common in tools that deal with the binary similarity problem.

We implement the core ideas of \vexirtovec{} in a specific setting: \vexir. However, we believe these ideas can be applied to other program representations as well. For example, 
the concept of peepholes and normalization can be used to enhance other feature-based or embedding-based representations.
We demonstrate how they can be adapted to work with previous approaches like BinFinder and histograms of opcodes.
Applying these ideas to settings beyond \vexir is a future research direction we hope to explore. Moreover, we plan to extend \vexirtovec{} to include dynamic profile information for increased precision.
We also plan to perform binary-source code matching that can be helpful in identifying if a vulnerable source code snippet is present in the given binary.
We plan to open-source its code and relevant datasets in the near future.


\bibliographystyle{ACM-Reference-Format}
\bibliography{references}


\begin{thebibliography}{114}


\ifx \showCODEN    \undefined \def \showCODEN     #1{\unskip}     \fi
\ifx \showDOI      \undefined \def \showDOI       #1{#1}\fi
\ifx \showISBNx    \undefined \def \showISBNx     #1{\unskip}     \fi
\ifx \showISBNxiii \undefined \def \showISBNxiii  #1{\unskip}     \fi
\ifx \showISSN     \undefined \def \showISSN      #1{\unskip}     \fi
\ifx \showLCCN     \undefined \def \showLCCN      #1{\unskip}     \fi
\ifx \shownote     \undefined \def \shownote      #1{#1}          \fi
\ifx \showarticletitle \undefined \def \showarticletitle #1{#1}   \fi
\ifx \showURL      \undefined \def \showURL       {\relax}        \fi
\providecommand\bibfield[2]{#2}
\providecommand\bibinfo[2]{#2}
\providecommand\natexlab[1]{#1}
\providecommand\showeprint[2][]{arXiv:#2}

\bibitem[Ahn et~al\mbox{.}(2022)]%
        {ahn2022BinShot}
\bibfield{author}{\bibinfo{person}{Sunwoo Ahn}, \bibinfo{person}{Seonggwan Ahn}, \bibinfo{person}{Hyungjoon Koo}, {and} \bibinfo{person}{Yunheung Paek}.} \bibinfo{year}{2022}\natexlab{}.
\newblock \showarticletitle{Practical Binary Code Similarity Detection with BERT-based Transferable Similarity Learning}. In \bibinfo{booktitle}{\emph{Proceedings of the 38th Annual Computer Security Applications Conference}} (Austin, TX, USA) \emph{(\bibinfo{series}{ACSAC '22})}. \bibinfo{publisher}{Association for Computing Machinery}, \bibinfo{address}{New York, NY, USA}, \bibinfo{pages}{361–374}.
\newblock
\showISBNx{9781450397599}
\urldef\tempurl%
\url{https://doi.org/10.1145/3564625.3567975}
\showDOI{\tempurl}


\bibitem[Akiba et~al\mbox{.}(2019)]%
        {akiba2019Optuna}
\bibfield{author}{\bibinfo{person}{Takuya Akiba}, \bibinfo{person}{Shotaro Sano}, \bibinfo{person}{Toshihiko Yanase}, \bibinfo{person}{Takeru Ohta}, {and} \bibinfo{person}{Masanori Koyama}.} \bibinfo{year}{2019}\natexlab{}.
\newblock \showarticletitle{Optuna: A Next-generation Hyperparameter Optimization Framework}. In \bibinfo{booktitle}{\emph{Proceedings of the 25th ACM SIGKDD International Conference on Knowledge Discovery \& Data Mining}} (Anchorage, AK, USA) \emph{(\bibinfo{series}{KDD '19})}. \bibinfo{publisher}{Association for Computing Machinery}, \bibinfo{address}{New York, NY, USA}, \bibinfo{pages}{2623–2631}.
\newblock
\showISBNx{9781450362016}
\urldef\tempurl%
\url{https://doi.org/10.1145/3292500.3330701}
\showDOI{\tempurl}


\bibitem[Ayupov et~al\mbox{.}(2024)]%
        {Ayupov24}
\bibfield{author}{\bibinfo{person}{Amir Ayupov}, \bibinfo{person}{Maksim Panchenko}, {and} \bibinfo{person}{Sergey Pupyrev}.} \bibinfo{year}{2024}\natexlab{}.
\newblock \bibinfo{title}{Stale Profile Matching}.
\newblock
\newblock
\showeprint[arxiv]{2401.17168}~[cs.PL]


\bibitem[Bahdanau et~al\mbox{.}(2015)]%
        {bahdanau2015globalAttention}
\bibfield{author}{\bibinfo{person}{Dzmitry Bahdanau}, \bibinfo{person}{Kyunghyun Cho}, {and} \bibinfo{person}{Yoshua Bengio}.} \bibinfo{year}{2015}\natexlab{}.
\newblock \showarticletitle{Neural Machine Translation by Jointly Learning to Align and Translate}. In \bibinfo{booktitle}{\emph{3rd International Conference on Learning Representations, {ICLR} 2015, May 7-9, 2015, Conference Track Proceedings}}, \bibfield{editor}{\bibinfo{person}{Yoshua Bengio} {and} \bibinfo{person}{Yann LeCun}} (Eds.). \bibinfo{address}{San Diego, CA, USA,}.
\newblock
\urldef\tempurl%
\url{http://arxiv.org/abs/1409.0473}
\showURL{%
\tempurl}


\bibitem[Baker and Manber(1998)]%
        {Baker98}
\bibfield{author}{\bibinfo{person}{Brenda~S. Baker} {and} \bibinfo{person}{Udi Manber}.} \bibinfo{year}{1998}\natexlab{}.
\newblock \showarticletitle{Deducing Similarities in Java Sources from Bytecodes}. In \bibinfo{booktitle}{\emph{1998 USENIX Annual Technical Conference (USENIX ATC 98)}}. \bibinfo{publisher}{USENIX Association}, \bibinfo{address}{New Orleans, LA}.
\newblock
\urldef\tempurl%
\url{https://www.usenix.org/conference/1998-usenix-annual-technical-conference/deducing-similarities-java-sources-bytecodes}
\showURL{%
\tempurl}


\bibitem[Bengio et~al\mbox{.}(2013)]%
        {replearning-review}
\bibfield{author}{\bibinfo{person}{Yoshua Bengio}, \bibinfo{person}{Aaron Courville}, {and} \bibinfo{person}{Pascal Vincent}.} \bibinfo{year}{2013}\natexlab{}.
\newblock \showarticletitle{Representation Learning: A Review and New Perspectives}.
\newblock \bibinfo{journal}{\emph{IEEE Trans. Pattern Anal. Mach. Intell.}} \bibinfo{volume}{35}, \bibinfo{number}{8} (\bibinfo{date}{Aug.} \bibinfo{year}{2013}), \bibinfo{pages}{1798--1828}.
\newblock
\showISSN{0162-8828}
\urldef\tempurl%
\url{https://doi.org/10.1109/TPAMI.2013.50}
\showDOI{\tempurl}


\bibitem[Bentley(1975)]%
        {bentley1975kdtrees}
\bibfield{author}{\bibinfo{person}{Jon~Louis Bentley}.} \bibinfo{year}{1975}\natexlab{}.
\newblock \showarticletitle{Multidimensional binary search trees used for associative searching}.
\newblock \bibinfo{journal}{\emph{Commun. ACM}} \bibinfo{volume}{18}, \bibinfo{number}{9} (\bibinfo{date}{sep} \bibinfo{year}{1975}), \bibinfo{pages}{509–517}.
\newblock
\showISSN{0001-0782}
\urldef\tempurl%
\url{https://doi.org/10.1145/361002.361007}
\showDOI{\tempurl}


\bibitem[Bojanowski et~al\mbox{.}(2017)]%
        {bojanowsk-2017-fasttext}
\bibfield{author}{\bibinfo{person}{Piotr Bojanowski}, \bibinfo{person}{Edouard Grave}, \bibinfo{person}{Armand Joulin}, {and} \bibinfo{person}{Tomas Mikolov}.} \bibinfo{year}{2017}\natexlab{}.
\newblock \showarticletitle{Enriching Word Vectors with Subword Information}.
\newblock \bibinfo{journal}{\emph{Transactions of the Association for Computational Linguistics}}  \bibinfo{volume}{5} (\bibinfo{year}{2017}), \bibinfo{pages}{135--146}.
\newblock
\urldef\tempurl%
\url{https://doi.org/10.1162/tacl_a_00051}
\showDOI{\tempurl}


\bibitem[Bordes et~al\mbox{.}(2013)]%
        {transe-Bordes:2013:TEM:2999792.2999923}
\bibfield{author}{\bibinfo{person}{A Bordes}, \bibinfo{person}{N Usunier}, \bibinfo{person}{A Garcia-Dur\'{a}n}, \bibinfo{person}{J Weston}, {and} \bibinfo{person}{O Yakhnenko}.} \bibinfo{year}{2013}\natexlab{}.
\newblock \showarticletitle{Translating Embeddings for Modeling Multi-relational Data} \emph{(\bibinfo{series}{NIPS'13})}. \bibinfo{pages}{2787--2795}.
\newblock
\urldef\tempurl%
\url{http://dl.acm.org/citation.cfm?id=2999792.2999923}
\showURL{%
\tempurl}


\bibitem[Bourquin et~al\mbox{.}(2013)]%
        {Bourquin13}
\bibfield{author}{\bibinfo{person}{Martial Bourquin}, \bibinfo{person}{Andy King}, {and} \bibinfo{person}{Edward Robbins}.} \bibinfo{year}{2013}\natexlab{}.
\newblock \showarticletitle{BinSlayer: Accurate Comparison of Binary Executables}. In \bibinfo{booktitle}{\emph{PPREW}} (Rome, Italy). \bibinfo{publisher}{Association for Computing Machinery}, \bibinfo{address}{New York, NY, USA}, Article \bibinfo{articleno}{4}, \bibinfo{numpages}{10}~pages.
\newblock
\showISBNx{9781450318570}
\urldef\tempurl%
\url{https://doi.org/10.1145/2430553.2430557}
\showDOI{\tempurl}


\bibitem[Brumley et~al\mbox{.}(2011)]%
        {brumley2011BAP}
\bibfield{author}{\bibinfo{person}{David Brumley}, \bibinfo{person}{Ivan Jager}, \bibinfo{person}{Thanassis Avgerinos}, {and} \bibinfo{person}{Edward~J. Schwartz}.} \bibinfo{year}{2011}\natexlab{}.
\newblock \showarticletitle{BAP: a binary analysis platform} \emph{(\bibinfo{series}{CAV'11})}. \bibinfo{publisher}{Springer-Verlag}, \bibinfo{address}{Berlin, Heidelberg}, \bibinfo{pages}{463–469}.
\newblock
\showISBNx{9783642221095}


\bibitem[Bucek et~al\mbox{.}(2018)]%
        {spec17-Bucek:2018:SCN:3185768.3185771}
\bibfield{author}{\bibinfo{person}{James Bucek}, \bibinfo{person}{Klaus-Dieter Lange}, {and} \bibinfo{person}{J\'{o}akim v. Kistowski}.} \bibinfo{year}{2018}\natexlab{}.
\newblock \showarticletitle{SPEC CPU2017: Next-Generation Compute Benchmark}. In \bibinfo{booktitle}{\emph{Companion of the 2018 ACM/SPEC International Conference on Performance Engineering}} (Berlin, Germany) \emph{(\bibinfo{series}{ICPE '18})}. \bibinfo{publisher}{ACM}, \bibinfo{address}{New York, NY, USA}, \bibinfo{pages}{41--42}.
\newblock
\showISBNx{978-1-4503-5629-9}
\urldef\tempurl%
\url{https://doi.org/10.1145/3185768.3185771}
\showDOI{\tempurl}


\bibitem[Chandramohan et~al\mbox{.}(2016)]%
        {chandramohan2016bingo}
\bibfield{author}{\bibinfo{person}{Mahinthan Chandramohan}, \bibinfo{person}{Yinxing Xue}, \bibinfo{person}{Zhengzi Xu}, \bibinfo{person}{Yang Liu}, \bibinfo{person}{Chia~Yuan Cho}, {and} \bibinfo{person}{Hee Beng~Kuan Tan}.} \bibinfo{year}{2016}\natexlab{}.
\newblock \showarticletitle{BinGo: Cross-Architecture Cross-OS Binary Search}. In \bibinfo{booktitle}{\emph{Proceedings of the 2016 24th ACM SIGSOFT International Symposium on Foundations of Software Engineering}} (Seattle, WA, USA) \emph{(\bibinfo{series}{FSE 2016})}. \bibinfo{publisher}{Association for Computing Machinery}, \bibinfo{address}{New York, NY, USA}, \bibinfo{pages}{678–689}.
\newblock
\showISBNx{9781450342186}
\urldef\tempurl%
\url{https://doi.org/10.1145/2950290.2950350}
\showDOI{\tempurl}


\bibitem[Chen et~al\mbox{.}(2020)]%
        {chen2020ContrastiveLearning}
\bibfield{author}{\bibinfo{person}{Ting Chen}, \bibinfo{person}{Simon Kornblith}, \bibinfo{person}{Mohammad Norouzi}, {and} \bibinfo{person}{Geoffrey Hinton}.} \bibinfo{year}{2020}\natexlab{}.
\newblock \showarticletitle{A simple framework for contrastive learning of visual representations}. In \bibinfo{booktitle}{\emph{Proceedings of the 37th International Conference on Machine Learning}} \emph{(\bibinfo{series}{ICML'20})}. \bibinfo{publisher}{JMLR.org}, Article \bibinfo{articleno}{149}, \bibinfo{numpages}{11}~pages.
\newblock


\bibitem[Chua et~al\mbox{.}(2017)]%
        {chua17eklavya}
\bibfield{author}{\bibinfo{person}{Zheng~Leong Chua}, \bibinfo{person}{Shiqi Shen}, \bibinfo{person}{Prateek Saxena}, {and} \bibinfo{person}{Zhenkai Liang}.} \bibinfo{year}{2017}\natexlab{}.
\newblock \showarticletitle{Neural Nets Can Learn Function Type Signatures From Binaries}. In \bibinfo{booktitle}{\emph{26th USENIX Security Symposium (USENIX Security 17)}}. \bibinfo{publisher}{USENIX Association}, \bibinfo{address}{Vancouver, BC}, \bibinfo{pages}{99--116}.
\newblock
\showISBNx{978-1-931971-40-9}
\urldef\tempurl%
\url{https://www.usenix.org/conference/usenixsecurity17/technical-sessions/presentation/chua}
\showURL{%
\tempurl}


\bibitem[Collyer et~al\mbox{.}(2023)]%
        {Collyer23}
\bibfield{author}{\bibinfo{person}{Josh Collyer}, \bibinfo{person}{Tim Watson}, {and} \bibinfo{person}{Iain Phillips}.} \bibinfo{year}{2023}\natexlab{}.
\newblock \bibinfo{title}{FASER: Binary Code Similarity Search through the use of Intermediate Representations}.
\newblock
\newblock
\showeprint[arxiv]{2310.03605}


\bibitem[Coppieters(1995)]%
        {Coppieters95}
\bibfield{author}{\bibinfo{person}{Krish Coppieters}.} \bibinfo{year}{1995}\natexlab{}.
\newblock \bibinfo{title}{A Cross-Platform Binary Diff}.
\newblock \bibinfo{howpublished}{\url{https://www.drdobbs.com/embedded-systems/a-cross-platform-binary-diff/184409550}}.
\newblock
\newblock
\shownote{[Online; accessed 12-Nov-2023]}.


\bibitem[{Coreutils}(2024)]%
        {coreutils}
\bibfield{author}{\bibinfo{person}{{Coreutils}}.} \bibinfo{year}{2024}\natexlab{}.
\newblock \bibinfo{title}{{GNU Coreutils}}.
\newblock \bibinfo{howpublished}{\url{https://www.gnu.org/software/coreutils/}}.
\newblock
\newblock
\shownote{[version 9.0; Online; accessed 08-May-2024]}.


\bibitem[Cytron et~al\mbox{.}(1991)]%
        {ssa:cytron1991efficiently}
\bibfield{author}{\bibinfo{person}{Ron Cytron}, \bibinfo{person}{Jeanne Ferrante}, \bibinfo{person}{Barry~K Rosen}, \bibinfo{person}{Mark~N Wegman}, {and} \bibinfo{person}{F~Kenneth Zadeck}.} \bibinfo{year}{1991}\natexlab{}.
\newblock \showarticletitle{Efficiently computing static single assignment form and the control dependence graph}.
\newblock \bibinfo{journal}{\emph{ACM Transactions on Programming Languages and Systems (TOPLAS)}} \bibinfo{volume}{13}, \bibinfo{number}{4} (\bibinfo{year}{1991}), \bibinfo{pages}{451--490}.
\newblock


\bibitem[Dam\'{a}sio et~al\mbox{.}(2023)]%
        {Damasio23}
\bibfield{author}{\bibinfo{person}{Tha\'{\i}s Dam\'{a}sio}, \bibinfo{person}{Michael Canesche}, \bibinfo{person}{Vin\'{\i}cius Pacheco}, \bibinfo{person}{Marcus Botacin}, \bibinfo{person}{Anderson Faustino~da Silva}, {and} \bibinfo{person}{Fernando~M. Quint\~{a}o Pereira}.} \bibinfo{year}{2023}\natexlab{}.
\newblock \showarticletitle{A Game-Based Framework to Compare Program Classifiers and Evaders}. In \bibinfo{booktitle}{\emph{Proceedings of the 21st ACM/IEEE International Symposium on Code Generation and Optimization}} (Montr\'{e}al, QC, Canada) \emph{(\bibinfo{series}{CGO 2023})}. \bibinfo{publisher}{Association for Computing Machinery}, \bibinfo{address}{New York, NY, USA}, \bibinfo{pages}{108–121}.
\newblock
\showISBNx{9798400701016}
\urldef\tempurl%
\url{https://doi.org/10.1145/3579990.3580012}
\showDOI{\tempurl}


\bibitem[David et~al\mbox{.}(2016)]%
        {yaniv2016pldi}
\bibfield{author}{\bibinfo{person}{Yaniv David}, \bibinfo{person}{Nimrod Partush}, {and} \bibinfo{person}{Eran Yahav}.} \bibinfo{year}{2016}\natexlab{}.
\newblock \showarticletitle{Statistical Similarity of Binaries}.
\newblock \bibinfo{journal}{\emph{SIGPLAN Not.}} \bibinfo{volume}{51}, \bibinfo{number}{6} (\bibinfo{date}{jun} \bibinfo{year}{2016}), \bibinfo{pages}{266–280}.
\newblock
\showISSN{0362-1340}
\urldef\tempurl%
\url{https://doi.org/10.1145/2980983.2908126}
\showDOI{\tempurl}


\bibitem[David et~al\mbox{.}(2017)]%
        {yaniv2017pldi}
\bibfield{author}{\bibinfo{person}{Yaniv David}, \bibinfo{person}{Nimrod Partush}, {and} \bibinfo{person}{Eran Yahav}.} \bibinfo{year}{2017}\natexlab{}.
\newblock \showarticletitle{Similarity of Binaries through Re-Optimization}. In \bibinfo{booktitle}{\emph{Proceedings of the 38th ACM SIGPLAN Conference on Programming Language Design and Implementation}} (Barcelona, Spain) \emph{(\bibinfo{series}{PLDI 2017})}. \bibinfo{publisher}{Association for Computing Machinery}, \bibinfo{address}{New York, NY, USA}, \bibinfo{pages}{79–94}.
\newblock
\showISBNx{9781450349888}
\urldef\tempurl%
\url{https://doi.org/10.1145/3062341.3062387}
\showDOI{\tempurl}


\bibitem[Debray et~al\mbox{.}(2000)]%
        {Debray00}
\bibfield{author}{\bibinfo{person}{Saumya~K. Debray}, \bibinfo{person}{William Evans}, \bibinfo{person}{Robert Muth}, {and} \bibinfo{person}{Bjorn De~Sutter}.} \bibinfo{year}{2000}\natexlab{}.
\newblock \showarticletitle{Compiler Techniques for Code Compaction}.
\newblock \bibinfo{journal}{\emph{ACM Trans. Program. Lang. Syst.}} \bibinfo{volume}{22}, \bibinfo{number}{2} (\bibinfo{date}{mar} \bibinfo{year}{2000}), \bibinfo{pages}{378–415}.
\newblock
\showISSN{0164-0925}
\urldef\tempurl%
\url{https://doi.org/10.1145/349214.349233}
\showDOI{\tempurl}


\bibitem[Devlin et~al\mbox{.}(2019)]%
        {devlin2019bert}
\bibfield{author}{\bibinfo{person}{Jacob Devlin}, \bibinfo{person}{Ming-Wei Chang}, \bibinfo{person}{Kenton Lee}, {and} \bibinfo{person}{Kristina Toutanova}.} \bibinfo{year}{2019}\natexlab{}.
\newblock \showarticletitle{{BERT}: Pre-training of Deep Bidirectional Transformers for Language Understanding}. In \bibinfo{booktitle}{\emph{Proceedings of the 2019 Conference of the North {A}merican Chapter of the Association for Computational Linguistics: Human Language Technologies, Volume 1 (Long and Short Papers)}}. \bibinfo{publisher}{Association for Computational Linguistics}, \bibinfo{address}{Minneapolis, Minnesota}, \bibinfo{pages}{4171--4186}.
\newblock
\urldef\tempurl%
\url{https://doi.org/10.18653/v1/N19-1423}
\showDOI{\tempurl}


\bibitem[Diffutils(2024)]%
        {diffutils}
\bibfield{author}{\bibinfo{person}{Diffutils}.} \bibinfo{year}{2024}\natexlab{}.
\newblock \bibinfo{title}{{GNU Diffutils}}.
\newblock \bibinfo{howpublished}{\url{https://www.gnu.org/software/diffutils/}}.
\newblock
\newblock
\shownote{[version 3.8; Online; accessed 08-May-2024]}.


\bibitem[Ding et~al\mbox{.}(2019)]%
        {ding2019asm2vec}
\bibfield{author}{\bibinfo{person}{Steven H.~H. Ding}, \bibinfo{person}{Benjamin C.~M. Fung}, {and} \bibinfo{person}{Philippe Charland}.} \bibinfo{year}{2019}\natexlab{}.
\newblock \showarticletitle{Asm2Vec: Boosting Static Representation Robustness for Binary Clone Search against Code Obfuscation and Compiler Optimization}. In \bibinfo{booktitle}{\emph{2019 IEEE Symposium on Security and Privacy (SP)}}. \bibinfo{pages}{472--489}.
\newblock
\urldef\tempurl%
\url{https://doi.org/10.1109/SP.2019.00003}
\showDOI{\tempurl}


\bibitem[Duan et~al\mbox{.}(2020)]%
        {duan2020deepbindiff}
\bibfield{author}{\bibinfo{person}{Yue Duan}, \bibinfo{person}{Xuezixiang Li}, \bibinfo{person}{Jinghan Wang}, {and} \bibinfo{person}{Heng Yin}.} \bibinfo{year}{2020}\natexlab{}.
\newblock \showarticletitle{Deepbindiff: Learning program-wide code representations for binary diffing}. In \bibinfo{booktitle}{\emph{Network and Distributed System Security Symposium}}.
\newblock


\bibitem[Dullien(2018)]%
        {funcsimsearch}
\bibfield{author}{\bibinfo{person}{Thomas Dullien}.} \bibinfo{year}{2018}\natexlab{}.
\newblock \bibinfo{title}{FuncSimSearch}.
\newblock \bibinfo{howpublished}{\url{https://github.com/googleprojectzero/functionsimsearch}}.
\newblock
\newblock
\shownote{[Online; accessed 13-July-2022]}.


\bibitem[Eschweiler et~al\mbox{.}(2016)]%
        {eschweiler2016discovre}
\bibfield{author}{\bibinfo{person}{Sebastian Eschweiler}, \bibinfo{person}{Khaled Yakdan}, \bibinfo{person}{Elmar Gerhards-Padilla}, {et~al\mbox{.}}} \bibinfo{year}{2016}\natexlab{}.
\newblock \showarticletitle{discovRE: Efficient Cross-Architecture Identification of Bugs in Binary Code.}. In \bibinfo{booktitle}{\emph{Ndss}}, Vol.~\bibinfo{volume}{52}. \bibinfo{pages}{58--79}.
\newblock


\bibitem[Farhadi et~al\mbox{.}(2014)]%
        {farhadi2014binclone}
\bibfield{author}{\bibinfo{person}{Mohammad~Reza Farhadi}, \bibinfo{person}{Benjamin~C.M. Fung}, \bibinfo{person}{Philippe Charland}, {and} \bibinfo{person}{Mourad Debbabi}.} \bibinfo{year}{2014}\natexlab{}.
\newblock \showarticletitle{BinClone: Detecting Code Clones in Malware}. In \bibinfo{booktitle}{\emph{2014 Eighth International Conference on Software Security and Reliability (SERE)}}. \bibinfo{pages}{78--87}.
\newblock
\urldef\tempurl%
\url{https://doi.org/10.1109/SERE.2014.21}
\showDOI{\tempurl}


\bibitem[Feng et~al\mbox{.}(2016a)]%
        {Feng2016Genius}
\bibfield{author}{\bibinfo{person}{Qian Feng}, \bibinfo{person}{Rundong Zhou}, \bibinfo{person}{Chengcheng Xu}, \bibinfo{person}{Yao Cheng}, \bibinfo{person}{Brian Testa}, {and} \bibinfo{person}{Heng Yin}.} \bibinfo{year}{2016}\natexlab{a}.
\newblock \showarticletitle{Scalable Graph-Based Bug Search for Firmware Images} \emph{(\bibinfo{series}{CCS '16})}. \bibinfo{publisher}{Association for Computing Machinery}, \bibinfo{address}{New York, NY, USA}, \bibinfo{pages}{480–491}.
\newblock
\showISBNx{9781450341394}
\urldef\tempurl%
\url{https://doi.org/10.1145/2976749.2978370}
\showDOI{\tempurl}


\bibitem[Feng et~al\mbox{.}(2016b)]%
        {qian2016bugsearch}
\bibfield{author}{\bibinfo{person}{Qian Feng}, \bibinfo{person}{Rundong Zhou}, \bibinfo{person}{Chengcheng Xu}, \bibinfo{person}{Yao Cheng}, \bibinfo{person}{Brian Testa}, {and} \bibinfo{person}{Heng Yin}.} \bibinfo{year}{2016}\natexlab{b}.
\newblock \showarticletitle{Scalable Graph-Based Bug Search for Firmware Images}. In \bibinfo{booktitle}{\emph{Proceedings of the 2016 ACM SIGSAC Conference on Computer and Communications Security}} (Vienna, Austria) \emph{(\bibinfo{series}{CCS '16})}. \bibinfo{publisher}{Association for Computing Machinery}, \bibinfo{address}{New York, NY, USA}, \bibinfo{pages}{480–491}.
\newblock
\showISBNx{9781450341394}
\urldef\tempurl%
\url{https://doi.org/10.1145/2976749.2978370}
\showDOI{\tempurl}


\bibitem[Findutils(2024)]%
        {findutils}
\bibfield{author}{\bibinfo{person}{Findutils}.} \bibinfo{year}{2024}\natexlab{}.
\newblock \bibinfo{title}{{GNU Findutils}}.
\newblock \bibinfo{howpublished}{\url{https://www.gnu.org/software/findutils/}}.
\newblock
\newblock
\shownote{[version 4.9; Online; accessed 08-May-2024]}.


\bibitem[Gao et~al\mbox{.}(2018)]%
        {gao2018vulseeker}
\bibfield{author}{\bibinfo{person}{Jian Gao}, \bibinfo{person}{Xin Yang}, \bibinfo{person}{Ying Fu}, \bibinfo{person}{Yu Jiang}, {and} \bibinfo{person}{Jiaguang Sun}.} \bibinfo{year}{2018}\natexlab{}.
\newblock \showarticletitle{VulSeeker: A Semantic Learning Based Vulnerability Seeker for Cross-Platform Binary}. In \bibinfo{booktitle}{\emph{2018 33rd IEEE/ACM International Conference on Automated Software Engineering (ASE)}}. \bibinfo{pages}{896--899}.
\newblock
\urldef\tempurl%
\url{https://doi.org/10.1145/3238147.3240480}
\showDOI{\tempurl}


\bibitem[Ghidra({[n.\,d.]})]%
        {Ghidra}
Ghidra \bibinfo{year}{[n.\,d.]}\natexlab{}.
\newblock \bibinfo{title}{Ghidra: Software Reverse Engineering Framework}.
\newblock \bibinfo{howpublished}{\url{https://ghidra-sre.org/}}.
\newblock


\bibitem[Gorchakov et~al\mbox{.}(2023)]%
        {Gorchakov23}
\bibfield{author}{\bibinfo{person}{Artyom~V. Gorchakov}, \bibinfo{person}{Liliya~A. Demidova}, {and} \bibinfo{person}{Peter~N. Sovietov}.} \bibinfo{year}{2023}\natexlab{}.
\newblock \showarticletitle{Analysis of Program Representations Based on Abstract Syntax Trees and Higher-Order Markov Chains for Source Code Classification Task}.
\newblock \bibinfo{journal}{\emph{Future Internet}} \bibinfo{volume}{15}, \bibinfo{number}{9} (\bibinfo{year}{2023}).
\newblock
\showISSN{1999-5903}
\urldef\tempurl%
\url{https://doi.org/10.3390/fi15090314}
\showDOI{\tempurl}


\bibitem[Guo et~al\mbox{.}(2019)]%
        {Guo19DeepVSA}
\bibfield{author}{\bibinfo{person}{Wenbo Guo}, \bibinfo{person}{Dongliang Mu}, \bibinfo{person}{Xinyu Xing}, \bibinfo{person}{Min Du}, {and} \bibinfo{person}{Dawn Song}.} \bibinfo{year}{2019}\natexlab{}.
\newblock \showarticletitle{{DEEPVSA}: Facilitating Value-set Analysis with Deep Learning for Postmortem Program Analysis}. In \bibinfo{booktitle}{\emph{28th USENIX Security Symposium (USENIX Security 19)}}. \bibinfo{publisher}{USENIX Association}, \bibinfo{address}{Santa Clara, CA}, \bibinfo{pages}{1787--1804}.
\newblock
\showISBNx{978-1-939133-06-9}
\urldef\tempurl%
\url{https://www.usenix.org/conference/usenixsecurity19/presentation/guo}
\showURL{%
\tempurl}


\bibitem[Gzip(2024)]%
        {gzip}
\bibfield{author}{\bibinfo{person}{Gzip}.} \bibinfo{year}{2024}\natexlab{}.
\newblock \bibinfo{title}{{GNU Gzip}}.
\newblock \bibinfo{howpublished}{\url{https://www.gnu.org/software/gzip/}}.
\newblock
\newblock
\shownote{[version 1.12; Online; accessed 08-May-2024]}.


\bibitem[Han et~al\mbox{.}(2018)]%
        {han2018openke}
\bibfield{author}{\bibinfo{person}{Xu Han}, \bibinfo{person}{Shulin Cao}, \bibinfo{person}{Xin Lv}, \bibinfo{person}{Yankai Lin}, \bibinfo{person}{Zhiyuan Liu}, \bibinfo{person}{Maosong Sun}, {and} \bibinfo{person}{Juanzi Li}.} \bibinfo{year}{2018}\natexlab{}.
\newblock \showarticletitle{{O}pen{KE}: An Open Toolkit for Knowledge Embedding}. In \bibinfo{booktitle}{\emph{Proceedings of the 2018 Conference on Empirical Methods in Natural Language Processing: System Demonstrations}}. \bibinfo{publisher}{Association for Computational Linguistics}, \bibinfo{address}{Brussels, Belgium}, \bibinfo{pages}{139--144}.
\newblock
\urldef\tempurl%
\url{https://doi.org/10.18653/v1/D18-2024}
\showDOI{\tempurl}


\bibitem[Haq and Caballero(2021)]%
        {haq2021binsurvey}
\bibfield{author}{\bibinfo{person}{Irfan~Ul Haq} {and} \bibinfo{person}{Juan Caballero}.} \bibinfo{year}{2021}\natexlab{}.
\newblock \showarticletitle{A Survey of Binary Code Similarity}.
\newblock \bibinfo{journal}{\emph{ACM Comput. Surv.}} \bibinfo{volume}{54}, \bibinfo{number}{3}, Article \bibinfo{articleno}{51} (\bibinfo{date}{apr} \bibinfo{year}{2021}), \bibinfo{numpages}{38}~pages.
\newblock
\showISSN{0360-0300}
\urldef\tempurl%
\url{https://doi.org/10.1145/3446371}
\showDOI{\tempurl}


\bibitem[He et~al\mbox{.}(2024)]%
        {He24}
\bibfield{author}{\bibinfo{person}{Haojie He}, \bibinfo{person}{Xingwei Lin}, \bibinfo{person}{Ziang Weng}, \bibinfo{person}{Ruijie Zhao}, \bibinfo{person}{Shuitao Gan}, \bibinfo{person}{Libo Chen}, \bibinfo{person}{Yuede Ji}, \bibinfo{person}{Jiashui Wang}, {and} \bibinfo{person}{Zhi Xue}.} \bibinfo{year}{2024}\natexlab{}.
\newblock \showarticletitle{Code is not Natural Language: Unlock the Power of Semantics-Oriented Graph Representation for Binary Code Similarity Detection}. In \bibinfo{booktitle}{\emph{Security}}. \bibinfo{publisher}{USENIX}.
\newblock


\bibitem[Hendrycks and Gimpel(2016)]%
        {Hendrycks2016Silu}
\bibfield{author}{\bibinfo{person}{Dan Hendrycks} {and} \bibinfo{person}{Kevin Gimpel}.} \bibinfo{year}{2016}\natexlab{}.
\newblock \showarticletitle{Bridging Nonlinearities and Stochastic Regularizers with Gaussian Error Linear Units}.
\newblock \bibinfo{journal}{\emph{CoRR}}  \bibinfo{volume}{abs/1606.08415} (\bibinfo{year}{2016}).
\newblock
\showeprint[arXiv]{1606.08415}
\urldef\tempurl%
\url{http://arxiv.org/abs/1606.08415}
\showURL{%
\tempurl}


\bibitem[Hex-Rays({[n.\,d.]})]%
        {IDAPro}
\bibfield{author}{\bibinfo{person}{Hex-Rays}.} \bibinfo{year}{[n.\,d.]}\natexlab{}.
\newblock \bibinfo{title}{IDA Pro}.
\newblock \bibinfo{howpublished}{\url{https://hex-rays.com/ida-pro/}}.
\newblock


\bibitem[Hunt and Szymanski(1977)]%
        {Hunt77}
\bibfield{author}{\bibinfo{person}{James~W. Hunt} {and} \bibinfo{person}{Thomas~G. Szymanski}.} \bibinfo{year}{1977}\natexlab{}.
\newblock \showarticletitle{A Fast Algorithm for Computing Longest Common Subsequences}.
\newblock \bibinfo{journal}{\emph{Commun. ACM}} \bibinfo{volume}{20}, \bibinfo{number}{5} (\bibinfo{date}{may} \bibinfo{year}{1977}), \bibinfo{pages}{350–353}.
\newblock
\showISSN{0001-0782}
\urldef\tempurl%
\url{https://doi.org/10.1145/359581.359603}
\showDOI{\tempurl}


\bibitem[Ioffe and Szegedy(2015)]%
        {pmlr-v37-ioffe15-batchnorm}
\bibfield{author}{\bibinfo{person}{Sergey Ioffe} {and} \bibinfo{person}{Christian Szegedy}.} \bibinfo{year}{2015}\natexlab{}.
\newblock \showarticletitle{Batch Normalization: Accelerating Deep Network Training by Reducing Internal Covariate Shift}. In \bibinfo{booktitle}{\emph{Proceedings of the 32nd International Conference on Machine Learning}} \emph{(\bibinfo{series}{Proceedings of Machine Learning Research}, Vol.~\bibinfo{volume}{37})}, \bibfield{editor}{\bibinfo{person}{Francis Bach} {and} \bibinfo{person}{David Blei}} (Eds.). \bibinfo{publisher}{PMLR}, \bibinfo{address}{Lille, France}, \bibinfo{pages}{448--456}.
\newblock
\urldef\tempurl%
\url{http://proceedings.mlr.press/v37/ioffe15.html}
\showURL{%
\tempurl}


\bibitem[Jin et~al\mbox{.}(2012)]%
        {Jin12}
\bibfield{author}{\bibinfo{person}{Wesley Jin}, \bibinfo{person}{Sagar Chaki}, \bibinfo{person}{Cory Cohen}, \bibinfo{person}{Arie Gurfinkel}, \bibinfo{person}{Jeffrey Havrilla}, \bibinfo{person}{Charles Hines}, {and} \bibinfo{person}{Priya Narasimhan}.} \bibinfo{year}{2012}\natexlab{}.
\newblock \showarticletitle{Binary Function Clustering Using Semantic Hashes}. In \bibinfo{booktitle}{\emph{ICMLA}}. \bibinfo{publisher}{IEEE Computer Society}, \bibinfo{address}{USA}, \bibinfo{pages}{386–391}.
\newblock
\showISBNx{9780769549132}
\urldef\tempurl%
\url{https://doi.org/10.1109/ICMLA.2012.70}
\showDOI{\tempurl}


\bibitem[Junod et~al\mbox{.}(2015)]%
        {ollvm-JunodRWM}
\bibfield{author}{\bibinfo{person}{Pascal Junod}, \bibinfo{person}{Julien Rinaldini}, \bibinfo{person}{Johan Wehrli}, {and} \bibinfo{person}{Julie Michielin}.} \bibinfo{year}{2015}\natexlab{}.
\newblock \showarticletitle{Obfuscator-{LLVM} -- Software Protection for the Masses}. In \bibinfo{booktitle}{\emph{Proceedings of the {IEEE/ACM} 1st International Workshop on Software Protection, {SPRO'15}, Firenze, Italy, May 19th, 2015}}, \bibfield{editor}{\bibinfo{person}{Brecht Wyseur}} (Ed.). \bibinfo{publisher}{IEEE}, \bibinfo{pages}{3--9}.
\newblock
\urldef\tempurl%
\url{https://doi.org/10.1109/SPRO.2015.10}
\showDOI{\tempurl}


\bibitem[Jurafsky(2000)]%
        {jurafsky2000nlp-book}
\bibfield{author}{\bibinfo{person}{Dan Jurafsky}.} \bibinfo{year}{2000}\natexlab{}.
\newblock \bibinfo{booktitle}{\emph{Speech \& language processing}}.
\newblock \bibinfo{publisher}{Pearson Education}.
\newblock


\bibitem[Karamitas and Kehagias(2018)]%
        {Karamitas18}
\bibfield{author}{\bibinfo{person}{Chariton Karamitas} {and} \bibinfo{person}{Athanasios Kehagias}.} \bibinfo{year}{2018}\natexlab{}.
\newblock \showarticletitle{Efficient features for function matching between binary executables}. In \bibinfo{booktitle}{\emph{SANER}}, \bibfield{editor}{\bibinfo{person}{Rocco Oliveto}, \bibinfo{person}{Massimiliano~Di Penta}, {and} \bibinfo{person}{David~C. Shepherd}} (Eds.). \bibinfo{publisher}{{IEEE} Computer Society}, \bibinfo{pages}{335--345}.
\newblock
\urldef\tempurl%
\url{https://doi.org/10.1109/SANER.2018.8330221}
\showDOI{\tempurl}


\bibitem[Kim et~al\mbox{.}(2023)]%
        {kim23binaryfeatures}
\bibfield{author}{\bibinfo{person}{Dongkwan Kim}, \bibinfo{person}{Eunsoo Kim}, \bibinfo{person}{Sang~Kil Cha}, \bibinfo{person}{Sooel Son}, {and} \bibinfo{person}{Yongdae Kim}.} \bibinfo{year}{2023}\natexlab{}.
\newblock \showarticletitle{Revisiting Binary Code Similarity Analysis Using Interpretable Feature Engineering and Lessons Learned}.
\newblock \bibinfo{journal}{\emph{IEEE Trans. Softw. Eng.}} \bibinfo{volume}{49}, \bibinfo{number}{4} (\bibinfo{date}{apr} \bibinfo{year}{2023}), \bibinfo{pages}{1661–1682}.
\newblock
\showISSN{0098-5589}
\urldef\tempurl%
\url{https://doi.org/10.1109/TSE.2022.3187689}
\showDOI{\tempurl}


\bibitem[Kim et~al\mbox{.}(2022)]%
        {Kim22}
\bibfield{author}{\bibinfo{person}{Geunwoo Kim}, \bibinfo{person}{Sanghyun Hong}, \bibinfo{person}{Michael Franz}, {and} \bibinfo{person}{Dokyung Song}.} \bibinfo{year}{2022}\natexlab{}.
\newblock \showarticletitle{Improving Cross-Platform Binary Analysis Using Representation Learning via Graph Alignment}. In \bibinfo{booktitle}{\emph{Proceedings of the 31st ACM SIGSOFT International Symposium on Software Testing and Analysis}} (Virtual, South Korea) \emph{(\bibinfo{series}{ISSTA 2022})}. \bibinfo{publisher}{Association for Computing Machinery}, \bibinfo{address}{New York, NY, USA}, \bibinfo{pages}{151–163}.
\newblock
\showISBNx{9781450393799}
\urldef\tempurl%
\url{https://doi.org/10.1145/3533767.3534383}
\showDOI{\tempurl}


\bibitem[Koch et~al\mbox{.}(2015)]%
        {koch2015siamese}
\bibfield{author}{\bibinfo{person}{Gregory Koch}, \bibinfo{person}{Richard Zemel}, \bibinfo{person}{Ruslan Salakhutdinov}, {et~al\mbox{.}}} \bibinfo{year}{2015}\natexlab{}.
\newblock \showarticletitle{Siamese neural networks for one-shot image recognition}. In \bibinfo{booktitle}{\emph{ICML deep learning workshop}}, Vol.~\bibinfo{volume}{2}. Lille.
\newblock


\bibitem[L{\'a}szl{\'o} and Kiss(2009)]%
        {laszlo2009obfuscating}
\bibfield{author}{\bibinfo{person}{T{\i}mea L{\'a}szl{\'o}} {and} \bibinfo{person}{{\'A}kos Kiss}.} \bibinfo{year}{2009}\natexlab{}.
\newblock \showarticletitle{Obfuscating C++ programs via control flow flattening}.
\newblock \bibinfo{journal}{\emph{Annales Universitatis Scientarum Budapestinensis de Rolando E{\"o}tv{\"o}s Nominatae, Sectio Computatorica}} \bibinfo{volume}{30}, \bibinfo{number}{1} (\bibinfo{year}{2009}), \bibinfo{pages}{3--19}.
\newblock


\bibitem[Lattner and Adve(2004)]%
        {Lattner:2004:llvm}
\bibfield{author}{\bibinfo{person}{Chris Lattner} {and} \bibinfo{person}{Vikram Adve}.} \bibinfo{year}{2004}\natexlab{}.
\newblock \showarticletitle{LLVM: A Compilation Framework for Lifelong Program Analysis \& Transformation}. In \bibinfo{booktitle}{\emph{Proceedings of the International Symposium on Code Generation and Optimization: Feedback-Directed and Runtime Optimization}} (Palo Alto, California) \emph{(\bibinfo{series}{CGO '04})}. \bibinfo{publisher}{IEEE Computer Society}, \bibinfo{address}{USA}, \bibinfo{pages}{75}.
\newblock
\showISBNx{0769521029}


\bibitem[Li et~al\mbox{.}(2018)]%
        {li2018asha}
\bibfield{author}{\bibinfo{person}{Lisha Li}, \bibinfo{person}{Kevin Jamieson}, \bibinfo{person}{Afshin Rostamizadeh}, \bibinfo{person}{Katya Gonina}, \bibinfo{person}{Moritz Hardt}, \bibinfo{person}{Benjamin Recht}, {and} \bibinfo{person}{Ameet Talwalkar}.} \bibinfo{year}{2018}\natexlab{}.
\newblock \bibinfo{title}{Massively Parallel Hyperparameter Tuning}.
\newblock
\newblock
\urldef\tempurl%
\url{https://openreview.net/forum?id=S1Y7OOlRZ}
\showURL{%
\tempurl}


\bibitem[Li et~al\mbox{.}(2021)]%
        {li2021palmtree}
\bibfield{author}{\bibinfo{person}{Xuezixiang Li}, \bibinfo{person}{Yu Qu}, {and} \bibinfo{person}{Heng Yin}.} \bibinfo{year}{2021}\natexlab{}.
\newblock \showarticletitle{PalmTree: Learning an Assembly Language Model for Instruction Embedding} \emph{(\bibinfo{series}{CCS '21})}. \bibinfo{publisher}{Association for Computing Machinery}, \bibinfo{address}{New York, NY, USA}, \bibinfo{pages}{3236–3251}.
\newblock
\showISBNx{9781450384544}
\urldef\tempurl%
\url{https://doi.org/10.1145/3460120.3484587}
\showDOI{\tempurl}


\bibitem[Li et~al\mbox{.}(2019)]%
        {Li19GraphMatchingNetworks}
\bibfield{author}{\bibinfo{person}{Yujia Li}, \bibinfo{person}{Chenjie Gu}, \bibinfo{person}{Thomas Dullien}, \bibinfo{person}{Oriol Vinyals}, {and} \bibinfo{person}{Pushmeet Kohli}.} \bibinfo{year}{2019}\natexlab{}.
\newblock \showarticletitle{Graph Matching Networks for Learning the Similarity of Graph Structured Objects}. In \bibinfo{booktitle}{\emph{Proceedings of the 36th International Conference on Machine Learning}} \emph{(\bibinfo{series}{Proceedings of Machine Learning Research}, Vol.~\bibinfo{volume}{97})}, \bibfield{editor}{\bibinfo{person}{Kamalika Chaudhuri} {and} \bibinfo{person}{Ruslan Salakhutdinov}} (Eds.). \bibinfo{publisher}{PMLR}, \bibinfo{pages}{3835--3845}.
\newblock
\urldef\tempurl%
\url{https://proceedings.mlr.press/v97/li19d.html}
\showURL{%
\tempurl}


\bibitem[Liaw et~al\mbox{.}(2018)]%
        {Liaw2018RayTune}
\bibfield{author}{\bibinfo{person}{Richard Liaw}, \bibinfo{person}{Eric Liang}, \bibinfo{person}{Robert Nishihara}, \bibinfo{person}{Philipp Moritz}, \bibinfo{person}{Joseph~E. Gonzalez}, {and} \bibinfo{person}{Ion Stoica}.} \bibinfo{year}{2018}\natexlab{}.
\newblock \showarticletitle{Tune: {A} Research Platform for Distributed Model Selection and Training}.
\newblock \bibinfo{journal}{\emph{CoRR}}  \bibinfo{volume}{abs/1807.05118} (\bibinfo{year}{2018}).
\newblock
\showeprint[arXiv]{1807.05118}
\urldef\tempurl%
\url{http://arxiv.org/abs/1807.05118}
\showURL{%
\tempurl}


\bibitem[Liu et~al\mbox{.}(2020)]%
        {shigang2020cyberVulnerability}
\bibfield{author}{\bibinfo{person}{Shigang Liu}, \bibinfo{person}{Mahdi Dibaei}, \bibinfo{person}{Yonghang Tai}, \bibinfo{person}{Chao Chen}, \bibinfo{person}{Jun Zhang}, {and} \bibinfo{person}{Yang Xiang}.} \bibinfo{year}{2020}\natexlab{}.
\newblock \showarticletitle{Cyber Vulnerability Intelligence for Internet of Things Binary}.
\newblock \bibinfo{journal}{\emph{IEEE Transactions on Industrial Informatics}} \bibinfo{volume}{16}, \bibinfo{number}{3} (\bibinfo{year}{2020}), \bibinfo{pages}{2154--2163}.
\newblock
\urldef\tempurl%
\url{https://doi.org/10.1109/TII.2019.2942800}
\showDOI{\tempurl}


\bibitem[Liu et~al\mbox{.}(2019)]%
        {Liu2019Roberta}
\bibfield{author}{\bibinfo{person}{Yinhan Liu}, \bibinfo{person}{Myle Ott}, \bibinfo{person}{Naman Goyal}, \bibinfo{person}{Jingfei Du}, \bibinfo{person}{Mandar Joshi}, \bibinfo{person}{Danqi Chen}, \bibinfo{person}{Omer Levy}, \bibinfo{person}{Mike Lewis}, \bibinfo{person}{Luke Zettlemoyer}, {and} \bibinfo{person}{Veselin Stoyanov}.} \bibinfo{year}{2019}\natexlab{}.
\newblock \showarticletitle{RoBERTa: {A} Robustly Optimized {BERT} Pretraining Approach}.
\newblock \bibinfo{journal}{\emph{CoRR}}  \bibinfo{volume}{abs/1907.11692} (\bibinfo{year}{2019}).
\newblock
\showeprint[arXiv]{1907.11692}
\urldef\tempurl%
\url{http://arxiv.org/abs/1907.11692}
\showURL{%
\tempurl}


\bibitem[LLVM(2018)]%
        {LLVM-LangRef}
\bibfield{author}{\bibinfo{person}{LLVM}.} \bibinfo{year}{2018}\natexlab{}.
\newblock \bibinfo{title}{{LLVM Language Reference}}.
\newblock \bibinfo{howpublished}{\url{https://llvm.org/docs/LangRef.html}}.
\newblock
\newblock
\shownote{Accessed 2019-08-20}.


\bibitem[lua(2024)]%
        {lua}
\bibfield{author}{\bibinfo{person}{lua}.} \bibinfo{year}{2024}\natexlab{}.
\newblock \bibinfo{title}{{lua}}.
\newblock \bibinfo{howpublished}{\url{https://www.lua.org/source/}}.
\newblock
\newblock
\shownote{[version 5.4.4; Online; accessed 08-May-2024]}.


\bibitem[Luo et~al\mbox{.}(2023)]%
        {Luo23}
\bibfield{author}{\bibinfo{person}{Zhenhao Luo}, \bibinfo{person}{Pengfei Wang}, \bibinfo{person}{Baosheng Wang}, \bibinfo{person}{Yong Tang}, \bibinfo{person}{Wei Xie}, \bibinfo{person}{Xu Zhou}, \bibinfo{person}{Danjun Liu}, {and} \bibinfo{person}{Kai Lu}.} \bibinfo{year}{2023}\natexlab{}.
\newblock \showarticletitle{VulHawk: Cross-architecture Vulnerability Detection with Entropy-based Binary Code Search}. In \bibinfo{booktitle}{\emph{30th Annual Network and Distributed System Security Symposium, {NDSS} 2023, San Diego, California, USA, February 27 - March 3, 2023}}. \bibinfo{publisher}{The Internet Society}.
\newblock
\urldef\tempurl%
\url{https://www.ndss-symposium.org/ndss-paper/vulhawk-cross-architecture-vulnerability-detection-with-entropy-based-binary-code-search/}
\showURL{%
\tempurl}


\bibitem[Marcelli et~al\mbox{.}(2022)]%
        {marcelli2022usenix}
\bibfield{author}{\bibinfo{person}{Andrea Marcelli}, \bibinfo{person}{Mariano Graziano}, \bibinfo{person}{Xabier Ugarte-Pedrero}, \bibinfo{person}{Yanick Fratantonio}, \bibinfo{person}{Mohamad Mansouri}, {and} \bibinfo{person}{Davide Balzarotti}.} \bibinfo{year}{2022}\natexlab{}.
\newblock \showarticletitle{How Machine Learning Is Solving the Binary Function Similarity Problem}. In \bibinfo{booktitle}{\emph{31st USENIX Security Symposium (USENIX Security 22)}}. \bibinfo{publisher}{USENIX Association}, \bibinfo{address}{Boston, MA}.
\newblock
\urldef\tempurl%
\url{https://www.usenix.org/conference/usenixsecurity22/presentation/marcelli}
\showURL{%
\tempurl}


\bibitem[Mart\'{\i}nez et~al\mbox{.}(2023)]%
        {Martinez23}
\bibfield{author}{\bibinfo{person}{Pablo~Antonio Mart\'{\i}nez}, \bibinfo{person}{Jackson Woodruff}, \bibinfo{person}{Jordi Armengol-Estap\'{e}}, \bibinfo{person}{Gregorio Bernab\'{e}}, \bibinfo{person}{Jos\'{e}~Manuel Garc\'{\i}a}, {and} \bibinfo{person}{Michael F.~P. O’Boyle}.} \bibinfo{year}{2023}\natexlab{}.
\newblock \showarticletitle{Matching Linear Algebra and Tensor Code to Specialized Hardware Accelerators}. In \bibinfo{booktitle}{\emph{International Conference on Compiler Construction}} (Montr\'{e}al, QC, Canada). \bibinfo{publisher}{Association for Computing Machinery}, \bibinfo{address}{New York, NY, USA}, \bibinfo{pages}{85–97}.
\newblock
\showISBNx{9798400700880}
\urldef\tempurl%
\url{https://doi.org/10.1145/3578360.3580262}
\showDOI{\tempurl}


\bibitem[Massarelli et~al\mbox{.}(2019)]%
        {massarelli2019safe}
\bibfield{author}{\bibinfo{person}{Luca Massarelli}, \bibinfo{person}{Giuseppe Antonio~Di Luna}, \bibinfo{person}{Fabio Petroni}, \bibinfo{person}{Roberto Baldoni}, {and} \bibinfo{person}{Leonardo Querzoni}.} \bibinfo{year}{2019}\natexlab{}.
\newblock \showarticletitle{Safe: Self-attentive function embeddings for binary similarity}. In \bibinfo{booktitle}{\emph{International Conference on Detection of Intrusions and Malware, and Vulnerability Assessment}}. Springer, \bibinfo{pages}{309--329}.
\newblock


\bibitem[Mikolov et~al\mbox{.}(2013)]%
        {mikolov2013word2vec}
\bibfield{author}{\bibinfo{person}{T Mikolov}, \bibinfo{person}{K Chen}, \bibinfo{person}{G Corrado}, {and} \bibinfo{person}{J Dean}.} \bibinfo{year}{2013}\natexlab{}.
\newblock \showarticletitle{Efficient estimation of word representations in vector space}.
\newblock \bibinfo{journal}{\emph{arXiv preprint arXiv:1301.3781}} (\bibinfo{year}{2013}).
\newblock


\bibitem[Ming et~al\mbox{.}(2016)]%
        {Ming2016PlagDetection}
\bibfield{author}{\bibinfo{person}{Jiang Ming}, \bibinfo{person}{Fangfang Zhang}, \bibinfo{person}{Dinghao Wu}, \bibinfo{person}{Peng Liu}, {and} \bibinfo{person}{Sencun Zhu}.} \bibinfo{year}{2016}\natexlab{}.
\newblock \showarticletitle{Deviation-Based Obfuscation-Resilient Program Equivalence Checking With Application to Software Plagiarism Detection}.
\newblock \bibinfo{journal}{\emph{IEEE Transactions on Reliability}} \bibinfo{volume}{65}, \bibinfo{number}{4} (\bibinfo{year}{2016}), \bibinfo{pages}{1647--1664}.
\newblock
\urldef\tempurl%
\url{https://doi.org/10.1109/TR.2016.2570554}
\showDOI{\tempurl}


\bibitem[Morris(1973)]%
        {Lockwood-Morris-Semantics-10.1145/512927.512941}
\bibfield{author}{\bibinfo{person}{F.~Lockwood Morris}.} \bibinfo{year}{1973}\natexlab{}.
\newblock \showarticletitle{Advice on Structuring Compilers and Proving Them Correct}. In \bibinfo{booktitle}{\emph{Proceedings of the 1st Annual ACM SIGACT-SIGPLAN Symposium on Principles of Programming Languages}} (Boston, Massachusetts) \emph{(\bibinfo{series}{POPL '73})}. \bibinfo{publisher}{Association for Computing Machinery}, \bibinfo{address}{New York, NY, USA}, \bibinfo{pages}{144–152}.
\newblock
\showISBNx{9781450373494}
\urldef\tempurl%
\url{https://doi.org/10.1145/512927.512941}
\showDOI{\tempurl}


\bibitem[Muchnick(1997)]%
        {muchnick1997advanced}
\bibfield{author}{\bibinfo{person}{Steven~S. Muchnick}.} \bibinfo{year}{1997}\natexlab{}.
\newblock \bibinfo{booktitle}{\emph{Advanced Compiler Design and Implementation}}.
\newblock \bibinfo{publisher}{Morgan Kaufmann Publishers Inc.}, \bibinfo{address}{San Francisco, CA, USA}.
\newblock
\showISBNx{1-55860-320-4}


\bibitem[Nethercote and Seward(2007)]%
        {nicholas2007valgrind}
\bibfield{author}{\bibinfo{person}{Nicholas Nethercote} {and} \bibinfo{person}{Julian Seward}.} \bibinfo{year}{2007}\natexlab{}.
\newblock \showarticletitle{Valgrind: A Framework for Heavyweight Dynamic Binary Instrumentation}. In \bibinfo{booktitle}{\emph{Proceedings of the 28th ACM SIGPLAN Conference on Programming Language Design and Implementation}} (San Diego, California, USA) \emph{(\bibinfo{series}{PLDI '07})}. \bibinfo{publisher}{Association for Computing Machinery}, \bibinfo{address}{New York, NY, USA}, \bibinfo{pages}{89–100}.
\newblock
\showISBNx{9781595936332}
\urldef\tempurl%
\url{https://doi.org/10.1145/1250734.1250746}
\showDOI{\tempurl}


\bibitem[Panchenko et~al\mbox{.}(2019)]%
        {Panchenko19}
\bibfield{author}{\bibinfo{person}{Maksim Panchenko}, \bibinfo{person}{Rafael Auler}, \bibinfo{person}{Bill Nell}, {and} \bibinfo{person}{Guilherme Ottoni}.} \bibinfo{year}{2019}\natexlab{}.
\newblock \showarticletitle{BOLT: A Practical Binary Optimizer for Data Centers and Beyond}. In \bibinfo{booktitle}{\emph{Proceedings of the 2019 IEEE/ACM International Symposium on Code Generation and Optimization}} (Washington, DC, USA) \emph{(\bibinfo{series}{CGO 2019})}. \bibinfo{publisher}{IEEE Press}, \bibinfo{pages}{2–14}.
\newblock
\showISBNx{9781728114361}


\bibitem[Paulheim(2017)]%
        {paulheim2017knowledge}
\bibfield{author}{\bibinfo{person}{Heiko Paulheim}.} \bibinfo{year}{2017}\natexlab{}.
\newblock \showarticletitle{Knowledge graph refinement: A survey of approaches and evaluation methods}.
\newblock \bibinfo{journal}{\emph{Semantic web}} \bibinfo{volume}{8}, \bibinfo{number}{3} (\bibinfo{year}{2017}), \bibinfo{pages}{489--508}.
\newblock


\bibitem[Pedregosa et~al\mbox{.}(2011)]%
        {scikit-learn}
\bibfield{author}{\bibinfo{person}{F. Pedregosa}, \bibinfo{person}{G. Varoquaux}, \bibinfo{person}{A. Gramfort}, \bibinfo{person}{V. Michel}, \bibinfo{person}{B. Thirion}, \bibinfo{person}{O. Grisel}, \bibinfo{person}{M. Blondel}, \bibinfo{person}{P. Prettenhofer}, \bibinfo{person}{R. Weiss}, \bibinfo{person}{V. Dubourg}, \bibinfo{person}{J. Vanderplas}, \bibinfo{person}{A. Passos}, \bibinfo{person}{D. Cournapeau}, \bibinfo{person}{M. Brucher}, \bibinfo{person}{M. Perrot}, {and} \bibinfo{person}{E. Duchesnay}.} \bibinfo{year}{2011}\natexlab{}.
\newblock \showarticletitle{Scikit-learn: Machine Learning in {P}ython}.
\newblock \bibinfo{journal}{\emph{Journal of Machine Learning Research}}  \bibinfo{volume}{12} (\bibinfo{year}{2011}), \bibinfo{pages}{2825--2830}.
\newblock


\bibitem[Pei et~al\mbox{.}(2020)]%
        {pei2020trex}
\bibfield{author}{\bibinfo{person}{Kexin Pei}, \bibinfo{person}{Zhou Xuan}, \bibinfo{person}{Junfeng Yang}, \bibinfo{person}{Suman Jana}, {and} \bibinfo{person}{Baishakhi Ray}.} \bibinfo{year}{2020}\natexlab{}.
\newblock \showarticletitle{Trex: Learning execution semantics from micro-traces for binary similarity}.
\newblock \bibinfo{journal}{\emph{arXiv preprint arXiv:2012.08680}} (\bibinfo{year}{2020}).
\newblock


\bibitem[Peng et~al\mbox{.}(2021)]%
        {peng2021oscar}
\bibfield{author}{\bibinfo{person}{Dinglan Peng}, \bibinfo{person}{Shuxin Zheng}, \bibinfo{person}{Yatao Li}, \bibinfo{person}{Guolin Ke}, \bibinfo{person}{Di He}, {and} \bibinfo{person}{Tie-Yan Liu}.} \bibinfo{year}{2021}\natexlab{}.
\newblock \showarticletitle{How could Neural Networks understand Programs?}. In \bibinfo{booktitle}{\emph{International Conference on Machine Learning}}. PMLR, \bibinfo{pages}{8476--8486}.
\newblock


\bibitem[Pewny et~al\mbox{.}(2015)]%
        {pewny2015sp}
\bibfield{author}{\bibinfo{person}{Jannik Pewny}, \bibinfo{person}{Behrad Garmany}, \bibinfo{person}{Robert Gawlik}, \bibinfo{person}{Christian Rossow}, {and} \bibinfo{person}{Thorsten Holz}.} \bibinfo{year}{2015}\natexlab{}.
\newblock \showarticletitle{Cross-Architecture Bug Search in Binary Executables}. In \bibinfo{booktitle}{\emph{2015 IEEE Symposium on Security and Privacy}}. \bibinfo{pages}{709--724}.
\newblock
\urldef\tempurl%
\url{https://doi.org/10.1109/SP.2015.49}
\showDOI{\tempurl}


\bibitem[PuTTY(2024)]%
        {putty}
\bibfield{author}{\bibinfo{person}{PuTTY}.} \bibinfo{year}{2024}\natexlab{}.
\newblock \bibinfo{title}{{PuTTY}}.
\newblock \bibinfo{howpublished}{\url{https://www.chiark.greenend.org.uk/~sgtatham/putty/}}.
\newblock
\newblock
\shownote{[version 0.76; Online; accessed 08-May-2024]}.


\bibitem[Qasem et~al\mbox{.}(2023)]%
        {Qasem2023Binfinder-AsiaCCS}
\bibfield{author}{\bibinfo{person}{Abdullah Qasem}, \bibinfo{person}{Mourad Debbabi}, \bibinfo{person}{Bernard Lebel}, {and} \bibinfo{person}{Marthe Kassouf}.} \bibinfo{year}{2023}\natexlab{}.
\newblock \showarticletitle{Binary Function Clone Search in the Presence of Code Obfuscation and Optimization over Multi-CPU Architectures}. In \bibinfo{booktitle}{\emph{Proceedings of the 2023 ACM Asia Conference on Computer and Communications Security}} (<conf-loc>, <city>Melbourne</city>, <state>VIC</state>, <country>Australia</country>, </conf-loc>) \emph{(\bibinfo{series}{ASIA CCS '23})}. \bibinfo{publisher}{Association for Computing Machinery}, \bibinfo{address}{New York, NY, USA}, \bibinfo{pages}{443–456}.
\newblock
\showISBNx{9798400700989}
\urldef\tempurl%
\url{https://doi.org/10.1145/3579856.3582818}
\showDOI{\tempurl}


\bibitem[Rastello and Tichadou(2022)]%
        {rastello2022ssa}
\bibfield{author}{\bibinfo{person}{Fabrice Rastello} {and} \bibinfo{person}{Florent~Bouchez Tichadou}.} \bibinfo{year}{2022}\natexlab{}.
\newblock \bibinfo{booktitle}{\emph{SSA-based Compiler Design}}.
\newblock \bibinfo{publisher}{Springer Nature}.
\newblock


\bibitem[Redmond et~al\mbox{.}(2018)]%
        {Redmond18}
\bibfield{author}{\bibinfo{person}{Kimberly Redmond}, \bibinfo{person}{Lannan Luo}, {and} \bibinfo{person}{Qiang Zeng}.} \bibinfo{year}{2018}\natexlab{}.
\newblock \bibinfo{title}{A Cross-Architecture Instruction Embedding Model for Natural Language Processing-Inspired Binary Code Analysis}.
\newblock
\newblock
\showeprint[arxiv]{1812.09652}~[cs.CR]


\bibitem[Rice(1953)]%
        {Rice53}
\bibfield{author}{\bibinfo{person}{H.~G. Rice}.} \bibinfo{year}{1953}\natexlab{}.
\newblock \showarticletitle{Classes of Recursively Enumerable Sets and Their Decision Problems}.
\newblock \bibinfo{journal}{\emph{Trans. Amer. Math. Soc.}} \bibinfo{volume}{74}, \bibinfo{number}{2} (\bibinfo{year}{1953}), \bibinfo{pages}{358--366}.
\newblock
\showISSN{00029947}
\urldef\tempurl%
\url{http://www.jstor.org/stable/1990888}
\showURL{%
\tempurl}


\bibitem[{S. VenkataKeerthy, Yashas Andaluri, Sayan Dey, Rinku Shah, Praveen Tammana, Ramakrishna Upadrasta}(2023)]%
        {VenkataKeerthy-2023-packet_processing}
\bibfield{author}{\bibinfo{person}{{S. VenkataKeerthy, Yashas Andaluri, Sayan Dey, Rinku Shah, Praveen Tammana, Ramakrishna Upadrasta}}.} \bibinfo{year}{2023}\natexlab{}.
\newblock \showarticletitle{Packet Processing Algorithm Identification using Program Embeddings}. In \bibinfo{booktitle}{\emph{Proceedings of the 6th Asia-Pacific Workshop on Networking}} (Fuzhou, China) \emph{(\bibinfo{series}{APNet '22})}. \bibinfo{publisher}{Association for Computing Machinery}, \bibinfo{address}{New York, NY, USA}, \bibinfo{pages}{76–82}.
\newblock
\showISBNx{9781450397483}
\urldef\tempurl%
\url{https://doi.org/10.1145/3542637.3542649}
\showDOI{\tempurl}


\bibitem[Schroff et~al\mbox{.}(2015)]%
        {schroff2015facenet}
\bibfield{author}{\bibinfo{person}{Florian Schroff}, \bibinfo{person}{Dmitry Kalenichenko}, {and} \bibinfo{person}{James Philbin}.} \bibinfo{year}{2015}\natexlab{}.
\newblock \showarticletitle{Facenet: A unified embedding for face recognition and clustering}. In \bibinfo{booktitle}{\emph{Proceedings of the IEEE conference on computer vision and pattern recognition}}. \bibinfo{pages}{815--823}.
\newblock


\bibitem[Shalev and Partush(2018)]%
        {zeek2018plas}
\bibfield{author}{\bibinfo{person}{Noam Shalev} {and} \bibinfo{person}{Nimrod Partush}.} \bibinfo{year}{2018}\natexlab{}.
\newblock \showarticletitle{Binary Similarity Detection Using Machine Learning}. In \bibinfo{booktitle}{\emph{Proceedings of the 13th Workshop on Programming Languages and Analysis for Security}} (Toronto, Canada) \emph{(\bibinfo{series}{PLAS '18})}. \bibinfo{publisher}{Association for Computing Machinery}, \bibinfo{address}{New York, NY, USA}, \bibinfo{pages}{42–47}.
\newblock
\showISBNx{9781450359931}
\urldef\tempurl%
\url{https://doi.org/10.1145/3264820.3264821}
\showDOI{\tempurl}


\bibitem[Shoshitaishvili et~al\mbox{.}(2015)]%
        {shoshitaishvili2015PyVex}
\bibfield{author}{\bibinfo{person}{Yan Shoshitaishvili}, \bibinfo{person}{Ruoyu Wang}, \bibinfo{person}{Christophe Hauser}, \bibinfo{person}{Christopher Kruegel}, {and} \bibinfo{person}{Giovanni Vigna}.} \bibinfo{year}{2015}\natexlab{}.
\newblock \showarticletitle{Firmalice - Automatic Detection of Authentication Bypass Vulnerabilities in Binary Firmware}. In \bibinfo{booktitle}{\emph{22nd Annual Network and Distributed System Security Symposium, {NDSS} 2015, San Diego, California, USA, February 8-11, 2015}}. \bibinfo{publisher}{The Internet Society}.
\newblock
\urldef\tempurl%
\url{https://www.ndss-symposium.org/ndss2015/firmalice-automatic-detection-authentication-bypass-vulnerabilities-binary-firmware}
\showURL{%
\tempurl}


\bibitem[Stenberg(2024)]%
        {curl}
\bibfield{author}{\bibinfo{person}{Daniel Stenberg}.} \bibinfo{year}{2024}\natexlab{}.
\newblock \bibinfo{title}{{cURL}}.
\newblock \bibinfo{howpublished}{\url{https://curl.se/}}.
\newblock
\newblock
\shownote{[version 7.83.0; Online; accessed 08-May-2024]}.


\bibitem[Tang et~al\mbox{.}(2018)]%
        {wei2018bcfinder}
\bibfield{author}{\bibinfo{person}{W. Tang}, \bibinfo{person}{D. Chen}, {and} \bibinfo{person}{P. Luo}.} \bibinfo{year}{2018}\natexlab{}.
\newblock \showarticletitle{BCFinder: A Lightweight and Platform-Independent Tool to Find Third-Party Components in Binaries}. In \bibinfo{booktitle}{\emph{2018 25th Asia-Pacific Software Engineering Conference (APSEC)}}. \bibinfo{publisher}{IEEE Computer Society}, \bibinfo{address}{Los Alamitos, CA, USA}, \bibinfo{pages}{288--297}.
\newblock
\urldef\tempurl%
\url{https://doi.org/10.1109/APSEC.2018.00043}
\showDOI{\tempurl}


\bibitem[Team(2017)]%
        {Radare}
\bibfield{author}{\bibinfo{person}{Radare2 Team}.} \bibinfo{year}{2017}\natexlab{}.
\newblock \bibinfo{booktitle}{\emph{Radare2 Book}}.
\newblock \bibinfo{publisher}{GitHub}.
\newblock


\bibitem[Tian et~al\mbox{.}(2024)]%
        {tian2024collapseaware}
\bibfield{author}{\bibinfo{person}{Qiwei Tian}, \bibinfo{person}{Chenhao Lin}, \bibinfo{person}{Zhengyu Zhao}, \bibinfo{person}{Qian Li}, {and} \bibinfo{person}{Chao Shen}.} \bibinfo{year}{2024}\natexlab{}.
\newblock \showarticletitle{Collapse-Aware Triplet Decoupling for Adversarially Robust Image Retrieval}. In \bibinfo{booktitle}{\emph{Forty-first International Conference on Machine Learning}}.
\newblock
\urldef\tempurl%
\url{https://openreview.net/forum?id=cy3JBZKCw1}
\showURL{%
\tempurl}


\bibitem[Ullmann(1976)]%
        {Ullmann76}
\bibfield{author}{\bibinfo{person}{J.~R. Ullmann}.} \bibinfo{year}{1976}\natexlab{}.
\newblock \showarticletitle{An Algorithm for Subgraph Isomorphism}.
\newblock \bibinfo{journal}{\emph{J. ACM}} \bibinfo{volume}{23}, \bibinfo{number}{1} (\bibinfo{date}{jan} \bibinfo{year}{1976}), \bibinfo{pages}{31–42}.
\newblock
\showISSN{0004-5411}
\urldef\tempurl%
\url{https://doi.org/10.1145/321921.321925}
\showDOI{\tempurl}


\bibitem[van~der Maaten and Hinton(2008)]%
        {van2008tsne}
\bibfield{author}{\bibinfo{person}{Laurens van~der Maaten} {and} \bibinfo{person}{Geoffrey Hinton}.} \bibinfo{year}{2008}\natexlab{}.
\newblock \showarticletitle{Visualizing Data using t-SNE}.
\newblock \bibinfo{journal}{\emph{Journal of Machine Learning Research}} \bibinfo{volume}{9}, \bibinfo{number}{86} (\bibinfo{year}{2008}), \bibinfo{pages}{2579--2605}.
\newblock
\urldef\tempurl%
\url{http://jmlr.org/papers/v9/vandermaaten08a.html}
\showURL{%
\tempurl}


\bibitem[Vaswani et~al\mbox{.}(2017)]%
        {Vaswani2017attentionTransformers}
\bibfield{author}{\bibinfo{person}{Ashish Vaswani}, \bibinfo{person}{Noam Shazeer}, \bibinfo{person}{Niki Parmar}, \bibinfo{person}{Jakob Uszkoreit}, \bibinfo{person}{Llion Jones}, \bibinfo{person}{Aidan~N Gomez}, \bibinfo{person}{\L~ukasz Kaiser}, {and} \bibinfo{person}{Illia Polosukhin}.} \bibinfo{year}{2017}\natexlab{}.
\newblock \showarticletitle{Attention is All you Need}. In \bibinfo{booktitle}{\emph{Advances in Neural Information Processing Systems}}, \bibfield{editor}{\bibinfo{person}{I.~Guyon}, \bibinfo{person}{U.~Von Luxburg}, \bibinfo{person}{S.~Bengio}, \bibinfo{person}{H.~Wallach}, \bibinfo{person}{R.~Fergus}, \bibinfo{person}{S.~Vishwanathan}, {and} \bibinfo{person}{R.~Garnett}} (Eds.), Vol.~\bibinfo{volume}{30}. \bibinfo{publisher}{Curran Associates, Inc.}
\newblock
\urldef\tempurl%
\url{https://proceedings.neurips.cc/paper/2017/file/3f5ee243547dee91fbd053c1c4a845aa-Paper.pdf}
\showURL{%
\tempurl}


\bibitem[VenkataKeerthy et~al\mbox{.}(2020)]%
        {VenkataKeerthy-2020-IR2Vec}
\bibfield{author}{\bibinfo{person}{S. VenkataKeerthy}, \bibinfo{person}{R Aggarwal}, \bibinfo{person}{S Jain}, \bibinfo{person}{M~S Desarkar}, \bibinfo{person}{R Upadrasta}, {and} \bibinfo{person}{Y.~N. Srikant}.} \bibinfo{year}{2020}\natexlab{}.
\newblock \showarticletitle{{IR2Vec: LLVM IR Based Scalable Program Embeddings}}.
\newblock \bibinfo{journal}{\emph{ACM Trans. Archit. Code Optim.}} \bibinfo{volume}{17}, \bibinfo{number}{4}, Article \bibinfo{articleno}{32} (\bibinfo{date}{Dec.} \bibinfo{year}{2020}), \bibinfo{numpages}{27}~pages.
\newblock
\showISSN{1544-3566}
\urldef\tempurl%
\url{https://doi.org/10.1145/3418463}
\showDOI{\tempurl}


\bibitem[Wang and Shoshitaishvili(2017)]%
        {wang2017angr}
\bibfield{author}{\bibinfo{person}{F. Wang} {and} \bibinfo{person}{Y. Shoshitaishvili}.} \bibinfo{year}{2017}\natexlab{}.
\newblock \showarticletitle{Angr - The Next Generation of Binary Analysis}. In \bibinfo{booktitle}{\emph{2017 IEEE Cybersecurity Development (SecDev)}}. \bibinfo{publisher}{IEEE Computer Society}, \bibinfo{address}{Los Alamitos, CA, USA}, \bibinfo{pages}{8--9}.
\newblock
\urldef\tempurl%
\url{https://doi.org/10.1109/SecDev.2017.14}
\showDOI{\tempurl}


\bibitem[Wang et~al\mbox{.}(2023)]%
        {wang2023sem2vec}
\bibfield{author}{\bibinfo{person}{Huaijin Wang}, \bibinfo{person}{Pingchuan Ma}, \bibinfo{person}{Shuai Wang}, \bibinfo{person}{Qiyi Tang}, \bibinfo{person}{Sen Nie}, {and} \bibinfo{person}{Shi Wu}.} \bibinfo{year}{2023}\natexlab{}.
\newblock \showarticletitle{sem2vec: Semantics-aware Assembly Tracelet Embedding}.
\newblock \bibinfo{journal}{\emph{ACM Trans. Softw. Eng. Methodol.}} \bibinfo{volume}{32}, \bibinfo{number}{4}, Article \bibinfo{articleno}{90} (\bibinfo{date}{may} \bibinfo{year}{2023}), \bibinfo{numpages}{34}~pages.
\newblock
\showISSN{1049-331X}
\urldef\tempurl%
\url{https://doi.org/10.1145/3569933}
\showDOI{\tempurl}


\bibitem[Wang et~al\mbox{.}(2022)]%
        {Wang22jTrans}
\bibfield{author}{\bibinfo{person}{Hao Wang}, \bibinfo{person}{Wenjie Qu}, \bibinfo{person}{Gilad Katz}, \bibinfo{person}{Wenyu Zhu}, \bibinfo{person}{Zeyu Gao}, \bibinfo{person}{Han Qiu}, \bibinfo{person}{Jianwei Zhuge}, {and} \bibinfo{person}{Chao Zhang}.} \bibinfo{year}{2022}\natexlab{}.
\newblock \showarticletitle{JTrans: Jump-Aware Transformer for Binary Code Similarity Detection}. In \bibinfo{booktitle}{\emph{Proceedings of the 31st ACM SIGSOFT International Symposium on Software Testing and Analysis}} (Virtual, South Korea) \emph{(\bibinfo{series}{ISSTA 2022})}. \bibinfo{publisher}{Association for Computing Machinery}, \bibinfo{address}{New York, NY, USA}, \bibinfo{pages}{1–13}.
\newblock
\showISBNx{9781450393799}
\urldef\tempurl%
\url{https://doi.org/10.1145/3533767.3534367}
\showDOI{\tempurl}


\bibitem[{Wang} et~al\mbox{.}(2017)]%
        {knowledge-graph-embedding-survey}
\bibfield{author}{\bibinfo{person}{Q. {Wang}}, \bibinfo{person}{Z. {Mao}}, \bibinfo{person}{B. {Wang}}, {and} \bibinfo{person}{L. {Guo}}.} \bibinfo{year}{2017}\natexlab{}.
\newblock \showarticletitle{Knowledge Graph Embedding: A Survey of Approaches and Applications}.
\newblock \bibinfo{journal}{\emph{IEEE Transactions on Knowledge and Data Engineering}} \bibinfo{volume}{29}, \bibinfo{number}{12} (\bibinfo{date}{Dec} \bibinfo{year}{2017}), \bibinfo{pages}{2724--2743}.
\newblock
\showISSN{1041-4347}
\urldef\tempurl%
\url{https://doi.org/10.1109/TKDE.2017.2754499}
\showDOI{\tempurl}


\bibitem[Wang et~al\mbox{.}(2000)]%
        {Wang00}
\bibfield{author}{\bibinfo{person}{Zheng Wang}, \bibinfo{person}{Ken Pierce}, {and} \bibinfo{person}{Scott McFarling}.} \bibinfo{year}{2000}\natexlab{}.
\newblock \showarticletitle{Bmat-a binary matching tool for stale profile propagation}.
\newblock \bibinfo{journal}{\emph{The Journal of Instruction-Level Parallelism}}  \bibinfo{volume}{2} (\bibinfo{year}{2000}), \bibinfo{pages}{1--20}.
\newblock


\bibitem[Weiser(1984)]%
        {weiser1984programSlicing}
\bibfield{author}{\bibinfo{person}{Mark Weiser}.} \bibinfo{year}{1984}\natexlab{}.
\newblock \showarticletitle{Program Slicing}.
\newblock \bibinfo{journal}{\emph{IEEE Transactions on Software Engineering}} \bibinfo{volume}{SE-10}, \bibinfo{number}{4} (\bibinfo{year}{1984}), \bibinfo{pages}{352--357}.
\newblock
\urldef\tempurl%
\url{https://doi.org/10.1109/TSE.1984.5010248}
\showDOI{\tempurl}


\bibitem[Wu et~al\mbox{.}(2022)]%
        {GNNBook2022}
\bibfield{author}{\bibinfo{person}{Lingfei Wu}, \bibinfo{person}{Peng Cui}, \bibinfo{person}{Jian Pei}, {and} \bibinfo{person}{Liang Zhao}.} \bibinfo{year}{2022}\natexlab{}.
\newblock \bibinfo{booktitle}{\emph{Graph Neural Networks: Foundations, Frontiers, and Applications}}.
\newblock \bibinfo{publisher}{Springer Singapore}, \bibinfo{address}{Singapore}. 725 pages.
\newblock


\bibitem[Xu et~al\mbox{.}(2017)]%
        {xu2017gemini}
\bibfield{author}{\bibinfo{person}{Xiaojun Xu}, \bibinfo{person}{Chang Liu}, \bibinfo{person}{Qian Feng}, \bibinfo{person}{Heng Yin}, \bibinfo{person}{Le Song}, {and} \bibinfo{person}{Dawn Song}.} \bibinfo{year}{2017}\natexlab{}.
\newblock \showarticletitle{Neural Network-Based Graph Embedding for Cross-Platform Binary Code Similarity Detection}. In \bibinfo{booktitle}{\emph{Proceedings of the 2017 ACM SIGSAC Conference on Computer and Communications Security}} (Dallas, Texas, USA) \emph{(\bibinfo{series}{CCS '17})}. \bibinfo{publisher}{Association for Computing Machinery}, \bibinfo{address}{New York, NY, USA}, \bibinfo{pages}{363–376}.
\newblock
\showISBNx{9781450349468}
\urldef\tempurl%
\url{https://doi.org/10.1145/3133956.3134018}
\showDOI{\tempurl}


\bibitem[Xuan et~al\mbox{.}(2020b)]%
        {xuan2020hard}
\bibfield{author}{\bibinfo{person}{Hong Xuan}, \bibinfo{person}{Abby Stylianou}, \bibinfo{person}{Xiaotong Liu}, {and} \bibinfo{person}{Robert Pless}.} \bibinfo{year}{2020}\natexlab{b}.
\newblock \showarticletitle{Hard negative examples are hard, but useful}. In \bibinfo{booktitle}{\emph{European Conference on Computer Vision}}. Springer, \bibinfo{pages}{126--142}.
\newblock


\bibitem[Xuan et~al\mbox{.}(2020a)]%
        {xuan2020EasyHardMining}
\bibfield{author}{\bibinfo{person}{H. Xuan}, \bibinfo{person}{A. Stylianou}, {and} \bibinfo{person}{R. Pless}.} \bibinfo{year}{2020}\natexlab{a}.
\newblock \showarticletitle{Improved Embeddings with Easy Positive Triplet Mining}. In \bibinfo{booktitle}{\emph{2020 IEEE Winter Conference on Applications of Computer Vision (WACV)}}. \bibinfo{publisher}{IEEE Computer Society}, \bibinfo{address}{Los Alamitos, CA, USA}, \bibinfo{pages}{2463--2471}.
\newblock
\urldef\tempurl%
\url{https://doi.org/10.1109/WACV45572.2020.9093432}
\showDOI{\tempurl}


\bibitem[Xue et~al\mbox{.}(2018)]%
        {Xue18}
\bibfield{author}{\bibinfo{person}{Hongfa Xue}, \bibinfo{person}{Guru Venkataramani}, {and} \bibinfo{person}{Tian Lan}.} \bibinfo{year}{2018}\natexlab{}.
\newblock \showarticletitle{Clone-Hunter: Accelerated Bound Checks Elimination via Binary Code Clone Detection}. In \bibinfo{booktitle}{\emph{Proceedings of the 2nd ACM SIGPLAN International Workshop on Machine Learning and Programming Languages}} (Philadelphia, PA, USA) \emph{(\bibinfo{series}{MAPL 2018})}. \bibinfo{publisher}{Association for Computing Machinery}, \bibinfo{address}{New York, NY, USA}, \bibinfo{pages}{11–19}.
\newblock
\showISBNx{9781450358347}
\urldef\tempurl%
\url{https://doi.org/10.1145/3211346.3211347}
\showDOI{\tempurl}


\bibitem[Yang et~al\mbox{.}(2015)]%
        {yang2015DeepWalk}
\bibfield{author}{\bibinfo{person}{Cheng Yang}, \bibinfo{person}{Zhiyuan Liu}, \bibinfo{person}{Deli Zhao}, \bibinfo{person}{Maosong Sun}, {and} \bibinfo{person}{Edward~Y. Chang}.} \bibinfo{year}{2015}\natexlab{}.
\newblock \showarticletitle{Network representation learning with rich text information}. In \bibinfo{booktitle}{\emph{Proceedings of the 24th International Conference on Artificial Intelligence}} (Buenos Aires, Argentina) \emph{(\bibinfo{series}{IJCAI'15})}. \bibinfo{publisher}{AAAI Press}, \bibinfo{pages}{2111–2117}.
\newblock
\showISBNx{9781577357384}


\bibitem[Yang et~al\mbox{.}(2022)]%
        {jia2021codee}
\bibfield{author}{\bibinfo{person}{J. Yang}, \bibinfo{person}{C. Fu}, \bibinfo{person}{X. Liu}, \bibinfo{person}{H. Yin}, {and} \bibinfo{person}{P. Zhou}.} \bibinfo{year}{2022}\natexlab{}.
\newblock \showarticletitle{Codee: A Tensor Embedding Scheme for Binary Code Search}.
\newblock \bibinfo{journal}{\emph{IEEE Transactions on Software Engineering}} \bibinfo{volume}{48}, \bibinfo{number}{07} (\bibinfo{date}{jul} \bibinfo{year}{2022}), \bibinfo{pages}{2224--2244}.
\newblock
\showISSN{1939-3520}
\urldef\tempurl%
\url{https://doi.org/10.1109/TSE.2021.3056139}
\showDOI{\tempurl}


\bibitem[Yang et~al\mbox{.}(2021)]%
        {yang2021asteria}
\bibfield{author}{\bibinfo{person}{S. Yang}, \bibinfo{person}{L. Cheng}, \bibinfo{person}{Y. Zeng}, \bibinfo{person}{Z. Lang}, \bibinfo{person}{H. Zhu}, {and} \bibinfo{person}{Z. Shi}.} \bibinfo{year}{2021}\natexlab{}.
\newblock \showarticletitle{Asteria: Deep Learning-based AST-Encoding for Cross-platform Binary Code Similarity Detection}. In \bibinfo{booktitle}{\emph{2021 51st Annual IEEE/IFIP International Conference on Dependable Systems and Networks (DSN)}}. \bibinfo{publisher}{IEEE Computer Society}, \bibinfo{address}{Los Alamitos, CA, USA}, \bibinfo{pages}{224--236}.
\newblock
\urldef\tempurl%
\url{https://doi.org/10.1109/DSN48987.2021.00036}
\showDOI{\tempurl}


\bibitem[Yang et~al\mbox{.}(2023)]%
        {yang2023asteriapro}
\bibfield{author}{\bibinfo{person}{Shouguo Yang}, \bibinfo{person}{Chaopeng Dong}, \bibinfo{person}{Yang Xiao}, \bibinfo{person}{Yiran Cheng}, \bibinfo{person}{Zhiqiang Shi}, \bibinfo{person}{Zhi Li}, {and} \bibinfo{person}{Limin Sun}.} \bibinfo{year}{2023}\natexlab{}.
\newblock \bibinfo{title}{Asteria-Pro: Enhancing Deep-Learning Based Binary Code Similarity Detection by Incorporating Domain Knowledge}.
\newblock
\newblock
\showeprint[arxiv]{2301.00511}~[cs.SE]


\bibitem[Yu et~al\mbox{.}(2020)]%
        {Yu2020OrderMatters}
\bibfield{author}{\bibinfo{person}{Zeping Yu}, \bibinfo{person}{Rui Cao}, \bibinfo{person}{Qiyi Tang}, \bibinfo{person}{Sen Nie}, \bibinfo{person}{Junzhou Huang}, {and} \bibinfo{person}{Shi Wu}.} \bibinfo{year}{2020}\natexlab{}.
\newblock \showarticletitle{Order Matters: Semantic-Aware Neural Networks for Binary Code Similarity Detection}.
\newblock \bibinfo{journal}{\emph{Proceedings of the AAAI Conference on Artificial Intelligence}} \bibinfo{volume}{34}, \bibinfo{number}{01} (\bibinfo{date}{Apr.} \bibinfo{year}{2020}), \bibinfo{pages}{1145--1152}.
\newblock
\urldef\tempurl%
\url{https://doi.org/10.1609/aaai.v34i01.5466}
\showDOI{\tempurl}


\bibitem[Zhang et~al\mbox{.}(2020)]%
        {zhang2020similarity}
\bibfield{author}{\bibinfo{person}{Xiaochuan Zhang}, \bibinfo{person}{Wenjie Sun}, \bibinfo{person}{Jianmin Pang}, \bibinfo{person}{Fudong Liu}, {and} \bibinfo{person}{Zhen Ma}.} \bibinfo{year}{2020}\natexlab{}.
\newblock \showarticletitle{Similarity metric method for binary basic blocks of cross-instruction set architecture}. In \bibinfo{booktitle}{\emph{Proceedings 2020 Workshop on Binary Analysis Research, San Diego, CA}}.
\newblock


\bibitem[Zhu et~al\mbox{.}(2023)]%
        {zhu2023ktrans}
\bibfield{author}{\bibinfo{person}{Wenyu Zhu}, \bibinfo{person}{Hao Wang}, \bibinfo{person}{Yuchen Zhou}, \bibinfo{person}{Jiaming Wang}, \bibinfo{person}{Zihan Sha}, \bibinfo{person}{Zeyu Gao}, {and} \bibinfo{person}{Chao Zhang}.} \bibinfo{year}{2023}\natexlab{}.
\newblock \showarticletitle{kTrans: Knowledge-Aware Transformer for Binary Code Embedding}.
\newblock \bibinfo{journal}{\emph{arXiv preprint arXiv:2308.12659}} (\bibinfo{year}{2023}).
\newblock


\bibitem[Zuo et~al\mbox{.}(2019)]%
        {zuo2018innereye}
\bibfield{author}{\bibinfo{person}{Fei Zuo}, \bibinfo{person}{Xiaopeng Li}, \bibinfo{person}{Patrick Young}, \bibinfo{person}{Lannan Luo}, \bibinfo{person}{Qiang Zeng}, {and} \bibinfo{person}{Zhexin Zhang}.} \bibinfo{year}{2019}\natexlab{}.
\newblock \showarticletitle{Neural Machine Translation Inspired Binary Code Similarity Comparison beyond Function Pairs}. In \bibinfo{booktitle}{\emph{26th Annual Network and Distributed System Security Symposium, {NDSS} 2019, San Diego, California, USA, February 24-27, 2019}}. \bibinfo{publisher}{The Internet Society}.
\newblock
\urldef\tempurl%
\url{https://www.ndss-symposium.org/ndss-paper/neural-machine-translation-inspired-binary-code-similarity-comparison-beyond-function-pairs/}
\showURL{%
\tempurl}


\bibitem[Zynamics(2024)]%
        {bindiff}
\bibfield{author}{\bibinfo{person}{Zynamics}.} \bibinfo{year}{2024}\natexlab{}.
\newblock \bibinfo{title}{Bindiff7}.
\newblock
\newblock
\urldef\tempurl%
\url{https://www.zynamics.com/bindiff.html}
\showURL{%
\tempurl}
\newblock
\shownote{[Online; accessed 17-June-2024]}.


\end{thebibliography}

\appendix
\section{Appendix}
\label{sec:appendix}

This section contains additional tables and data that can be of interest to readers interested in reproducing the results in this paper but that are not essential to understanding its core ideas.

\subsection{Description of Dataset}
~\label{appendix:dataset-desc}
Details and statistics of our dataset for our binary similarity experiments are shown in Table ~\ref{tab:dataset}. 
As it can be seen, the number of functions differs across the architectures due to the impacts of optimizations and the differences in compilers.
Coreutils and PuTTY form a significant portion of the dataset. The binary sizes vary from about $9 KB$--$7.6MB$.
\begin{table}[ht]
\centering
\caption{Description of dataset}
\label{tab:dataset}
\resizebox{0.8\columnwidth}{!}{%
\begin{tabular}{llrrrrc}
\toprule
\multicolumn{1}{c}{\textbf{Projects}} & \multicolumn{1}{c}{\textbf{Arch}} & \multicolumn{1}{c}{\textbf{\#Functions}} & \textbf{\makecell{\#Functions\\ in Test set}} & \textbf{GT pairs} & \textbf{\makecell{\#Binaries}} & \multicolumn{1}{c}{\textbf{Binary sizes}}\\
\midrule
\multirow{2}{*}{Coreutils} & ARM & 751K & 128K & \multirow{2}{*}{98K} & \multirow{2}{*}{13,320} & \multirow{2}{*}{9.7 KB - 1.3 MB} \\
 & x86 & 843K & 113K &  &  &  \\
 \midrule
\multirow{2}{*}{Diffutils} & ARM & 44K & 7K & \multirow{2}{*}{5K} & \multirow{2}{*}{702} & \multirow{2}{*}{9.7 KB - 1.5 MB} \\
 & x86 & 46K & 6K &  &  &  \\
 \midrule
\multirow{2}{*}{Findutils} & ARM & 61K & 5K & \multirow{2}{*}{4K} & \multirow{2}{*}{690} & \multirow{2}{*}{9.8 KB - 2.2 MB} \\
 & x86 & 71K & 4K &  &  &  \\
 \midrule
\multirow{2}{*}{cURL} & ARM & 9K & 9K & \multirow{2}{*}{5K} & \multirow{2}{*}{60} & \multirow{2}{*}{762.8 KB - 1.1 MB} \\
 & x86 & 5K & 5K &  &  &  \\
 \midrule
\multirow{2}{*}{Lua} & ARM & 40K & 40K & \multirow{2}{*}{23K} & \multirow{2}{*}{120} & \multirow{2}{*}{554.3 KB - 2.3 MB} \\
 & x86 & 41K & 41K &  &  &  \\
 \midrule
\multirow{2}{*}{PuTTY} & ARM & 436K & 436K & \multirow{2}{*}{325K} & \multirow{2}{*}{780} & \multirow{2}{*}{9.2 KB - 7.6 MB} \\
 & x86 & 428K & 428K &  &  &  \\
 \midrule
\multirow{2}{*}{Gzip} & ARM & 4K & 4K & \multirow{2}{*}{3K} & \multirow{2}{*}{60} & \multirow{2}{*}{205 KB - 547.1 KB} \\
 & x86 & 4K & 4K &  &  &  \\
 \bottomrule
\end{tabular}%
}
\end{table}

\subsection{Peepholes}
\label{sec:graph-study}

In Table~\ref{tab:peepholes}, we show the average number of basic blocks (vertices), the average number of edges, and the average length of the longest straight-line path in the Control-Flow Graph.
These values are averaged across all functions of all binaries in our dataset (Section~\vref{sec:experimental-setup}).
Additionally, we also show the average number of peepholes obtained for different values of $k$ (with $c$ set to $2$).

\begin{table*}[ht]
\centering
\caption{Peepholes Trend with varying k}
\label{tab:peepholes}
\resizebox{\columnwidth}{!}{%
\begin{tabular}{l|r|r|r|rrrrrrrrrrr}
\toprule
\multicolumn{1}{c}{\textbf{Compiler}} &    \multicolumn{1}{c}{\textbf{Avg.}} &     \multicolumn{1}{c}{\textbf{Avg.}} & \multicolumn{1}{c}{\textbf{Avg. Longest}} & \multicolumn{11}{c}{\textbf{Average number of Peepholes}} \\
{} & \textbf{block\_count} & \textbf{edge\_count} & \textbf{Path in DAG}     &                          \textbf{k=1} &    \textbf{k=3} &    \textbf{k=5} &    \textbf{k=7} &   \textbf{k=10} &   \textbf{k=12} &   \textbf{k=36} &   \textbf{k=64} &   \textbf{k=72} &  \textbf{k=100} &  \textbf{k=144} \\
\midrule
x86-clang-8-O0 &              22.23 &              28.80 &                    7.57 &                       44.45 & 26.42 & 22.98 & 21.65 & 20.94 & 20.73 & 20.10 & 20.05 & 20.03 & 20.92 & 20.06 \\
x86-clang-8-O1 &              14.82 &              18.91 &                    6.42 &                       29.63 & 18.59 & 16.65 & 15.99 & 15.66 & 15.52 & 15.36 & 15.43 & 15.41 & 15.72 & 15.42 \\
x86-clang-8-O2 &              25.92 &              35.39 &                    9.50 &                       51.84 & 31.36 & 27.34 & 25.96 & 24.96 & 24.75 & 24.25 & 24.20 & 24.25 & 25.10 & 24.29 \\
x86-clang-8-O3 &              27.54 &              38.17 &                    9.47 &                       55.08 & 33.38 & 29.24 & 27.51 & 26.73 & 26.44 & 25.94 & 25.93 & 25.98 & 26.82 & 26.00 \\
x86-clang-8-Os &              20.16 &              26.68 &                    7.64 &                       40.32 & 24.64 & 21.72 & 20.75 & 20.18 & 20.04 & 19.65 & 19.73 & 19.76 & 20.26 & 19.84 \\
x86-gcc-8-O0   &              16.78 &              21.31 &                    6.82 &                       33.56 & 20.82 & 18.37 & 17.53 & 17.04 & 16.92 & 16.64 & 16.66 & 16.65 & 17.12 & 16.71 \\
x86-gcc-8-O1   &              21.51 &              28.34 &                    7.74 &                       43.03 & 26.55 & 23.51 & 22.42 & 21.76 & 21.48 & 21.27 & 21.22 & 21.15 & 21.83 & 21.30 \\
x86-gcc-8-O2   &              20.63 &              27.19 &                    7.75 &                       41.25 & 25.36 & 22.20 & 21.10 & 20.51 & 20.24 & 20.12 & 20.02 & 20.03 & 20.60 & 19.99 \\
x86-gcc-8-O3   &              27.57 &              37.33 &                   10.24 &                       55.13 & 33.51 & 29.13 & 27.50 & 26.40 & 26.22 & 25.71 & 25.67 & 25.72 & 26.62 & 25.59 \\
x86-gcc-8-Os   &              19.21 &              25.16 &                    7.10 &                       38.42 & 23.45 & 20.51 & 19.58 & 18.83 & 18.68 & 18.44 & 18.56 & 18.50 & 19.01 & 18.48 \\
\bottomrule
\end{tabular}%
}
\end{table*}

There is a decreasing trend in the number of peepholes with the increase in values of $k$. However, $k$ being the maximum number of basic blocks in a peephole, there is little to no difference in the number of peepholes generated when $k$ matches the number of basic blocks. (Notice the entries corresponding to $k=36$ to $k=144$.)

In practice, the number of peepholes generated is typically less than the worst-case defined by Theorem~\vref{theo:peephole}, and tends to a value close to $c|V|/2$ (especially when $k >1$).  Beyond a threshold, $k$ does not appear to impact the number of peepholes generated.

\subsection{Cross-Optimization Binary Diffing}
This section provides additional results on the cross-optimization binary diffing experiment described in Section~\vref{sec:cross-opt-diffing}.

\begin{table*}[ht]
\centering
\caption{Cross-Optimization Binary Diffing - O1 Vs. O3}
\label{tab:crossopt-O1-O3}
\resizebox{0.9\textwidth}{!}{
\begin{tabular}{l|cccccc|cccccc}
\toprule
                                               & \multicolumn{6}{|c}{\textbf{Precision}} & \multicolumn{6}{|c}{\textbf{Recall}} \\
                                               &   \textbf{BinDiff} & \textbf{DBD} & \textbf{SAFE} & \textbf{OPC} & \textbf{BinFinder} & \textsc{\textbf{VexIR2Vec}} & \textbf{BinDiff} &   \textbf{DBD} & \textbf{SAFE} & \textbf{OPC} & \textbf{BinFinder} & \textsc{\textbf{VexIR2Vec}} \\
\midrule
& \multicolumn{12}{c}{\textbf{\underline{ARM - Clang12}}} \\
\textbf{Coreutils} &               0.32 &         0.46 &          0.28 &              0.47 &               0.42 &      \textbf{0.57} &             0.20 &           0.58 &          0.22 &              0.62 &               0.56 &      \textbf{0.76} \\
                              \textbf{cURL} &               0.44 &              &          0.51 &              0.71 &               0.68 &      \textbf{0.82} &             0.42 &                &          0.34 &              0.82 &               0.78 &      \textbf{0.94} \\
                              \textbf{Diffutils} &               0.51 &         0.45 &          0.40 &              0.62 &               0.55 &      \textbf{0.71} &             0.37 &           0.60 &          0.29 &              0.80 &               0.71 &      \textbf{0.93} \\
                              \textbf{Findutils} &               0.51 &         0.52 &          0.38 &              0.54 &               0.51 &      \textbf{0.68} &             0.30 &           0.68 &          0.31 &              0.67 &               0.63 &      \textbf{0.83} \\
                              \textbf{Gzip} &               0.44 &         0.34 &          0.43 &              0.53 &               0.50 &      \textbf{0.65} &             0.40 &           0.77 &          0.30 &              0.68 &               0.64 &      \textbf{0.83} \\
                              \textbf{Lua} &               0.34 &              &          0.31 &              0.24 &               0.27 &      \textbf{0.47} &             0.33 &                &          0.20 &              0.38 &               0.43 &      \textbf{0.74} \\
                              \textbf{PuTTY} &               0.29 &              &          0.24 &              0.28 &               0.29 &      \textbf{0.41} &             0.21 &                &          0.24 &              0.38 &               0.39 &      \textbf{0.54} \\
\cline{1-13}
& \multicolumn{12}{c}{\textbf{\underline{ARM - GCC8}}} \\
              \textbf{Coreutils} &               0.32 &         0.40 &          0.34 &              0.60 &               0.56 &      \textbf{0.67} &             0.22 &           0.62 &          0.27 &              0.69 &               0.65 &      \textbf{0.77} \\
                                \textbf{cURL} &               0.39 &              &          0.46 &              0.84 &               0.71 &      \textbf{0.91} &             0.31 &                &          0.35 &              0.89 &               0.77 &      \textbf{0.97} \\
                                \textbf{Diffutils} &               0.44 &         0.48 &          0.55 &              0.74 &               0.73 &      \textbf{0.83} &             0.35 &           0.81 &          0.39 &              0.84 &               0.83 &      \textbf{0.94} \\
                                \textbf{Findutils} &               0.41 &         0.43 &          0.43 &              0.63 &               0.61 &      \textbf{0.76} &             0.28 &           0.68 &          0.36 &              0.69 &               0.67 &      \textbf{0.83} \\
                                \textbf{Gzip} &               0.36 &         0.44 &          0.51 &              0.56 &               0.73 &       \textbf{0.8} &             0.31 &           0.77 &          0.36 &              0.64 &               0.84 &      \textbf{0.91} \\
                                \textbf{Lua} &               0.31 &              &          0.29 &              0.35 &               0.41 &      \textbf{0.59} &             0.30 &                &          0.22 &              0.44 &               0.51 &      \textbf{0.74} \\
                                \textbf{PuTTY} &               0.28 &              &          0.23 &              0.31 &          0.45          &      \textbf{0.54} &             0.20 &                &          0.26 &              0.35 &        0.43            &       \textbf{0.6} \\
\midrule
& \multicolumn{12}{c}{\textbf{\underline{x86 - Clang12}}} \\
 \textbf{Coreutils} &               0.40 &         0.21 &          0.32 &              0.46 &               0.39 &      \textbf{0.59} &             0.36 &           0.67 &          0.25 &              0.61 &               0.52 &      \textbf{0.79} \\
                                \textbf{cURL} &               0.60 &              &          0.57 &              0.73 &               0.69 &      \textbf{0.83} &             0.60 &                &          0.41 &              0.84 &               0.79 &      \textbf{0.95} \\
                                \textbf{Diffutils} &               0.62 &         0.25 &          0.46 &              0.63 &               0.51 &       \textbf{0.7} &             0.62 &           0.83 &          0.33 &              0.82 &               0.67 &      \textbf{0.91} \\
                                \textbf{Findutils} &               0.52 &         0.24 &          0.43 &              0.53 &               0.51 &      \textbf{0.69} &             0.50 &           0.74 &          0.35 &              0.65 &               0.64 &      \textbf{0.85} \\
                                \textbf{Gzip} &               0.40 &         0.17 &          0.52 &              0.47 &               0.48 &      \textbf{0.68} &             0.38 &  \textbf{0.93} &          0.37 &              0.60 &               0.62 &               0.88 \\
                                \textbf{Lua} &               0.40 &              &          0.34 &              0.24 &               0.23 &      \textbf{0.48} &             0.39 &                &          0.24 &              0.38 &               0.36 &      \textbf{0.74} \\
                                \textbf{PuTTY} &               0.34 &              &          0.27 &              0.23 &               0.27 &      \textbf{0.41} &             0.33 &                &          0.27 &              0.30 &               0.35 &      \textbf{0.53} \\

\cline{1-13}
& \multicolumn{12}{c}{\textbf{\underline{x86 - GCC8}}} \\
               \textbf{Coreutils} &               0.46 &         0.30 &          0.33 &              0.58 &               0.52 &      \textbf{0.67} &             0.43 &           0.62 &          0.28 &              0.66 &               0.59 &      \textbf{0.76} \\
                                \textbf{cURL} &               0.48 &              &          0.56 &              0.76 &               0.75 &      \textbf{0.89} &             0.62 &                &          0.41 &              0.82 &               0.81 &      \textbf{0.96} \\
                                \textbf{Diffutils} &               0.58 &         0.36 &          0.51 &              0.76 &               0.69 &      \textbf{0.82} &             0.61 &           0.85 &          0.37 &              0.87 &               0.79 &      \textbf{0.93} \\
                                \textbf{Findutils} &               0.49 &         0.28 &          0.43 &              0.58 &               0.61 &      \textbf{0.71} &             0.47 &           0.70 &          0.36 &              0.66 &               0.69 &      \textbf{0.81} \\
                                \textbf{Gzip} &               0.39 &         0.22 &          0.53 &              0.46 &               0.61 &      \textbf{0.78} &             0.39 &           0.87 &          0.39 &              0.53 &               0.71 &       \textbf{0.9} \\
                                \textbf{Lua} &               0.36 &              &          0.33 &              0.37 &               0.38 &      \textbf{0.55} &             0.36 &                &          0.26 &              0.46 &               0.47 &      \textbf{0.69} \\
                                \textbf{PuTTY} &               0.33 &              &          0.25 &              0.30 &               0.33 &       \textbf{0.5} &             0.32 &                &          0.29 &              0.33 &               0.36 &      \textbf{0.54} \\

\bottomrule
\end{tabular}
}
\end{table*}

In Table~\ref{tab:crossopt-O1-O3}, we show the detailed result of Cross-Optimization binary diffing involving a comparison between O1 and O3 optimization levels. 
As it can be observed, \vexirtovec achieves the highest precision and recall scores across all configurations.

\subsection{Vocabulary - Analogies}
\label{appendix:clusters-analogies}

In Table~\ref{tab:analogies}, we show the complete list of analogies that we use for evaluating the vocabulary $\mathcal{V}_{lookup}$ in Section~\vref{sec:vocab-eval}.

We are able to correctly answer both correct syntactic and semantic analogies.  Intrinsic syntactic analogies are like add : addf :: sub : subf, GetI: PutI :: Get : Put, while semantic analogies like shl : mul :: shr : divmod and or : and :: shr : shl. 

\begin{table}[ht]
\centering
\caption{List of Analogies}
\label{tab:analogies}
\resizebox{\textwidth}{!}{
\begin{tabular}{llll}
\toprule
\multicolumn{4}{c}{\textbf{List of Analogies}} \\
\midrule
getI : putI :: get : put & divfv : divf :: vector : float & get : load :: put : store & subv : subfv :: vector : float \\
get : put :: load : store & addv : vector :: addfv : float & get : put :: load : wrtmp & mulv : vector :: mulfv : float \\
store : variable :: put : register & add : integer :: addfv : vector & get : register :: store : constant & add : integer :: addfv : float \\
put : register :: load : variable & sub : subfv :: integer : vector & put : register :: store : constant & sub : subfv :: integer : float \\
orv : or :: xorv : xor & mul : integer :: mulfv : float & or : orv :: shr : shrnv & mul : integer :: mulfv : vector \\
or : and :: orv : andv & divmod : integer :: divfv : vector & or : and :: shr : shl & divmod : integer :: divfv : float \\
and : or :: add : sub & orv : vector :: or : integer & mul : divmod :: and : or & not : integer :: notv : vector \\
shr : shl :: shrnv : shlnv & andv : vector :: and : integer & reinterpif : reinterpfi :: convif : convfi & shrnv : vector :: shr : integer \\
get : register :: geti : constant & shl : integer :: shrnv : vector & put : register :: puti : register & xor : xorv :: integer : vector \\
sub : subv :: add : addv & or : andv :: integer : vector & subf : addf :: subfv : addfv & not : integer :: negf : float \\
add : addf :: sub : subf & cmple : cmplt :: cmplefv : cmpltfv & addf : addfv :: subf : subfv & geti : integer :: get : register \\
add : sub :: addf : subf & ext : integer :: extf : float & add : sub :: addfv : subfv & extf : float :: extv : vector \\
add : addf :: mul : mulf & ext : integer :: extv : vector & add : addf :: mul : mull & hlextv : ext :: htruncv : htrunc \\
add : sub :: mul : div & trunc : integer :: truncv : vector & add : sub :: mul :: divmod & truncv : vector :: truncf : float \\
addf : subf :: mulf : divf & htrunc : trunc :: hlext : ext & addv : subv :: addf : subf & dirty : function :: if : store \\
addv : subv :: addfv : subfv & if : variable :: dirty : function & mulf : divf :: mulfv : divfv & if : exit :: put : register \\
shl : mul :: shr : divmod & store : variable :: put : register & shl : mul :: sar : divmod & load : variable :: get : register \\
shl : mul :: shr : div & get : register :: store : variable & shl : mul :: sar : div & put : register :: load : constant \\
add : sub :: mul : shr & maxv : maxfv :: minv : minfv & add : sub :: mul : sar & maxv : minv :: addv : subv \\
add : integer :: addf : float & maxfv : minfv :: addfv : subfv & subf : float :: sub : integer & if : else :: get : put \\
mul : integer :: mulf : float & or : and :: ext : trunc & divf : float :: divmod : integer & ext : trunc :: get : put \\
add : addv :: integer : vector & sqrtf : float :: sqrtfv : vector & integer : sub :: vector : subv & mulv : vector :: mul : integer \\
\bottomrule
\end{tabular}
}
\end{table}

\end{document}